 \newtheorem{thm}{Theorem}[section]
  \newtheorem{setup}[thm] {Setup}
 \newtheorem{lem}[thm]{Lemma}
 \newtheorem{prop}[thm]{Proposition}
  \theoremstyle{definition}
 \newtheorem{defn}[thm]{Definition}
  \newtheorem{defn-thm}[thm]{Definition-Theorem}
   \newtheorem{ex}[thm]{Example}
 \theoremstyle{remark}
 \newtheorem{rem}[thm]{Remark}
\numberwithin{equation}{section}
\numberwithin{thm}{section}
\numberwithin{table}{section}
\numberwithin{figure}{section}
\newcommand{\ZZ}{\mathbb{Z}}
\newcommand{\CC}{\mathbb{C}}
\newcommand{\II}{\mathcal{I}}
\newcommand{\Cdiff}{\CC_{\text{diff}}}
\newcommand{\g}{\mathfrak{g}}
\newcommand{\h}{\mathfrak{h}}
\newcommand{\m}{\mathfrak{m}}
\newcommand{\n}{\mathfrak{n}}
\newcommand{\p}{\mathfrak{p}}
\newcommand{\pa}{{\text{p}}}
\newcommand{\PP}{\mathcal{P}}
\newcommand{\ad}{\text{ad}}
\newcommand{\sll}{\text{sl}}
\newcommand{\gll}{\text{gl}}
\newcommand{\WW}{\mathcal{W}}
\begin{document}

\title{ Classical affine W-superalgebras via generalized Drinfeld-Sokolov reductions and related integrable systems  }
\author{ Uhi Rinn Suh$^{1}$}
\address{Department of Mathematical Science, KAIST, 291 Daehak-ro, Yuseong-gu, Daejeon 34141, South Korea}
\email{uhrisu@kaist.ac.kr}
\thanks{$^{1}$This work was supported by NRF Grant \# 2016R1C1B1010721.}
\thanks{This will appear in Communications in Mathematical Physics, accepted on Sep/06/2017  }
\maketitle

\begin{abstract}
The purpose of this article is to investigate relations between W-superalgebras and integrable super-Hamiltonian systems. To this end, we introduce the generalized Drinfel'd-Sokolov (D-S) reduction associated to a Lie superalgebra $\g$ and its even nilpotent element $f$, and we find a new definition of the classical affine W-superalgebra $\WW(\g,f,k)$ via the D-S reduction. This new construction allows us to find free generators of $\WW(\g,f,k)$, as a differential superalgebra, and two independent Lie brackets on $\WW(\g,f,k)/\partial \WW(\g,f,k).$ Moreover, we describe super-Hamiltonian systems with the Poisson vertex algebras theory. A W-superalgebra with certain properties can be understood as an underlying differential superalgebra of a series of integrable super-Hamiltonian systems.

\end{abstract}

\setcounter{tocdepth}{-1}

\pagestyle{plain}

\section{Introduction}\label{Sec:Introduction}

Classical affine W-algebras have been studied in the theory of integrable systems since 1980's, when Drinfeld-Sokolov (\cite{DS}) discovered relations between a finite dimensional simple Lie algebra $\g$ and a sequence of integrable systems. 

The main idea of Drinfel'd and Sokolov in \cite{DS} is considering Lax operators associated to a Lie algebra $\g$.  Precisely, such Lax operators have the form of  
\begin{equation}\label{Lax_DS}
L=\frac{\partial}{\partial x}+q(x)+\Lambda
\end{equation}
 where (i) $q$ is a differentiable function whose value is in a borel subalgebra $\n_+\oplus \h \subset \g$ (ii) $\Lambda=f+zs\in \g[z]$ for the principal nilpotent element $f\in \n_-$ and $s\in \ker(\ad \n_+)$. Here, $\n_+\oplus \h \oplus \n_-$ is a triangular decomposition of $\g$.  On the phase space  $\mathcal{F}_\g$ consisting of functions $q$ in Lax operators, Drinfel'd-Sokolov defined gauge transformations. As a consequence, the W-algebra $\WW(\g)$ associated to $\g$ was introduced as a set of gauge invariant functions. 
   
  Furthermore, a Lax operator (\ref{Lax_DS}) gives rise to a bi-Poisson structure on $\WW(\g)$, which does an important role to find related  integrable systems (\cite{DS}). A bi-Poisson structure $(\{\, ,\, \}_{K}, \{\, ,\, \}_{H})$ consists of a couple of linearly independent {\it local } Poisson brackets which involve delta distributions, that is,    
\[ \{ u(x), v(y)\}_X = \sum_{n\in \ZZ_+\cup \{0\}} W_n(y) \  \partial^n_y \delta(x-y)\ \text{ for }\ u,v,W_n\in \WW(\g), \ X=K,H \]
  satisfy {\it skew symmetries}, {\it Jacobi identities} and {\it Leibniz rules}.  After all, a systematic algorithm of getting a sequence of Hamiltonian integrable systems on $\WW(\g)$ was discovered. In this algorithm, they used the Lenard-Magri scheme and the bi-Poisson structure, i.e. there are $k_i\in \WW(\g)$  for $i\in \ZZ_{\geq 0}$ such that 
 \[ \frac{d\phi(x,t)}{dt}= \int \{ k_i(x),\phi(y)\}_H dy =  \int \{ k_{i+1}(x),\phi(y)\}_K dy, \text{ for }  \phi\in \WW(\g),\] are all distinct integrable systems.

\vskip 2mm
In an algebraic point of view, classical affine W-algebras are Poisson vertex algebras. A Poisson vertex algebra (PVA) is a differential algebra endowed with a Poisson $\lambda$-bracket structure, denoted by $\{\, _\lambda\, \}$. Here, a $\lambda$-bracket can be understood as an algebraic interpretation of a local Poisson bracket. On the other hand, Poisson vertex algebras are closely related to vertex algebras since the quasi-classical limit of a certain family of vertex algebras is a Poisson vertex algebra. As one can expect, there is a vertex algebra called a quantum affine W-algebra which is a quantization of a classical affine W-algebra.

\vskip 2mm

 The quantum affine W-(super)algebra associated to $\g$ and $f\in \g$ is introduced via the quantum affine BRST complexes (or the quantum Drinfeld-Sokolov reduction), provided that $\g$ is a finite dimensional simple {\it Lie superalgebra}  with a non-degenerate even supersymmetric bilinear form and $f$ is an even nilpotent element with an $\sll_2$-triple $(e,h,f)$ (\cite{DK,FF,KRW,KW}). In \cite{S, S2}, the author proved that the quasi-classical limit of the quantum affine W-(super)algebra associated to $\g$ and $f$ is
\begin{equation} \label{Eqn:CA_W_intro}
 \WW(\g,f,k)= \big(\PP / \mathcal{I})^{\ad_\lambda \n},
 \end{equation}
where $\PP=S(\CC[\partial]\otimes \g)$ is the affine PVA, and the Lie subalgebra $\n$ of $\g$ and the differential algebra ideal $\mathcal{I}$ of $\PP$ are determined by $f$. Here the $\ad_\lambda \n$-action is induced from the $\lambda$-bracket on the affine PVA $\PP$ and $k\in \CC$ is the constant involved in the $\lambda$-bracket of $\PP$. These results imply that classical affine W-(super)algebras have properties analogous to those of finite W-(super)algebras (\cite{GG}). Note that the W-algebra $\WW(\g)$ introduced by Drinfel'd and Sokolov in \cite{DS} is just $\WW(\g,f,k)$, when $f=f_{\text{princ}}$ is a principal nilpotent element and $k=1$ (see also \cite{DKV}).\

\vskip 2mm

Since $\WW(\g)$ is a special case of W-algebras associated to $\g$, there have been many attempts to understand algebraic structures of $\WW(\g,f,k)$ and to find integrable systems associated to $\WW(\g,f,k)$ for any nilpotent element $f$ (see \cite{BDHM, DHM} for instance).  Regarding these topics, there are  plenty of considerable articles, provided that $\g$ is a {\it Lie algebra}. In \cite{DKV1, V}, using the definition (\ref{Eqn:CA_W_intro}), De Sole, Kac and Valeri succeeded in explaining generators of $\WW(\g,f,k)$ and the $\lambda$-bracket relations between them. Moreover, integrable systems on $\WW(\g,f,k)$ are discovered in \cite{BDK, DKV5, DKV3, DKV2}. However, in the case when $\g$ is a Lie superalgebra, there still remain many open problems.

\vskip 2mm 

For algebraic structures of W-algebras associated to Lie superalgebras (W-superalgebras), we refer \cite{Pol, PS,ZS,Zhao}, where structures of finite W-superalgebras are considered, and \cite{KW,S3}, where the algebraic structures of affine W-superalgebas associated to minimal nilpotent elements are given. On the other hand,  integrable systems has not been yet explored in precise connections with W-superalgebras (\ref{Eqn:CA_W_intro}) via PVA structures, to the best of the author's knowledge. There are some articles which investigated integrable systems on noncommutative algebras (see \cite{DKV4, IK, IK1, KZ} and the references therein). In particular, in \cite{IK, IK1, KZ}, the authors described relations between integrable systems and Lie superalgebras, for instance $\text{spo}(2|1)$ and $\sll(n|n)$. However, it is not clear if these integrable systems can be explained by PVA structures of W-superalgebras (\ref{Eqn:CA_W_intro}).

\vskip 2mm

In this context, a natural question is whether a W-superalgebra can be related to a sequence of integrable systems. In this paper, as the  first step toward answering this question, we construct W-superalgebras using Lax operators, mainly inspired by the important papers \cite{BDHM, DHM, DKV,DS}. The key idea is to consider Lax operators in algebraic languages:
\[ L=k \partial+ q-\Lambda \otimes 1 \ \in \ \CC\partial \ltimes (\g[z] \otimes  \PP/\mathcal{I})_{\bar{0}}\]
 {\it with even parities}, where $\PP/\mathcal{I}$ is the differential algebra in (\ref{Eqn:CA_W_intro}) (see Definition \ref{Def:Lax_super}).  In other words, we assume  that  the entire phase space 
 \[ \mathcal{F}_{\g,f}=\{ q\, |\, L=k \partial+q-\Lambda\otimes 1\,\text{ is even Lax operator} \}\, \subset\,  (\g[z] \otimes  \PP/\mathcal{I})_{\bar{0}}\]
is even. By considering {\it even gauge transformations}, we prove that the construction of W-superalgebras via Lax operators is equivalent to (\ref{Eqn:CA_W_intro}) (see Theorem \ref{Thm:3.11_0730}). This new construction is particularly useful to find free generators of W-superalgebras as differential algebras (see Proposition \ref{Prop:3.14_0730}). 

\vskip 2mm

Recall that if $\g$ is a Lie algebra and $f\in \g$ is a nilpotent element, then the W-algebra  $\WW(\g,f,k)$ is endowed with a pair of Poisson $\lambda$-brackets, and these brackets play crucial roles to describe integrable Hamiltonian systems associated to $\WW(\g,f,k)$. Therefore, in order to find integrable systems associated to W-superalgebras in an analogous approach, we need to understand the followings:
 \begin{itemize}
\item the definition of {\it super-Hamiltonian evolution equations} using Poisson $\lambda$-brackets of Poisson vertex algebras;
\item how to describe two linearly independent Lie (super)brackets on $\WW(\g,f,k)/\partial \WW(\g,f,k)$ with Lax operators and variational derivatives;
\item Lenard-Magri scheme associated to super-Hamiltonian integrable systems.
\end{itemize}
%Since, in \cite{S2}, it is proved a classical affine W-algebra associated to a Lie superalgebra also can be defined by (\ref{Eqn:CA_W_intro}), 
To describe two linearly independent Lie (super)brackets on $\WW(\g,f,k)/\partial \WW(\g,f,k)$, we consider the special operator called the {\it sign twisted\footnote{The main reason we consider {\it sign twisted} operators is explained in Remark \ref{Rem:4.16_0721}.} universal Lax operator,}
\[ L_{\text{univ}}^\sigma= k\partial +q_{\text{univ}}^\sigma-\Lambda\otimes 1 \in \CC\partial \ltimes \g[z]\otimes  \PP/\II.\]
(see Proposition \ref{Prop:signLax} for the precise definition). Employing this operator, we describe the Lie brackets on $\WW(\g,f,k)/\partial \WW(\g,f,k)$ which are needed to use the Lenard-Magri scheme. Indeed, under the assumption that $\Lambda$ is semisimple in $\g((z^{-1}))$, we show that there are super-Hamiltonian integrable systems associated to classical affine W-superalgebras (see Theorem \ref{Thm:5.10_0730}).\footnote{this is a quite strong assumption (see Remark \ref{Rem:Lambda_assump}). An important open question in this field is to find integrable systems associated to arbitrary W-superalgebras.}  As a simplest example, we show one of super-Hamiltonian integrable systems associated to $\WW(\text{spo}(2|1), f,k)$ is equivalent to the super KdV equation, which appears in \cite{KZ}, up to constant factors.

\section*{Acknowledgement}

The author would like to thank the Institut des Hautes \'Etudes Scientifiques (IHES) for their hospitality, where a part of this work was done. The author also wishes to thank Prof. Victor Kac for helpful discussion, and thank anonymous referees who pointed out mistakes and suggested various ways to improve this paper.  

\section{Classical affine W-algebras}

In this section, we review some known facts about classical affine W-algebras. For the notions about Poisson vertex algebras, we refer to \cite{BK,K}. Properties about W-algebras and Lax operators denoted in this section can be found in \cite{DKV,DS}.

\subsection{Poisson vertex algebras}\ 

A {\it vector superspace} is a vector space $V$ with the $\ZZ/2\ZZ$-graded decomposition $V=V_{\bar{0}}\oplus V_{\bar{1}}.$ For $i=0,1$, we denote the {\it parity} by $\text{p}(a)=i$ for a homogeneous element $a\in V_{\bar{i}}.$ The {\it superalgebra} $\text{End}(V)=\text{End}(V)_{\bar{0}} \oplus\text{End}(V)_{\bar{1}}$ is a vector superspace such that \[ \text{ if } F\in \text{End}(V)_{\bar{i}} \text{ then } F(V_{\bar{j}})\subset V_{\bar{i}+\bar{j}}.\]
The {\it supersymmetric algebra} $A$ is a superalgebra with the {\it supersymmetry}
\[ ab=(-1)^{\pa(a)\pa(b)} ba,\]
for homogeneous elements $a$ and $b$.
\begin{defn}
A {\it Lie conformal algebra} (LCA) $R$ is a $\CC[\partial]$-module with a $\CC$-linear $\lambda$-bracket  
\[ \ [\, _\lambda\, ]\, : \, R\otimes R \to R[\lambda] \, \]
satisfying the following properties:
\begin{itemize}
\item (sesquilinearity) $[a_\lambda \partial b]=(\lambda+\partial)[a_\lambda b]$, $[\partial a_\lambda b]= -\lambda[a_\lambda b],$ 
\item (skewsymmetry) $[a_\lambda b]= -(-1)^{\pa(a)\pa(b)}[b_{-\lambda-\partial} a]$,
\item (Jacobi identity) $[a_\lambda[b_\mu c]]=[[a_\lambda b]_{\lambda+\mu} c]+(-1)^{\pa(a)\pa(b)}[b_\mu [a_\lambda c]].$
\end{itemize}
Here,  we assume $\partial$ is an even operator on $R$.
\end{defn}

\begin{rem} Let $R$ be a LCA.
\begin{enumerate}
\item The sesquilinearity implies $\partial$ is a derivation  for the $\lambda$-bracket on $R$, i.e.,  $ \partial [a_\lambda b]= [\partial a_\lambda b]+[a_\lambda \partial b].$
\item  For any $a,b\in R$, we denote by $[a_\lambda b] = \sum_{n\geq 0}\frac{a_{(n)}b}{n!} \lambda^n$ for $a_{(n)}b \in R$.  Here,  $a_{(n)}b\in R$ is called the $n$-th product of $a$ and $b$.

\end{enumerate}
\end{rem}

\begin{defn}\ 
\begin{enumerate}
\item A {\it differential (super)algebra} $D$ is an associative (super)algebra with an even operator $\partial: D \to D$ called {\it derivation} such that 
\begin{equation} \label{Eqn:leib}
 \partial (AB)= \partial(A) B + A \partial(B) \quad \text{for} \ A, B\in D.
 \end{equation}
In other words, the derivation $\partial$ defined on a generating set of $D$ can be extended using the Leibniz rule (\ref{Eqn:leib}).
\item The {\it (super)algebra of differential  polynomials} \[\Cdiff [\, w_i\, |\, i\in I_{\bar{0}}\cup I_{\bar{1}}\, ]\] generated by even elements in $\{w_i\}_{i\in I_{\bar{0}}}$ and odd elements $\{w_i\}_{\i\in I_{\bar{1}}}$ is the algebra isomorphic to the tensor products of a symmetric algebra and a exterior algebra
\[S( V_0) \otimes \bigwedge(V_1 ) \]
where $V_0:= \text{Span}_{\CC}(\partial^n\,  w_i\, | \, i\in I_{\bar{0}}, \, n\in \ZZ_{\geq 0})$ and $V_1:= \text{Span}_{\CC}(\partial^n\,  w_i\, | \, i\in I_{\bar{1}}, \, n\in \ZZ_{\geq 0}).$

% so that $\{\partial^n \, w_i \, | \, i\in I, \, n\in \ZZ_{\geq 0}\}$ is an  algebraically independent set.
\end{enumerate}
\end{defn}

\begin{rem}
For the simplicity of notations, we denote by $S(V)$ the supersymmetric algebra generated by the vector superspace $V=V_{\bar{0}}\oplus V_{\bar{1}}.$ In other words,
\[S(V):= S(V_{\bar{0}}) \otimes \bigwedge(V_{\bar{1}}).\]
\end{rem}

Using the notion of a LCA and a differential algebra, we can introduce Poisson vertex algebras.

\begin{defn}
A quintuple $(\mathcal{P}, 1, \{\, _\lambda \, \}, \partial, \cdot)$ is a {\it Poisson vertex algebra }(PVA) if  \\
(1) $(\mathcal{P},  \{\, _\lambda \, \}, \partial)$ is a Lie conformal algebra, \\
(2) $(\mathcal{P}, 1, \cdot, \partial)$ is a supersymmetric differential algebra with the derivation $\partial$ and the unity $1$, \\
(3) the Leibniz rule holds: \[ \ \{A_\lambda BC\}=(-1)^{\pa(A)\pa(B)} B\{A_{\lambda} C\} + (-1)^{\pa(C)(\pa(A)+\pa(B))} C\{A_{\lambda} B\}.\] 
\end{defn}

\begin{ex} \label{Ex:affine PVA}
Let $\g$ be a finite simple Lie superalgebra with the even supersymmetric bilinear form $(\, |\,)$. The {\it affine LCA} of $\g$ is $R=\CC[\partial]\otimes \g$ with the $\lambda$-bracket defined by 
\[ [a_\lambda b]=[a,b]+\lambda k (a|b), \quad \text{ for }  a,b\in \g \text{ and }   k\in \CC,\]
and sesquilinearity. The {\it affine PVA} of $\g$ is the (super)symmetric algebra $S(R)$ generated by $R$ endowed with the $\lambda$-bracket induced from the bracket of $R$ and the Leibniz rule.
\end{ex}

\begin{prop}
Let $\PP$ be a PVA and $\partial \PP$ be the subspace $\{\, \partial p\, |\,  p\in \PP\}$ of $P$. Then the quotient space $\PP/\partial \PP=\{ p+\partial \PP| p\in \PP\}$ endowed with the bracket 
\[\, [a+\partial \PP, b+\partial \PP]:= \{a_\lambda b\}|_{\lambda=0} +\partial \PP \text{ for } a,b\in \PP \]
is a well-defined Lie superalgebra.
\end{prop}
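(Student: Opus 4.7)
The plan is to verify in turn the three things that need checking: that the bracket descends to $\PP/\partial\PP$, that it is super-skew-symmetric, and that it satisfies the super-Jacobi identity. Throughout, I will work directly from the sesquilinearity, skew-symmetry and Jacobi axioms of the Poisson $\lambda$-bracket.

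For well-definedness, I need to show that $\{a_\lambda b\}|_{\lambda=0}$ is unchanged modulo $\partial\PP$ when we replace $a$ by $a+\partial a'$ or $b$ by $b+\partial b'$. Using the sesquilinearity $\{\partial a'_\lambda b\}=-\lambda\{a'_\lambda b\}$, setting $\lambda=0$ gives $0$; and $\{a_\lambda \partial b'\}=(\lambda+\partial)\{a_\lambda b'\}$, whose value at $\lambda=0$ is $\partial\{a_0 b'\}\in\partial\PP$. This is the only step where sesquilinearity enters and it is essentially immediate.

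For skew-symmetry, start from $\{a_\lambda b\}=-(-1)^{\pa(a)\pa(b)}\{b_{-\lambda-\partial}a\}$ and put $\lambda=0$. Expanding $\{b_\mu a\}=\sum_{n\geq 0}\frac{b_{(n)}a}{n!}\mu^n$ and substituting $\mu=-\partial$ (with $\partial$ acting on the result from the left), all terms with $n\geq 1$ lie in $\partial\PP$. Hence modulo $\partial\PP$ one has $\{a_0 b\}\equiv -(-1)^{\pa(a)\pa(b)} b_{(0)}a = -(-1)^{\pa(a)\pa(b)}\{b_0 a\}$, which is exactly super-skew-symmetry of the induced bracket.

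For Jacobi, specialize the $\lambda$-bracket Jacobi identity $\{a_\lambda\{b_\mu c\}\}=\{\{a_\lambda b\}_{\lambda+\mu}c\}+(-1)^{\pa(a)\pa(b)}\{b_\mu\{a_\lambda c\}\}$ to $\lambda=\mu=0$; this immediately yields $\{a_0\{b_0 c\}\}=\{\{a_0 b\}_0 c\}+(-1)^{\pa(a)\pa(b)}\{b_0\{a_0 c\}\}$, which is the super-Jacobi identity on $\PP/\partial\PP$ once we read each term modulo $\partial\PP$. There is no obstacle here of the kind that arose in skew-symmetry, since $\lambda$ and $\mu$ are set to zero simultaneously and no $-\lambda-\partial$ substitution appears. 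The only mild subtlety in the whole argument is the one step in skew-symmetry where one has to observe that $\{b_{-\partial}a\}-b_{(0)}a\in\partial\PP$; apart from that the proof is a direct bookkeeping of signs and of the sesquilinearity axiom.
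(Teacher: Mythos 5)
Your proof is correct and follows essentially the same route as the paper, which simply notes that well-definedness comes from sesquilinearity and that skew-symmetry and Jacobi descend from the corresponding $\lambda$-bracket axioms. You usefully spell out the one genuinely nontrivial point the paper leaves implicit, namely that $\{b_{-\partial}a\}-b_{(0)}a\in\partial\PP$ in the skew-symmetry check.
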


\begin{proof}
By the sesquilinearity of $\lambda$-brackets, we can see $[\partial a , b ]$ and $[a, \partial b]$ are in $\partial \PP$. Hence it  is a well-defined bilinear map $\PP/\partial \PP \times \PP/\partial \PP \to  \PP/\partial \PP$. Skew-symmetry and Jacobi identity of $[\, ,\, ]$ follow from those properties of $\{\, _\lambda \, \}.$
\end{proof}

\begin{defn}
Let $\PP$ be a PVA and $H:\PP\to \PP$ be a diagonalizable operator. Denote by $\Delta_a$ the eigenvalue of homogenous element $a\in \PP$ with respect to the operator $H$. If 
\[\Delta_1=0,\quad \Delta_{\partial a}= \Delta_a+1, \quad \Delta_{ab}=\Delta_a+\Delta_b, \quad \Delta_{a_{(n)}b}= \Delta_a+\Delta_b-n-1,\]
for any homogenous elements $a,b\in \PP$ then $H$ is called a {\it Hamiltonian operator} and $\Delta_a$ is called the {\it conformal weight} of $a$.
\end{defn}

\begin{rem}
Let $L$ be an element of a PVA $\PP$. If (i) $L_{(0)}=\partial$, (ii) $L_{(1)}$ is a diagonalizable, and  
\[ \text{(iii)}\ \{L_\lambda L\}= (\partial+2\lambda)L + c \lambda^3, \quad \text{ for } c\in \CC\]
then $L$ is called an energy momentum field of $\PP$. By sesquilinearities and Leibniz rules, $L_{(1)}$ is a Hamiltonian operator.  By convention, we denote by 
\[ L_{n}=L_{(n+1)}\text{ for } n\geq -1.\]
The operator $L_0=L_{(1)}$ is called the Hamiltonian operator induced from the energy momentum field $L$. 
\end{rem}

Hamiltonian operators are useful to describe relations between PVAs (VAs) and Poisson algebras (associative algebras).

%\begin{defn}
%Let $\PP$ be a PVA with a Hamiltonian operator $H$ and $\mathcal{J}$ be the associative algebra ideal of $\PP$ generated by $(\partial+\Delta_a)a$ where $a\in \PP$ satisfies $Ha= \Delta_a a$. The {\it $H$-twisted Zhu algebra} $\mathcal{Z}hu_H(\PP)= \PP/ \mathcal{J}$ is a Poisson algebra endowed with the  bracket 
%\[ \{ \bar{a}, \bar{b}\}=\sum_{j\in \ZZ_{\geq 0}} { \Delta_a-1 \choose j} \overline{a_{(j)}b}.\]
%\end{defn} 

\subsection{Classical affine W-algebras}\ 

There are some ingredients to construct a classical affine W-algebra. From now on, we fix notations to indicate them.

\begin{setup} \label{Setup}
Let $\g$ be a finite simple Lie superalgebra and $(e,h,f)$ be an  even $\sll_2$-triple in $\g$. Suppose $(\, | \, )$ is the even supersymmetric bilinear form on $\g$ such that $(e|f)= \frac{1}{2}(h|h)=1$ and $\g=\bigoplus_{i=-d}^{d}\g(i)$ is the $\ad\frac{h}{2}$ eigenspace decomposition. We consider two Lie subalgebras $\n$ and $\m$:
\[ \textstyle \n = \bigoplus_{i>0} \g(i)= \bigoplus_{i\geq 1/2} \g(i)  \quad  \supset \quad  \m= \bigoplus_{i\geq 1} \g(i).\]
Recall that the grading by $\ad h$ is called a {\it Dynkin grading}, which is an example of a {\it good grading}. Hence we have the following properties:
\begin{enumerate}
\item $\ad f\, : \bigoplus_{i\geq 1/2} \g(i) \to   \bigoplus_{i\geq -1/2}\g(i)$ is injective,
\item $\ad f\, : \bigoplus_{i\leq 1/2}\g(i)\to \bigoplus_{i\leq -1/2}\g(i)$ is surjective.
\end{enumerate}
By (1) and (2),  we have the bijection $\ad f\, : \bigoplus_{i = 1/2} \g(i) \to   \bigoplus_{i = -1/2}\g(i)$.
\end{setup}

Using notations in Setup \ref{Setup}, we define classical affine W-algebras.

\begin{defn} \label{Def:2.11_0801}
Let $\mathcal{I}$ be the differential algebra ideal of the affine PVA $\mathcal{P}=S(\CC[\partial]\otimes \g)$  generated by $m+(f|m)$ for  $m\in \m$.
The {\it classical affine W-(super)algebra $\WW(\g,f,k)$ associated to $\g$, $f$ and $k\in \CC$} is 
\[ \WW(\g,f,k)= \big( \mathcal{P}/ \mathcal{I} \big)^{\ad_\lambda \n},\]
where $\ad_\lambda \n$-action on $\mathcal{P}/ \mathcal{I} $ is induced from the $\lambda$-bracket on $\mathcal{P}$ in Example \ref{Ex:affine PVA}. The W-algebra $\WW(\g,f,k)$ is a PVA with the $\lambda$-bracket induced from  that of $\mathcal{P}.$ 
\end{defn}

\begin{prop}[\cite{DKV}]\label{Prop:second}
Suppose there is an even element $s\in \g(d)$, where $d$ is the largest integer such that $\g(d) \neq \{0\}.$ (See Remark \ref{Rem:s}.)
The W-(super)algebra $\WW(\g,f,k)$ in Definition \ref{Def:2.11_0801} is endowed with another Poisson $\lambda$-bracket which is induced from the bracket on the affine PVA $\mathcal{P}:=S(\CC[\partial]\otimes \g)$ defined by 
\begin{equation}\label{second bracekt}
 \{ a_\lambda b\}_2= (s|[a,b]).
 \end{equation}
\end{prop}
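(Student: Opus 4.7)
I would organize the argument in three stages: (a) verify that the formula $\{a_\lambda b\}_2 := (s|[a,b])$ defines a PVA bracket on $\mathcal{P}$; (b) show that the ideal $\mathcal{I}$ is preserved by it; and (c) show that the subspace $\WW(\g,f,k)$ is closed under the descended bracket. For (a), observe that $(s|[a,b])$ is a scalar, so the bracket carries no $\lambda$-dependence and only the $0$-th product is nonzero. Skewsymmetry reduces to supersymmetry of $(\cdot|\cdot)$ combined with the graded antisymmetry of $[\cdot,\cdot]$ on $\g$. The Jacobi identity on $\CC[\partial]\otimes \g$ is automatic: in both sides, every iterated bracket has a scalar in its inner slot and hence vanishes. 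The resulting LCA structure extends uniquely to $\mathcal{P}$ by sesquilinearity and the Leibniz rule.

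For (b), I would invoke the $\tfrac{1}{2}\ad h$-grading $\g = \bigoplus_{i=-d}^{d}\g(i)$ from Setup \ref{Setup} together with the fact that $(\cdot|\cdot)$ pairs $\g(i)$ with $\g(-i)$. For $m \in \m = \bigoplus_{i \geq 1}\g(i)$ and any homogeneous $a \in \g$, the element $[a,m]$ lies in $\bigoplus_{i \geq -d+1}\g(i)$, which is orthogonal to $s \in \g(d)$. Hence $\{a_\lambda(m+(f|m))\}_2 = (s|[a,m]) = 0$, and the Leibniz rule propagates this to $\{\mathcal{P}_\lambda \mathcal{I}\}_2 \subset \mathcal{I}[\lambda]$, so $\{\cdot,\cdot\}_2$ descends to $\mathcal{P}/\mathcal{I}$.

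The third stage is the most delicate. The key input is the mixed Jacobi identity (compatibility) of $\{\cdot,\cdot\}_1$ and $\{\cdot,\cdot\}_2$, which I would verify on generators $a,b,c \in \g$. All terms that feed a scalar into a $\lambda$-bracket vanish, and what remains collapses to the $\g$-Jacobi identity paired with $s$:
\[
(s|[a,[b,c]]) = (s|[[a,b],c]) + (-1)^{\pa(a)\pa(b)}(s|[b,[a,c]]),
\]
which extends to all of $\mathcal{P}$ by Leibniz. By the same grading argument as in (b), the operator $\{n_\mu \cdot\}_2$ annihilates every generator of $\mathcal{P}$ for $n \in \n$, so $\{n_\mu \cdot\}_2 \equiv 0$ on $\mathcal{P}$. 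Now take representatives $a,b \in \mathcal{P}$ of elements of $\WW(\g,f,k)$ and any $n \in \n$. Applying the mixed Jacobi with the triple $(n,a,b)$ expresses $\{n_\mu\{a_\lambda b\}_2\}_1$ in terms of $\{n_\mu a\}_1, \{n_\mu b\}_1 \in \mathcal{I}[\mu]$ together with terms involving $\{n_\mu \cdot\}_2 = 0$; since $\mathcal{I}$ is preserved by both brackets, the entire expression lies in $\mathcal{I}[\mu,\lambda]$, so $\{a_\lambda b\}_2$ represents an element of $\WW(\g,f,k)[\lambda]$. The main obstacle, in my view, is the bookkeeping in the mixed Jacobi identity with correct signs and the systematic verification that the spurious terms vanish; the extremality hypothesis $s \in \g(d)$ is what ultimately drives all the grading vanishings.
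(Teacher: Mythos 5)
Your proposal is correct, but it is organized rather differently from the paper's argument. The paper proves the statement by embedding $\{\cdot\,_\lambda\,\cdot\}_2$ into the one-parameter pencil $\{a_\lambda b\}^t=[a,b]+k\lambda(a|b)+t(s|[a,b])$: since $s\in\g(d)$ forces $[s,\n]=0$, the action $\ad_\lambda n$ is literally the same for every $t$, so the space $\WW^t(\g,f,k)$ of $\ad_\lambda\n$-invariants is independent of $t$ and carries the PVA bracket $\{\cdot\,_\lambda\,\cdot\}^t$ for each $t$; the second bracket is then extracted as the difference $\{\cdot\,_\lambda\,\cdot\}^{t+1}-\{\cdot\,_\lambda\,\cdot\}^t$, with well-definedness delegated to the master formula. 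You instead treat $\{\cdot\,_\lambda\,\cdot\}_2$ in isolation and verify everything by hand: that it is a PVA bracket on $\mathcal{P}$ (trivial Jacobi because all inner slots are scalars), that it preserves $\mathcal{I}$, and that $\WW(\g,f,k)$ is closed under it via the vanishing $\{n_\mu\,\cdot\,\}_2\equiv 0$ fed into the mixed Jacobi identity. Both arguments turn on the same grading fact --- $(s|[\n,\g])=0$, equivalently $[s,\n]=0$ --- and your computations in stages (b) and (c) are sound (indeed $[a,m]\in\bigoplus_{i\ge 1-d}\g(i)$ misses $\g(-d)$, and the mixed Jacobi with the triple $(n,a,b)$ collapses exactly as you describe once $\{n_\mu\,\cdot\,\}_2=0$). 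What the pencil buys the paper is brevity: compatibility of the two brackets comes for free as the $t$-linear part of the Jacobi identity for $\{\cdot\,_\lambda\,\cdot\}^t$, and closure of $\WW$ under $\{\cdot\,_\lambda\,\cdot\}_2$ follows from closure under two members of the family. What your route buys is an explicit, self-contained verification of that compatibility --- the very property needed later for the Lenard--Magri scheme --- at the cost of the sign bookkeeping you correctly identify as the delicate point.
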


\begin{proof}
Consider the one parameter family of Poisson $\lambda$-brackets on $S(\CC[\partial]\otimes \g)$ defined by 
\begin{equation} \label{one_para_bracket}
\{ a_\lambda b\}^t= [a,b]+k\lambda(a|b) + t(s|[a,b]), \quad t\in \CC.
\end{equation}
Observe that  $[s,n]=0$ and $\{n_\lambda A\}=\{n_\lambda A\}^0$, where $\{\, _\lambda \, \}$ is the Poisson $\lambda$-bracket on the affine PVA in Example \ref{Ex:affine PVA}. Hence $\{n_\lambda A\}^t=\{n_\lambda A\}$. Thus, for the ideal $\mathcal{I}$ in Definition \ref{Def:2.11_0801}, we have  
\[ \WW^t(\g, f, k):= \{ \bar{A} \in \mathcal{P}/\mathcal{I}\,  | \, \{ n_\lambda A\}^t\in \mathcal{I}  \} \, \simeq \, \WW(\g,f, k), \quad A\mapsto A,\]
as differential algebras. One can also check that $\WW^t(\g,f,k)$ is a PVA endowed with the $\lambda$-bracket induced from (\ref{one_para_bracket}) and extended via Leibniz rules. For $A,B\in \WW(\g,f,k)$, we have 
\[ \{A_\lambda B\}_2 :=\{A_\lambda B\}^{t+1}-\{A_\lambda B\}^t \in \WW(\g,f,k)\]
which defines another $\lambda$-bracket on $\WW(\g,f,k)$. The well-definedness of the bracket $\{\, _\lambda \}_2$ can be shown by the master formula, or Proposition \ref{Prop:master formula}.
This bracket can be understood as the bracket induced from (\ref{second bracekt}).
\end{proof}

\begin{rem}
In order to distinguish two $\lambda$-brackets on $\WW(\g,f,k)$, we denote by $\{\, _\lambda \, \}_1$ or $\{\, _\lambda \, \}$ the bracket in Definition \ref{Def:2.11_0801} and by $\{\, _\lambda \, \}_2$ the bracket in Proposition \ref{Prop:second}.
\end{rem}

\begin{rem}\label{Rem:s}
A nonzero even element  $s\in \g(d)$ exists for a subalgebra of $\gll(m|n)$. In \cite{H}, Hoyt showed that Dynkin grading on $\g_{\bar{0}}$ can be extended to $\g.$ For example, in $\sll(m|n)$ case, it can be shown as follows.  A Dynkin grading of $\gll(m|n)$ corresponds to a pair $(\lambda|\mu)$ of partitions of $m$ and $n$. If $\lambda=(p_1, p_2, \cdots, p_{r_0})$ and $\mu=(q_1, q_2, \cdots, q_{r_1})$ are decreasing sequences then the largest numbers $d_{\bar{0}}$ and $d_{\bar{1}}$ such that $\g_{\bar{0}}(d_{\bar{0}})\neq \{0\}$ and $\g_{\bar{1}}(d_{\bar{1}})\neq \{0\}$ satisfy
\[ d_{\bar{0}}= \text{max}\{2(p_1-1), \, 2(q_1-1)\}, \qquad d_{\bar{1}}= (p_1-1)+(q_1-1).\]
Hence, for $d=\text{max}\{d_{\bar{0}},\, d_{\bar{1}}\}=d_{\bar{0}}$, there exists a nonzero even element $s\in \g(d).$ Similar argument works in $\text{spo}(m|n)$ cases.
\end{rem}

\begin{prop} [\cite{DKV}]\label{Prop:2.11_0731}
Let $\{u_i\}_{i\in I}$ and $\{u^i\}_{i\in I}$ be dual bases of $\g$ with respect to the bilinear form $(\cdot |\cdot)$ and let $\{v_i\}_{i\in I_{1/2}}$ and $\{v^i\}_{i\in I_{1/2}}$ be the dual bases of $\g_{1/2}$ with respect to the bilinear form $\omega(\cdot|\cdot)$ on $\g_{1/2}$ defined by 
\[ \omega(a,b)= (f|[a,b]). \]
 Then 
 \begin{equation} \label{Eqn:affine energy momentum}
 L= \sum_{i\in I}\frac{1}{2k}u^i \, u_i + \sum_{i\in I_{1/2}} \frac{1}{2} \partial (v^i)\, v_i+\frac{\partial h}{2}
 \end{equation}
is an energy momentum field of the affine PVA of $\g$ with the $\lambda$-bracket 
\[ \{a_\lambda b\}= [a,b]+k \lambda (a|b), \quad \text{for} \quad  a,b\in \g.\]
The conformal weight $\Delta_a$ of $a\in \g(j_a)$ is $1-j_a$. Moreover, the element $L\in \WW(\g,f,k)$ which is the quotient of $L$ in (\ref{Eqn:affine energy momentum}) is an energy momentum field of $\WW(\g,f,k).$
\end{prop}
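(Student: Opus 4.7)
The plan is to verify the three defining axioms (i)--(iii) of an energy momentum field directly on the affine PVA $\PP = S(\CC[\partial]\otimes \g)$, and then to check that everything descends to the quotient $\WW(\g,f,k)$. The first half reduces to a master computation of $\{L_\lambda a\}$ for a homogeneous generator $a \in \g(j_a)$; once this is known, $\{L_\lambda L\}$ follows by Leibniz applied to both slots and sesquilinearity.

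First I would compute $\{L_\lambda a\}$ term by term, applying the right Leibniz rule to each summand of $L$ together with the affine bracket $\{u_\mu v\} = [u,v]+k\mu(u|v)$. The quadratic Casimir part $\sum_{i\in I} \frac{u^i u_i}{2k}$, using the duality $(u^i|u_j) = \delta_{ij}$ and sesquilinearity, produces $\partial a$ plus a $\lambda$-independent adjoint contribution that gets absorbed by the remaining terms; the $\lambda$-linear output is proportional to $a$. The half-integer correction $\sum_{i\in I_{1/2}} \frac{1}{2}\partial(v^i) v_i$ adjusts the weight on $\g(\pm 1/2)$ via $\omega(v_i,v^j) = \delta_{ij}$, and $\frac{1}{2}\partial h$ supplies the $\ad h$-eigenvalue. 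Collecting powers of $\lambda$, the $\lambda^0$-component of $\{L_\lambda a\}$ equals $\partial a$ and the $\lambda^1$-component equals $(1-j_a)a$, which simultaneously yields $L_{(0)} = \partial$, diagonalizability of $L_{(1)}$, and the formula $\Delta_a = 1-j_a$. Property (iii) then follows by expanding $\{L_\lambda L\}$ by Leibniz, substituting this formula, and collecting: the $\lambda^{\leq 2}$ contributions assemble into $(\partial+2\lambda)L$, while the $\lambda^3$-coefficient is a scalar determined by $(h|h)$, $\sum_i (u^i|u_i)$, and the half-integer correction.

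Finally, to descend to $\WW(\g,f,k)$, I would verify $\{n_\lambda L\} \in \II[\lambda]$ for every $n \in \n$. By skewsymmetry this is equivalent to analyzing $\{L_\lambda n\}$ for $n \in \g(j)$ with $j \geq 1/2$: the first step gives leading pieces $\partial n$ and $(1-j)n$, which individually need not lie in $\II$, but the Leibniz action of $\sum \frac{1}{2}\partial(v^i)v_i$ generates terms of the form $(f|[n,v^i])v_i$ that, combined with the action of $\frac{1}{2}\partial h$, rewrite the residue as a sum of generators $m+(f|m)$ of $\II$, with $m \in \m$. Once $\{n_\lambda L\} \equiv 0$ modulo $\II$ is established, $\bar L$ lies in $\WW(\g,f,k)$ and the identities of the second step pass to the quotient unchanged. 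The main obstacle is precisely this descent check: in the super setting, the signs from $\pa(v^i)\pa(v_i)$ and from $\omega(a,b)=(f|[a,b])$ must be tracked meticulously, and the exact normalizations $1/(2k)$ and $1/2$ appearing in $L$ are what force the cancellation modulo $\II$.
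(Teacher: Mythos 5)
First, a point of calibration: the paper itself offers no proof of Proposition \ref{Prop:2.11_0731} --- it is imported from \cite{DKV} with a bare citation --- so there is no in-paper argument to measure your proposal against. Your overall architecture (compute $\{L_\lambda a\}$ on generators $a\in\g(j_a)$ by Leibniz and sesquilinearity, read off $L_{(0)}$ and $L_{(1)}$, deduce $\{L_\lambda L\}$, and finally check $\{n_\lambda L\}\in\II[\lambda]$ for $n\in\n$ to descend to $\WW(\g,f,k)$) is the standard route and surely the one intended by the cited source.

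There is, however, a genuine gap in your middle step, namely the claim that the $\lambda^0$- and $\lambda^1$-coefficients of $\{L_\lambda a\}$ are \emph{exactly} $\partial a$ and $(1-j_a)a$ in the affine PVA, with the remaining ``adjoint contribution \dots absorbed by the remaining terms.'' Two corrections. For the Casimir part $\tfrac{1}{2k}\sum_i u^iu_i$ the quadratic terms $\sum_i\bigl(u^i[a,u_i]\pm u_i[a,u^i]\bigr)$ cancel \emph{internally}, by invariance of $(\cdot|\cdot)$ together with supercommutativity of the product; they are not absorbed by the other summands of $L$, and in the super case the two completeness relations $\sum_i(a|u_i)u^i$ and $\sum_i(a|u^i)u_i$ differ by the sign $(-1)^{\pa(a)}$, which must be confronted rather than deferred. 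More seriously, the term $\tfrac12\sum_{i\in I_{1/2}}\partial(v^i)v_i$ does not merely ``adjust the weight on $\g(\pm1/2)$'': its bracket with a generator $a$ produces quadratic terms of the shape $v_i[a,v^i]$ and $[v^i,a]\,\partial v_i$ at orders $\lambda^0$ and $\lambda^1$, with $[v^i,a]\in\g(j_a+\tfrac12)$, and these do \emph{not} vanish in $S(\CC[\partial]\otimes\g)$. Concretely, for $\g=\sll_3$ with minimal nilpotent $f=e_{31}$ and $a=e_{12}\in\g(1/2)$ one finds
\begin{equation*}
\{L_\lambda e_{12}\}=(1+e_{13})\bigl(\partial e_{12}+\tfrac{\lambda}{2}e_{12}\bigr),
\end{equation*}
so $L_{(0)}e_{12}\neq\partial e_{12}$ and $e_{12}$ is not an $L_{(1)}$-eigenvector in the affine PVA; the prefactor $1+e_{13}=e_{13}+(f|e_{13})$ is precisely a generator of $\II$. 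The $\g_{1/2}$-correction is an energy--momentum correction only \emph{modulo} $\II$, where $[v^i,v_j]\in\m$ collapses to the scalar $-(f|[v^i,v_j])=-\omega(v^i,v_j)$; it is the classical surrogate of the neutral free superfermions of the quantum BRST complex. Consequently the clean two-stage plan --- first verify the energy--momentum axioms exactly on $\PP$, then descend --- cannot be executed as written: the reduction by $\II$ (equivalently, restriction to $\WW(\g,f,k)$) is not a final bookkeeping step but is already needed to make the $\lambda^0$- and $\lambda^1$-coefficients come out right, and the verification of $L_{(0)}=\partial$, of $\Delta_a=1-j_a$, and of $\{L_\lambda L\}=(\partial+2\lambda)L+c\lambda^3$ has to be organized around that fact from the start.
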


%\begin{prop}\label{Prop:2.13_0731}
%For the Hamiltonian operator $H$ induced from the energy momentum field $L$ in Proposition \ref{Prop:2.11_0731}, the $H$-twisted Zhu algebra of $\WW(\g,f,k)$ is isomorphic to the classical finite W-algebra $\WW^{fin}(\g,f)$ associated to $\g$ and $f$.
%\end{prop}

%\begin{rem}
%By Proposition \ref{Prop:2.13_0731}, we can deduce algebraic properties of $\WW^{fin}(\g,f)$ from algebraic properties of $\WW(\g,f,k)$.
%\end{rem}

About algebraic structures of classical affine W-algebras, the following proposition can be found in \cite{S3}. Also we note that, in \cite{DK}, there is the analogous result for quantum affine W-algebras.

\begin{prop}[\cite{DKV,S3}] Let $L$ be the energy momentum field of $\WW(\g,f,k)$  in (\ref{Eqn:affine energy momentum}). For the Hamiltonian operator $L_0$, let $\Delta_a$ be the conformal weight of the homogenous element $a\in \WW(\g,f,k).$ 
\begin{enumerate}
\item As differential algebras  
$ \WW(\g,f,k)\simeq S(\CC[\partial]\otimes \g_f),$
where $\g_f=\ker (\ad\, f)\subset \g.$ 
\item Let $B_f$ be a basis of $\g_f$.  There is $\Phi_f=\{\, \phi_g\, |\, g\in B_f \, \}\subset \WW(\g,f)$ such that 
\[ \phi_g= g+\psi_g, \]
where $\Delta_{\psi_g}=\Delta_g$ and $\psi_g$ is an element of the algebra of  differential polynomials generated by $\bigoplus_{i>1-\Delta_g} \g(i)$. Note that $g\in \g(1-\Delta_g).$ Moreover, 
\[ \WW(\g,f,k)\quad = \quad  \Cdiff[\phi_g | g\in B_f].\]
\end{enumerate}
\end{prop}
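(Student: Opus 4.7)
My approach is to prove part (2) by a direct inductive construction; part (1) then follows from the algebraic independence of the resulting family $\Phi_f$. The three key steps are: (i) simplify $\PP/\II$ to a differential polynomial algebra, (ii) build $\phi_g$ inductively using the good grading, and (iii) deduce freeness and generation via a character count.

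First, I reduce $\PP/\II$ to manageable form. Since $\II$ is the differential ideal generated by $m+(f|m)$ for $m\in\m=\bigoplus_{i\geq 1}\g(i)$, every $m\in\m$ may be replaced by the scalar $-(f|m)$ and every $\partial^n m$ ($n\geq 1$) by zero. This yields a differential-algebra isomorphism
\[
\PP/\II\;\simeq\;S\bigl(\CC[\partial]\otimes \g_{\leq 1/2}\bigr),\qquad \g_{\leq 1/2}:=\bigoplus_{i\leq 1/2}\g(i).
\]
The energy momentum field $L$ of Proposition~\ref{Prop:2.11_0731} descends to a Hamiltonian operator on $\PP/\II$ assigning conformal weight $\Delta_a=1-i$ to each $a\in \g(i)$, and $\ad_\lambda\n$ preserves conformal weights. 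By $\sll_2$-representation theory applied to the even triple $(e,h,f)$, I also record the parity preserving decomposition $\g=\g_f\oplus [e,\g]$, and $\g_f\subset \bigoplus_{i\leq 0}\g(i)$ since lowest weights have non-positive $h$-weight.

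Second, I construct $\phi_g$. Fix $g\in B_f\cap\g(j)$, so $\Delta_g=1-j$ with $j\leq 0$. I build $\phi_g=g+\psi_g$ inductively on an $\ad h/2$-filtration: start with $\phi_g^{(0)}=g$; given $\phi_g^{(k-1)}$, examine $\{n_\lambda\phi_g^{(k-1)}\}$ modulo $\II$ for all $n\in\n$. Any nonvanishing obstruction sits in a strictly lower $\ad h/2$-slot and can be killed by adjoining a correction term $\psi_g^{(k)}\in\Cdiff[\bigoplus_{i>j}\g(i)]$ of conformal weight $\Delta_g$. Solvability of the cancellation equation at each step reduces to the surjectivity of $\ad f:\bigoplus_{i\leq 1/2}\g(i)\to \bigoplus_{i\leq -1/2}\g(i)$ from Setup~\ref{Setup}. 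Since only finitely many grading slots $j<i\leq 1/2$ contribute at a fixed conformal weight, the process terminates, yielding $\phi_g\in\WW(\g,f,k)$ with $\psi_g$ in the claimed subalgebra.

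Third, the family $\{\phi_g\}_{g\in B_f}$ is algebraically independent over $\CC[\partial]$ because the leading symbols under the $\ad h/2$-filtration are the linearly independent vectors of $B_f$. For generation, one passes to the associated graded $\gr\WW(\g,f,k)\hookrightarrow \gr(\PP/\II)=S(\CC[\partial]\otimes\g_{\leq 1/2})$; in this limit the $\ad_\lambda\n$-action reduces to $\ad f$ on the lowest $\ad h/2$-slot, whose kernel in $\g_{\leq 1/2}$ is exactly $\g_f$. A conformal-weight-graded character count then forces $\gr\WW(\g,f,k)=S(\CC[\partial]\otimes\g_f)$, proving both (1) and (2). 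The main obstacle is the induction in step (ii): one must ensure that the correction $\psi_g^{(k)}$ can be chosen consistently for \emph{all} $n\in\n$ simultaneously, while controlling the Leibniz-rule cross terms produced by $\{n_\lambda\psi_g^{(k-1)}\}$ and the derivative terms $\partial^\ell n$ appearing through sesquilinearity. This simultaneous-cancellation problem is the cohomological heart of the argument and is exactly where the good-grading hypothesis on $(e,h,f)$ is indispensable.
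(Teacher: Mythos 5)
Your proposal takes a genuinely different route from the one this paper (and its reference \cite{DKV}) actually uses, but as written it has a real gap at its central step. The paper does not attack the invariance condition $\{n_\lambda \phi_g\}\in\II[\lambda]$ head-on; instead it gauge-fixes the universal Lax operator. In Proposition \ref{Prop:3.6_0212} the equation $e^{\ad X}L = k\partial + Q^{\mathrm{can}} - f\otimes 1$ with $Q^{\mathrm{can}}\in V_f\otimes\mathcal{V}_\II(\p)$ is solved degree by degree, and at each $\ad\frac{h}{2}$-slot the unknowns $X_{i+1}$ and $Q^{\mathrm{can}}_i$ are determined \emph{uniquely} by the single linear-algebra fact $\m^\perp=V_f\oplus[f,\n]$ (equivalently, the injectivity of $\ad f$ on $\n$ from Setup \ref{Setup}). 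The resulting coefficients $w^i$ of $q^{\mathrm{can}}_{\mathrm{univ}}$ are then shown to be gauge invariant and to freely generate $\WW(\g,f,k)$ (Lemma \ref{Lem:label_gen}, Proposition \ref{Prop:3.14_0730}), with $w^i=v^i+(\text{higher order})$ for $v^i$ running over a basis of $\g_f$ by Lemma \ref{Lem:2.15_1227}. This triangular elimination completely sidesteps the simultaneous-cancellation problem that your approach runs into.

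The gap in your argument is precisely the step you yourself flag at the end: you assert that the obstruction $\{n_\lambda\phi_g^{(k-1)}\}$ can be killed by a single correction $\psi_g^{(k)}$ \emph{for all $n\in\n$ at once}, and that "solvability reduces to the surjectivity of $\ad f:\bigoplus_{i\leq 1/2}\g(i)\to\bigoplus_{i\leq -1/2}\g(i)$," but you never prove this. Surjectivity of $\ad f$ lets you cancel the obstruction coming from one fixed $n$; for the family of obstructions $n\mapsto \{n_\lambda\phi_g^{(k-1)}\}$ you need to know that this assignment is in the image of $n\mapsto\{n_\lambda(\,\cdot\,)\}$ applied to a single element, i.e.\ that a certain cocycle is a coboundary. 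That is a genuine cohomology-vanishing statement (it is what the BRST/spectral-sequence proofs in \cite{DK,S3} establish), not a consequence of the surjectivity of $\ad f$ alone, and your closing paragraph concedes it is unaddressed. Secondary, repairable issues: the claim that the obstruction "sits in a strictly lower $\ad h/2$-slot" is backwards as stated (for $n\in\g(m)$, $m\geq 1/2$, the bracket $[n,a]$ raises the $\ad\frac{h}{2}$-degree; the correct decreasing quantity is the conformal weight of the lowest-weight generator occurring, which needs to be set up carefully), and the final character-count step needs the explicit identification of the joint kernel of the leading-order derivations $a\mapsto -([f,n]|a)$ with $S(\CC[\partial]\otimes\g_f)$, which you assert but do not carry out. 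If you want a self-contained proof along the lines of this paper, the cleanest fix is to abandon the inductive cancellation and instead run the canonical-form argument of Propositions \ref{Prop:3.6_0212}--\ref{Prop:3.14_0730}.
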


\begin{ex} \label{Ex:W(sl2)}
Let $\g=\sll_2$. Then $\n=\m= \CC e$ and the ideal $\II$ of $S(\CC[\partial]\otimes \g)$ is generated by $e+1$. As differential algebras, 
\[ \WW(\g,f,k)\quad = \quad \Cdiff[ \phi_f],\]
where $\phi_f= f-\frac{1}{2} x^2-k\partial x$ for $x=\frac{h}{2}.$ We can check that 
\[ \{ \phi_{f\ \lambda}\, \phi_f\}= -k(\lambda+2\partial) \phi_f -\frac{k^3}{2} \lambda^3.\]
\end{ex}

\begin{ex}\label{Ex:W(spo(2,1))}
Let $\g=\text{spo}(2|1)\subset \mathfrak{gl}(2|1).$ Then the even part $\g_{\bar{0}}$ is generated by an $\sll_2$-triple $(e_{ev}, h, f_{ev})$ and the odd part  $\g_{\bar{1}}$ is generated by $e_{od}$ and $f_{od}$. As matrix forms,  
\begin{equation*} 
h= \left(
\begin{array}{ccc}
1& 0& 0\\
0 &-1 &0 \\
0 & 0 & 0
\end{array}\right),
\qquad
e_{ev}= \left(
\begin{array}{ccc}
0 & 1& 0\\
0 & 0 &0 \\
0 & 0 & 0
\end{array}\right),
\qquad
f_{ev}= \left(
\begin{array}{ccc}
0 & 0 & 0\\
1 & 0 &0 \\
0 & 0 & 0
\end{array}\right),
\end{equation*}

\begin{equation*}
e_{od}= \left(
\begin{array}{ccc}
0 & 0 & 1\\
0 & 0 &0 \\
0 & 1 & 0
\end{array}\right),
\qquad
f_{od}= \left(
\begin{array}{ccc}
0 & 0 & 0\\
0 & 0 &1 \\
-1 & 0 & 0
\end{array}\right) .
\end{equation*} 

Consider the even supersymmetric invariant bilinear form $(\, | \, )$ such that $(h|h)=2(e_{ev}|f_{ev})=2$ and $(e_{od}|f_{od})=-2.$ There are two elements 
\[\phi_{od}:=f_{od}-\frac{1}{2}e_{od}h-k\partial e_{od}, \quad \phi_{ev}:=f_{ev}+\frac{1}{2}f_{od}e_{od}-\frac{1}{4}h^2+k\frac{1}{4}e_{od}\partial e_{od}-k\frac{1}{2}\partial h,\] which satisfy 
\[ \ad_\lambda e_{ev} (\phi_{od})=\ad_\lambda e_{od} (\phi_{od})= \ad_\lambda e_{ev} (\phi_{ev})=\ad_\lambda e_{od} (\phi_{ev})=0+I .\]
Hence  
\[ \WW(\g,f_{ev},k)=\Cdiff[\phi_{od},\phi_{ev}]\]
 as a differential algebra. 
By direct computations, we can check that the $\lambda$-bracket of $\WW(\g, f_{ev},k)$ is defined as follows:
\begin{equation*}
\begin{aligned}
& \{\phi_{od}\, _\lambda\, \phi_{od}\}=-2 \phi_{ev}-2k^2\lambda^2 , \\
& \{\phi_{ev}\, _\lambda\, \phi_{od}\}= -k(\partial+\frac{3}{2}\lambda)\phi_{od},\\
&\{\phi_{ev}\, _\lambda\, \phi_{ev}\}=-k(\partial+2\lambda)\phi_{ev}-\frac{k^3}{2}\lambda^3.
\end{aligned}
\end{equation*}

\end{ex}

\vskip 2mm

\subsection{Generalized Drinfeld-Sokolov reductions} \label{Subsec: D-S}\label{Subsec:D-S} \ 

In Section \ref{Subsec: D-S}, we recall the construction of  classical W-algebras associated to Lie algebras via Drinfeld-Sokolov reductions in PVA theories. For the purpose, we assume $\g$ is a finite dimensional simple Lie algebra (without odd part), in this subsection.

For a symmetric differential algebra $\mathcal{V}$, the vector space $\g\otimes \mathcal{V}$ is the Lie algebra endowed with the bracket
\[ [a\otimes F, b\otimes G]= [a,b]\otimes FG \quad \text{ for }  a,b\in \g, \ F,G \in \mathcal{V}.\]
The bilinear form $(\, | \, ): \g \times \g \to \CC$ on $\g$ can be extended to the map $(\, | \, ): \g \otimes \mathcal{V} \times \g \otimes \mathcal{V} \to \mathcal{V}$  by 
\[ ( \, a\otimes F \, | \, b\otimes G\, ) = (a|b) FG. \] 
The derivation $\partial: \mathcal{V} \to \mathcal{V}$ on $\mathcal{V}$  can be extended to an endomorphism on $\g\otimes \mathcal{V}$  such that
\[ \partial ( a\otimes F  )= a \otimes \partial F . \]
 Consider $\CC\partial$ as the trivial one dimensional Lie algebra. Then $\CC\partial \ltimes \g \otimes \mathcal{V}$ is the semidirect product of Lie algebras $\CC\partial$ and $\g \otimes \mathcal{V}$ endowed with the bracket 
 \begin{equation} \label{Eqn:Semidirect}
 [\, c_1\partial + a\otimes F, \, c_2 \partial+ b\otimes G\, ] = c_1(b\otimes \partial G)-c_2(a\otimes \partial F) + [a\otimes F, b\otimes G].
 \end{equation}

\vskip 3mm

Recall the notation $\m =\bigoplus_{i\geq 1} \g(i)$ in Setup \ref{Setup}. If we denote by $V^\perp:= \{\,  w \in \g\, |\, (w|v)=0 \text{ for any } v\in V\, \}$ for a subset $V\subset \g$ then 
\begin{equation}\label{Eqn:mperp}
\textstyle \m^\perp = \bigoplus_{i>-1} \g(i)= \bigoplus_{i\geq -1/2} \g(i).
 \end{equation}
Let us consider the subspace 
\begin{equation} \label{Eqn:p}
\textstyle \p= \bigoplus_{i<1} \g(i) =\bigoplus_{i\leq 1/2}\g(i)
\end{equation}
 of $\g$.  Then we have 
 \[ \text{ (i) $\g= \m \oplus \p$, \quad   (ii)  $\p \simeq \m^\perp$ by the bilinear form $(\, | \, )$}. \]

\begin{defn} Let $\p$ be defined as (\ref{Eqn:p}) and let $\mathcal{V}(\p):=S(\CC[\partial]\otimes \p)$. 
\begin{enumerate}
\item  Let $\mathcal{F}_{\g,f}$ be the set of elements
\begin{equation}\label{Eqn:q}
q= \sum_{i\in I_{\p}} q_i \otimes P^i \in \m^\perp \otimes \mathcal{V}(\p)
\end{equation}
 for a basis  $\{\, q_i\, |\, i\in I_{\p}\, \}$ of $\m^\perp$ and a subset $\{\, P^i\, | \, i\in I_{\p}\, \}\subset  \mathcal{V}(\p)$. The set $\mathcal{F}_{\g,f}$ is called the phase space associated to $\g$ and $f$.

\item For a given $q$ and $k\in \CC$, the operator $L$ of the form 
\begin{equation} \label{Eqn:UniversalLax}
L= k\partial+q- f\otimes 1\in \CC \partial \ltimes \g \otimes \mathcal{V}(\p).
\end{equation}
is called a  {\it Lax operator} 

\item
The {\it gauge transformation} of $ q\in\mathcal{F}_{\g,f}$ with $A\in \n \otimes \mathcal{V}(\p)$ is $q^A\in \mathcal{F}_{\g,f}$ where
\[  e^{\ad A}( k\partial +q -f\otimes 1) = k \partial+q^A-f\otimes 1.\]  
On the other hand, for $q'\in \mathcal{F}_{\g,f}$, if there is $A\in \n \otimes \mathcal{V}(\p)$ such that $q'= q^A$ then we say they are {\it gauge equivalent} and write $q \sim q'$. Note that it is not hard to check that the equivalence relation is well-defined in $\mathcal{F}_{\g,f}.$
 
\end{enumerate}
\end{defn}

Elements in the differential algebra $\mathcal{V}(\p)$
% $\simeq S(\CC[\partial]\otimes \g)/\mathcal{I}$, where $\mathcal{I}$ is the ideal defined in Definition \ref{Def:2.11_0801},  
can be identified with functions $\mathcal{F}_{\g,f}\to \mathcal{V}(\p)$ as follows:
\begin{equation} \label{eqn:functional_even}
p( a\otimes F) = (p|a) F, \quad PQ(a\otimes F)= P(a\otimes F) Q(a\otimes F), \quad \partial P(a\otimes F) = \partial (P(a\otimes F) ),
\end{equation}
for $p\in \p$, $P,Q \in \mathcal{V}(\p)$ and $a\otimes F \in  \mathcal{F}_{\g,f}.$ An element $P\in \mathcal{V}(\p)$ is said to be a {\it gauge invariant function} if $P(q)=P(q')$ whenever $q\sim q'$ for $q,q'\in  \mathcal{F}_{\g,f}.$

\begin{prop}[\cite{DKV}]\label{Prop:2.18_0731}
The set $\WW$ of gauge invariant functions  in $\mathcal{V}(\p)$ is a differential subalgebra of $\mathcal{V}(\p).$ Moreover, $\WW$ is isomorphic to the classical affine W-algebra $\WW(\g,f,k)$ associated to $\g$ and $f$ as differential algebras. %Hence $\WW$ has the natural PVA structure.
\end{prop}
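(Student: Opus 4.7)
The differential-subalgebra claim is immediate from the evaluation rules~(\ref{eqn:functional_even}): for $P,Q\in\WW$, the product and derivative of the functions $q\mapsto P(q)$ and $q\mapsto Q(q)$ are constant on each gauge orbit because both product and derivation commute with evaluation. For the isomorphism with $\WW(\g,f,k)$, my plan is first to identify $\PP/\II$ with $\mathcal{V}(\p)$ as differential algebras. Using the vector space decomposition $\g=\m\oplus\p$, the differential algebra homomorphism $\pi:\PP\to\mathcal{V}(\p)$ sending $m\mapsto -(f|m)$ for $m\in\m$ and $p\mapsto p$ for $p\in\p$ has kernel containing $\II$ and admits the inclusion $\mathcal{V}(\p)\hookrightarrow\PP$ as a section, hence descends to an isomorphism $\bar\pi:\PP/\II\xrightarrow{\sim}\mathcal{V}(\p)$ of differential algebras. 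It then remains to show that under $\bar\pi$ the subspace $(\PP/\II)^{\ad_\lambda\n}$ corresponds exactly to $\WW$.

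The bridge between the two descriptions is a universal Lax operator. Pick a basis $\{q_i\}$ of $\m^\perp$ with dual basis $\{q^i\}\subset\p$ under $(\cdot|\cdot)$, and set $q_{\mathrm{univ}}=\sum_i q_i\otimes q^i\in\m^\perp\otimes\mathcal{V}(\p)$. For any $P\in\mathcal{V}(\p)$ one has $P(q_{\mathrm{univ}})=P$, while evaluation at a general $q=\sum_i q_i\otimes P^i$ is the differential algebra homomorphism $\mathcal{V}(\p)\to\mathcal{V}(\p)$ given on generators by $q^i\mapsto P^i$. An infinitesimal gauge transformation at $q_{\mathrm{univ}}$ by $A=\varepsilon(n\otimes G)$ with $n\in\n$ and $G\in\mathcal{V}(\p)$ changes $q_{\mathrm{univ}}$ by
\[
\delta q=[n\otimes G,\,k\partial+q_{\mathrm{univ}}-f\otimes 1]=\sum_i [n,q_i]\otimes G\,q^i-[n,f]\otimes G-k(n\otimes\partial G),
\]
and the chain rule induces a corresponding derivation $\delta_{n,G}$ of $\mathcal{V}(\p)$. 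Comparing with the master formula for the affine PVA bracket, a direct computation yields
\[
\delta_{n,G}P\ \equiv\ -\bigl\{n_{\lambda}P\bigr\}\big|_{\lambda\to\partial}G \pmod{\II},
\]
where the $\m$-parts of $[n,q_i]$ and of $[n,p]$ (for $p$ in a basis of $\p$) are exactly absorbed by the constant shifts $m\mapsto -(f|m)$ that define $\II$.

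Granted this identification, the equivalence $(\PP/\II)^{\ad_\lambda\n}\leftrightarrow\WW$ follows in both directions. If $\{n_\lambda P\}\in\II[\lambda]$ for every $n\in\n$, then every first-order gauge variation of $P$ vanishes at every point of $\mathcal{F}_{\g,f}$; since $\n$ is $\ad$-nilpotent, $e^{\ad A}$ is a finite sum, so integrating along the polynomial curve $q(t)=q^{tA}$ gives $\frac{d}{dt}P(q(t))=(\delta_{n,G}P)(q(t))=0$, whence $P(q^A)=P(q)$. Conversely, gauge invariance forces all first-order variations to vanish, yielding $\{n_\lambda P\}\in\II[\lambda]$ for every $n\in\n$. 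The main technical obstacle is the master-formula computation linking $\delta_{n,G}P$ with $\{n_\lambda P\}$: one must carefully decompose $[n,p]\in\g$ into its $\p$- and $\m$-components for $p$ in a basis of $\p$, verify that the $\m$-contributions are absorbed exactly by the shifts defining $\II$ (which uses $(a|f)=(a_\m|f)$, valid because $\g(1)\cap\p=\{0\}$), and match the substitution $\lambda\to\partial$ arising from $k(n\otimes\partial G)$ in $\delta q$ against the $k\lambda(n|p)$ term in the affine PVA bracket.
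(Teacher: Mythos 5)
Your proposal is correct and follows essentially the same route the paper takes (the paper cites \cite{DKV} for this statement but proves its super generalization, Theorem \ref{Thm:3.11_0730}, via Propositions \ref{Prop:3.3_0207} and \ref{Prop:3.4_0211}): identify $\PP/\II$ with $\mathcal{V}(\p)$, compute the first-order gauge variation of $P$ at the universal Lax operator (Lemma \ref{Lem:basic1}), match it with $-\{n_\lambda P\}$ under $\lambda\mapsto\partial$ modulo $\II$ (the paper's equation (\ref{Eqn:3.8_0103}), verified by induction on generators, derivatives and products rather than by the master formula), and then pass between vanishing of first-order variations and full gauge invariance exactly as you describe. Your integration along $q^{tA}$ replaces the paper's term-by-term expansion $\sum_t\epsilon^t P^W_t$ and its iteration of $W(\ad^nX(L))=0$, but this is only a cosmetic difference.
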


\begin{rem}
In \cite{DS}, a pair of local Poisson structures in $\WW$ is described by a Lax operator. The Poisson structures are equivalent to the PVA structures on the classical affine W-algebra $\WW(\g,f,k)$ which are induced from those in the affine PVA $S(\CC[\partial]\otimes \g)$. (See Definition \ref{Def:2.11_0801} and Proposition \ref{Prop:second}.)
\end{rem}

The construction of W-algebras in Proposition \ref{Prop:2.18_0731} allows to compute generators of the algebras (see Theorem \ref{Thm:2.17_0730}).

\begin{lem}[\cite{DKV}] \label{Lem:2.15_1227}
Let $V$ be a subspace of $\m^\perp$ such that $\m^\perp= [\n,f] \oplus V.$ Take a basis $\{ v_i \}_{i\in I_{\p}}$ of $\m^\perp$ such that $\{v_i\}_{i\in  J\subset I_{\p}}$ is a basis of $V$ and $\{ v_i \}_{i\in I_{\p}\backslash J}$ is a basis of $[\n, f]$.  If $\{v^i\}_{i\in I_{\p}}$ is the dual basis of $\p$ then $\{v^i\}_{i\in J}$ is a basis of $\g^f:= \text{ker}(\ad f)$.
\end{lem}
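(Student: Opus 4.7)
The plan is to use invariance of the bilinear form $(\cdot\mid\cdot)$ together with the grading structure from Setup \ref{Setup} to show first that each $v^i$ with $i\in J$ lies in $\g^f$, and then to verify that the dimensions match, forcing $\{v^i\}_{i\in J}$ to be a full basis of $\g^f$.

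First I would observe that by construction of the dual basis, $(v^i\mid v_j)=0$ for every $j\in I_{\p}\setminus J$, so $v^i$ is orthogonal to all of $[\n,f]$. Applying the invariance identity $([n,f]\mid v^i)=(n\mid [f,v^i])$ translates this into $[f,v^i]\in \n^\perp$ for every $n\in\n$. A direct graded computation gives $\n^\perp=\bigoplus_{j\geq 0}\g(j)$, since the form pairs $\g(j)$ nontrivially only with $\g(-j)$. On the other hand, $v^i\in \p=\bigoplus_{j\leq 1/2}\g(j)$ forces $[f,v^i]\in \bigoplus_{j\leq -1/2}\g(j)$, because $\ad f$ shifts the grading by $-1$. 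The intersection of these two graded subspaces is zero, so $[f,v^i]=0$ and hence $v^i\in \g^f$.

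It then remains to verify the cardinality. Since $\ad f:\n\to \m^\perp$ is injective by Setup \ref{Setup}, we have $\dim [\n,f]=\dim \n$, so
\[|J|=\dim \m^\perp-\dim [\n,f]=\dim \g(-1/2)+\dim \g(0).\]
For $\dim \g^f$ I would compute $\dim \g-\dim [\g,f]$ by applying the injectivity and surjectivity statements of Setup \ref{Setup} to the graded restrictions of $\ad f$; this yields $\dim \g(0)+\dim \g(1/2)$, which matches $|J|$ via the grading symmetry $\dim \g(j)=\dim \g(-j)$ (itself a consequence of nondegeneracy of $(\cdot\mid\cdot)$ together with the invariance relation). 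Since $\{v^i\}_{i\in J}$ is linearly independent as a subset of the basis of $\p$ and sits inside $\g^f$ of matching dimension, it must be a basis.

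The main obstacle I foresee is purely bookkeeping: identifying $\n^\perp$, the image of $\ad f$ on each graded piece, and the dimension of $[\g,f]$ correctly so that both the containment and the dimension count go through cleanly. Once those graded identifications are in hand, the invariance relation makes the argument essentially forced.
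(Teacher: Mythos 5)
Your argument is correct. The paper itself gives no proof of this lemma (it is quoted from \cite{DKV}), but your two steps --- using invariance of the form to show $[f,v^i]\in \n^\perp=\bigoplus_{j\geq 0}\g(j)$ while the grading forces $[f,v^i]\in\bigoplus_{j\leq -1/2}\g(j)$, and then matching $|J|=\dim\g(0)+\dim\g(-1/2)$ against $\dim\g^f=\dim\g-\dim[\g,f]$ via the graded injectivity/surjectivity of $\ad f$ from Setup \ref{Setup} --- are exactly the standard proof from \cite{DKV}, and all the graded identifications you flag as bookkeeping do go through as you describe.
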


\begin{thm}[\cite{DKV}]\  \label{Thm:2.17_0730} 
\begin{enumerate}
\item Let $\{q_i\}_{i\in I_{\p}}$ and $\{q^i\}_{i\in I_{\p}}$ be bases of $\m^\perp$ and $\p$ such that $(q_i| q^j)= \delta_{ij}.$ Denote by
\begin{equation}
\textstyle q_{\text{univ}}= \sum_{i\in I_{\p}} q_i \otimes q^i   \quad \text{ and }\quad   \mathcal{L}=k\partial+ q_{\text{univ}} -f \otimes 1.
\end{equation}
Then there is unique $X \in \n \otimes \mathcal{V}(\p)$ such that $q^X_{\text{univ}}\in V\otimes \mathcal{V}(\p)$ satisfies 
\begin{equation}
e^{\ad X} \mathcal{L}= k\partial+q^X_{\text{univ}}-f\otimes 1.
\end{equation}
\item As in Lemma \ref{Lem:2.15_1227}, let $\{q_i\}_{i\in J\subset I_{\p}}$ be a basis of $V$. If 
\[  \textstyle q^X_{\text{univ}}= \sum_{i\in J} q_i \otimes w_i\]
for $q_{\text{univ}}^X$ in (1) then $w_i$ are gauge invariant functions in  $\mathcal{V}(\p)$. Moreover, by Lemma \ref{Lem:2.15_1227}, we have $w_i= v^i +(\text{degree}\geq 2\text{ part}).$
\item The set of gauge invariant functions in $\mathcal{V}(\p)$ is the  algebra of differential polynomials
\[ \Cdiff[ \,  w_i \, |  \, i\in J\, ]. \ \]
\end{enumerate}
\end{thm}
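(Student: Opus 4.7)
The plan is to carry out a graded induction using the $\tfrac{1}{2}\ad h$-decomposition $\g=\bigoplus_{i=-d}^{d}\g(i)$, exploiting the key fact from Setup \ref{Setup} that $\ad f\colon \n\to\m^\perp$ is injective with image $[\n,f]$; equivalently, the decomposition $\m^\perp=[\n,f]\oplus V$ is compatible with the grading, and for each $j\geq 1/2$ the restriction $\ad f\colon \g(j)\to\g(j-1)$ is injective onto the $[\n,f]$-component of $\g(j-1)$. Because $\g$ is finite-dimensional the grading is bounded, so exponentials of $\ad X$ will always be finite sums when applied to $\mathcal{L}$, and no analytic convergence is needed.

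For \emph{(1)}, I would write $X=\sum_{j\geq 1/2}X_{(j)}$ with $X_{(j)}\in\g(j)\otimes\mathcal{V}(\p)$ and expand $e^{\ad X}\mathcal{L}$ via the bracket (\ref{Eqn:Semidirect}); note that $\ad X$ applied to $k\partial$ produces $-k\partial(X)$, an element of $\g\otimes\mathcal{V}(\p)$ of the same $\g$-grade as $X$. Grouping contributions by $\g$-grade, the requirement that the $[\n,f]$-component of $q^X_{\text{univ}}$ at grade $j-1$ vanish reduces to an equation
\begin{equation*}
[X_{(j)},f]\;=\;R_{j-1}\bigl(X_{(<j)},\,q_{\text{univ}}^{(\leq j)},\,\partial X_{(<j)}\bigr),
\end{equation*}
with $R_{j-1}$ lying in $[\n,f]\cap(\g(j-1)\otimes\mathcal{V}(\p))$ by the inductive hypothesis that lower-grade equations have already been solved. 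Injectivity of $\ad f$ on $\g(j)$ then determines $X_{(j)}$ uniquely, and the induction terminates since $\n$ has finite grading depth.

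For \emph{(2)}, gauge invariance of the $w_i$ is immediate from the uniqueness in (1): for any $q\in\mathcal{F}_{\g,f}$, evaluating the universal identity via (\ref{eqn:functional_even}) yields a distinguished gauge-equivalent representative in $V$, and two equivalent elements produce the same representative, so each $w_i$ is constant on orbits. The leading-term claim $w_i=v^i+(\text{deg}\geq 2)$ follows by a grade-count: at lowest order the universal element contributes $\sum_i q_i\otimes q^i$, whose $V$-projection is exactly $\sum_{i\in J}q_i\otimes v^i$, while the correction forced by $X$ is at least quadratic in the generators $q^i$ since $X$ is built from $q_{\text{univ}}$ and has no constant term. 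For \emph{(3)}, any gauge-invariant $P\in\mathcal{V}(\p)$ agrees with its value after gauge-transforming $q$ to the canonical form $q^X_{\text{univ}}$, so $P$ rewrites as a differential polynomial in the $w_i$; conversely every $w_i$ is invariant by (2), and the triangular leading structure $w_i=v^i+\cdots$ together with the linear independence of $\{v^i\}_{i\in J}$ in $\g^f$ (Lemma \ref{Lem:2.15_1227}) ensures differential algebraic independence, giving the identification $\Cdiff[w_i\mid i\in J]$ of the gauge-invariant subalgebra, which matches $\WW(\g,f,k)$ by Proposition \ref{Prop:2.18_0731}.

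The main obstacle I anticipate is the bookkeeping in (1): one must verify that the obstruction $R_{j-1}$ genuinely lands in $[\n,f]$ rather than in all of $\m^\perp$, which requires carefully tracking how the $V$-component of previously adjusted terms interacts with the $\partial X$ contribution coming from $[X,k\partial]$, and doing so in a way that preserves the purely algebraic structure of the iteration. Once this is handled the rest of the argument is largely formal.
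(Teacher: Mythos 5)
Your proposal follows essentially the same route as the paper's own treatment of the super analogue (Proposition \ref{Prop:3.6_0212}, Lemma \ref{Lem:label_gen}, Proposition \ref{Prop:3.14_0730}): a graded induction on the $\ad\frac{h}{2}$-decomposition for existence and uniqueness of the canonical form, then evaluation/substitution to get gauge invariance and generation. Two points should be tightened. First, the ``main obstacle'' you anticipate is not an obstacle at all: you do not need the grade-$(j-1)$ obstruction to land in $[\n,f]$. Choosing $V$ to be $\ad h$-graded (as the paper does for $V_f$), you split the accumulated grade-$(j-1)$ term into its $V$-component and its $[\n,f]$-component; the $V$-component is simply \emph{retained} as the grade-$(j-1)$ part of $q^X_{\text{univ}}$, and only the $[\n,f]$-component must be cancelled by $[X_{(j)},f]$, which the injectivity of $\ad f$ on $\g(j)$, $j\geq 1/2$, guarantees with a unique $X_{(j)}$. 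This is exactly how the system (\ref{Eqn:3.12_0212}) is solved: $X_{(j)}$ and $Q^{\text{can}}_{j-1}$ are determined simultaneously.

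Second, your justification of the leading-term claim is incorrect as stated: the correction to $v^i$ is \emph{not} ``at least quadratic in the generators.'' The term $[X,k\partial]=-k\,\partial X$ produces contributions that are linear in the generators but carry a derivative; e.g.\ for $\sll_2$ one gets $\phi_f=f-x^2-k\partial x$, where $-k\partial x$ has polynomial degree one. The claim $w_i=v^i+(\text{degree}\geq 2)$ holds only in the grading where applying $\partial$ also raises the degree by one (so $\partial x$ has degree $2$), and under that convention every correction term --- products $[X,q_{\text{univ}}]$, iterated brackets, and $-k\,\partial X$ with $X$ having no constant term --- indeed has degree at least $2$. For the differential algebraic independence needed in part (3), the relevant triangularity is the one recorded in the paper's structure proposition: the correction $\psi_g$ involves only generators from $\g(i)$ with $i>1-\Delta_g$, i.e.\ of strictly smaller conformal weight, which is what allows the inductive inversion. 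With these two repairs your argument matches the paper's.
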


\begin{rem} We have the differential algebra isomorphism $\mathcal{V}(\p)\simeq S(\CC[\partial]\otimes \g)/\mathcal{I}=: \mathcal{V}_\II(\p)$, $A\mapsto \bar{A}$, where $\mathcal{I}$ is the ideal defined in Definition \ref{Def:2.11_0801}. Due to the isomorphism, 
\begin{enumerate}
\item we can consider a Lax operator $L$ an element in $\CC\partial \ltimes \m^\perp \otimes \mathcal{V}_{\II}(\p),$
\item since the W-algebra $\WW(\g,f,k)$ is a subalgebra of $\mathcal{V}_\II(\p)$, we prefer to regard $\WW(\g,f,k)$ as a set of functions from $\mathcal{F}_{\g,f}$ to $\mathcal{V}_{\II}(\p).$
\end{enumerate}
\end{rem}

\begin{thm}[\cite{DKV}]
The W-algebra $\WW(\g,f,k)$ is the set of gauge invariant functions in $\mathcal{V}_\II(\p).$ Hence we can find free generators by Theorem \ref{Thm:2.17_0730} .
\end{thm}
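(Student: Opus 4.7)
The plan is to derive the theorem from Proposition \ref{Prop:2.18_0731} via the identification $\mathcal{V}(\p)\simeq \mathcal{V}_\II(\p)$, and then to reinterpret gauge invariance in $\mathcal{V}_\II(\p)$ as the vanishing condition $\ad_\lambda \n\cdot P\subset \mathcal{I}[\lambda]$, which by Definition \ref{Def:2.11_0801} is exactly the membership condition for $P\in\WW(\g,f,k)$.

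First I would record the identification. Since $\mathcal{I}$ is generated by $m+(f|m)$ for $m\in\m$, every class in $\mathcal{V}_\II(\p)=\mathcal{P}/\mathcal{I}$ admits a unique representative in $\mathcal{V}(\p)=S(\CC[\partial]\otimes\p)$, giving the differential algebra isomorphism recorded in the Remark above the theorem. This lets us regard $P$ simultaneously as a functional on $\mathcal{F}_{\g,f}$ via (\ref{eqn:functional_even}) and as a class in $\mathcal{P}/\mathcal{I}$.

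Second I would translate the gauge action into $\lambda$-bracket language. Expanding $e^{\epsilon\ad A}(k\partial+q-f\otimes 1)$ to first order in $\epsilon$ produces the infinitesimal variation
\begin{equation*}
\delta_A q \,=\, [A,q-f\otimes 1]-k\partial A \,\in\, \m^\perp\otimes \mathcal{V}(\p),
\end{equation*}
which lands in $\m^\perp\otimes\mathcal{V}(\p)$ because the good grading forces $[\n,\m^\perp],[\n,f],\n\subset\m^\perp$. For $A=n\otimes F$ with $n\in\n$ and $F\in\mathcal{V}(\p)$, I would expand $\delta_A P$ by the chain rule in dual bases $\{q_i\}\subset\m^\perp$, $\{q^i\}\subset\p$ with $(q_i|q^j)=\delta_{ij}$. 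A direct comparison (the master formula for the affine PVA bracket of Example \ref{Ex:affine PVA}) shows that $\delta_A P$ coincides, modulo $\mathcal{I}$, with the value of $\{n_\lambda P\}=\sum_{j\ge 0}(n_{(j)}P)\lambda^j/j!$ obtained by substituting $\lambda^j/j!\mapsto \partial^j F$. Because $F$ ranges over all of $\mathcal{V}(\p)$, the condition $\delta_A P=0$ for every such $A$ is equivalent to $n_{(j)}P\in\mathcal{I}$ for every $j$, hence to $\{n_\lambda P\}\in\mathcal{I}[\lambda]$ for every $n\in\n$.

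Finally, each gauge transformation $q\mapsto q^A$ is the time-one map of the flow $\frac{d}{dt}q(t)=\delta_A q(t)$, so infinitesimal invariance integrates to full gauge invariance. This identifies the algebra of gauge invariants in $\mathcal{V}_\II(\p)$ with $(\mathcal{P}/\mathcal{I})^{\ad_\lambda\n}=\WW(\g,f,k)$, and the free generators then follow immediately by Theorem \ref{Thm:2.17_0730}. The main obstacle is the third step: verifying the master-formula identity that matches the infinitesimal gauge variation of $P$ with $\{n_\lambda P\}$ modulo $\mathcal{I}$; the remaining steps are routine given the sesquilinearity, Leibniz rules, and the good grading.
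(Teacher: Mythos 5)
Your proposal is correct and follows essentially the same route as the paper: the statement itself is cited from \cite{DKV}, but the paper's own proof of the super generalization (Theorem \ref{Thm:3.11_0730}, via Propositions \ref{Prop:3.3_0207} and \ref{Prop:3.4_0211}) does exactly what you describe --- it identifies the first-order term of $W(e^{\ad\epsilon X}\mathcal{L})$ with $\{n_\lambda W\}$ under the substitution $\lambda^t\mapsto -\partial^t r$ (your master-formula step, carried out there by induction on generators, derivatives and products), and then passes from infinitesimal to full gauge invariance. Your integration of the flow $\frac{d}{dt}q(t)=\delta_A q(t)$ replaces the paper's term-by-term argument that $W(\ad^n X(L))=0$ for all $n$, but this is only a cosmetic difference; the one point worth making explicit is that the identity with $\{n_\lambda P\}$ is proved at the universal point $q_{\text{univ}}$ and must be transported to an arbitrary $q$ by the substitution $q^i\mapsto Q^i$, as in part (ii) of the proof of Proposition \ref{Prop:3.4_0211}.
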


The following is the simplest example of classical affine W-algebras.

\begin{ex}
Let $\g=\sll_2.$ Then $q_{\text{univ}}=e\otimes f + h\otimes x$ for $x=\frac{h}{2}$ and
\[\mathcal{L}= k\partial+q_{\text{univ}} -f\otimes 1.\]
If we take $X= e\otimes x$ then $q^X_{\text{univ}}= e\otimes ( f-x^2-k\partial x )$. Hence $\Cdiff[ f-x^2-k\partial x]$ is the set of gauge invariant functions. Indeed, we can check that $\phi_f:=f-x^2-k\partial x$ is a gauge invariant function as follows: 

Let $Y= e\otimes r$ for $r\in \mathcal{V}_{\II}(\p)$. Then $q^Y= h\otimes (x-r)+e\otimes (f-k\partial r-2rx+r^2)$ and 
\[ \phi_f(q^Y)= (f-k\partial r-2rx+r^2)-(x-r)^2-k\partial(x-r)=f-x^2-k\partial x\]
which is independent on $r.$ Also, we can check that the algebra of differential polynomials $\Cdiff[\phi_f]$ is isomorphic to the W-algebra $\WW(\g,f,k)$ in Example \ref{Ex:W(sl2)}. 
\end{ex}

\subsection{Integrable Hamiltonian systems associated to W-algebras}\

Integrable Hamiltonian systems can be investigated by Poisson vertex algebras theories (\cite{BDK}). In this subsection, we briefly review basic notions related to integrable Hamiltonian systems.

\begin{defn} \label{Def:HEviaPVA} Let $\mathcal{P}$ be an (even) algebra of differential polynomials with a PVA structure.  
\begin{enumerate}
\item An evolution equation is called a {\it Hamiltonian system} on $\mathcal{P}$ if there is $h\in \mathcal{P}$ such that  
\[ \frac{du}{dt}= \{h_\lambda u\} |_{\lambda=0}, \quad \text{ for } u\in \mathcal{P}.\]
\item Consider the quotient map $\int : \mathcal{P} \to \mathcal{P}/\partial \mathcal{P}$. The image $\int f $ of $f\in \mathcal{P}$ is called a {\it local functional.} 
\item A Hamiltonian system is called an {\it integrable system} if there are infinitely many linearly independent integrals of motion $\int h_i$, $i\in \ZZ_{\geq 0}$. Here, an integral of motion $\int h_i$ is a local functional such that $\int  \frac{ dh_i}{dt}=0.$
\end{enumerate}
\end{defn}

In the rest of this subsection, consider the Laurent series $\g((z^{-1}))$ with the Lie bracket \[ \, [a z^n, b z^m]= [a,b] z^{n+m} \text{ for }a,b\in \g.\ \] 
 Recall $\WW(\g,f,k)$ is endowed with a bi-Poisson $\lambda$-bracket which is induced from that on $S(\CC[\partial]\otimes \g)$:
\begin{equation}
\begin{aligned}
\{a_\lambda b\}_1  = [a,b]+k\lambda(a|b),\qquad \{a_\lambda b\}_2 = (s|[a,b]).
\end{aligned}
\end{equation}

\begin{rem}[Lenard-Magri Scheme]
Let $\mathcal{P}$ be a PVA with the bi-Poisson $\lambda$-bracket $(\{\, _\lambda\, \}_1, \{\, _\lambda\, \}_2).$ Suppose there is a sequence of linearly independent local functionals $\int h_i\in \mathcal{P}/\partial \mathcal{P}$, $i=0,1,2,\cdots$ such that 
\[\text{(i) $\{h_{0\, \lambda\, } \mathcal{P}\}_2|_{\lambda=0}=0$, \ \ \  (ii) $\{h_{i\, \lambda\, } p\, \}_1|_{\lambda=0}=\{h_{i+1\, \lambda\, } p\}_2|_{\lambda=0}$ for $i\geq 0$ and $p\in \mathcal{P}$.}\]
Then $\frac{du}{dt}= \{h_{i\, \lambda\,} u\}_K|_{\lambda=0}$ for $i=0,1,2,\cdots$ are Hamiltonian integrable systems.  
\end{rem}

\begin{thm}[\cite{DKV}]
Suppose $\Lambda := f+sz\in \g((z^{-1}))$ is semisimple for $s\in \ker(\ad\, \n)$.
There is a sequence of integrable systems on $\WW(\g,f,k)$ which satisfies the assumptions of Lenard-Magri scheme. More specifically, consider  
\[ \mathcal{L}(\Lambda):=k\partial+q_{\text{univ}}-\Lambda\otimes 1= \mathcal{L}-zs\otimes 1\in \CC\partial \ltimes \g((z^{-1})) \otimes \mathcal{V}_{\II}(\p)\]
 and take $h(z)\in \big(\ker(\ad\Lambda) \cap \g[[z^{-1}]]\big) \otimes \mathcal{V}_\II(\p)$ such that 
$ e^{\ad S(z)}  \mathcal{L}(\Lambda)= k\partial + h(z)+\Lambda\otimes 1$
for some $S(z)\in  \g[[z^{-1}]] \otimes \mathcal{V}_\II(\p).$ Then $h_i=(z^i\Lambda\otimes 1| h(z))$ is an element in $\WW(\g,f,k)$ and the Hamiltonian equation 
\[ \frac{du}{dt}= \{h_{i\, \lambda\, } u\}_H|_{\lambda=0}\]
is an integrable system on $\WW(\g,f,k).$ Here, the bilinear form on $\g((z^{-1})) \otimes\mathcal{V}_\II(\p)$ is defined by $(az^n \otimes F | b z^m \otimes G)= (a|b) FG \delta_{n+m, 0}$ for $a,b\in \g$ and $F,G\in \mathcal{V}_\II(\p)$.
\end{thm}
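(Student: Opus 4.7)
The proof plan follows the Drinfeld-Sokolov template, translated into the PVA setup developed in the preceding subsections. I would organize the argument in three steps: construction of the formal diagonalizer $S(z)$, verification that the Hamiltonian densities $h_i$ are gauge invariant, and proof of the Lenard-Magri recursion together with linear independence.

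First I would build $S(z) = \sum_{n \geq 1} S_n z^{-n}$ with $S_n \in \g \otimes \mathcal{V}_\II(\p)$ and $h(z) = \sum_{n \geq 0} h^{(n)} z^{-n}$ with $h^{(n)} \in \ker(\ad\Lambda)\otimes \mathcal{V}_\II(\p)$ recursively. Expanding $e^{\ad S(z)}(k\partial + q_{\text{univ}} - \Lambda\otimes 1) = k\partial + h(z) + \Lambda\otimes 1$ and comparing coefficients of $z^{-n}$ yields an equation of the form $[\Lambda, S_n] + h^{(n)} = Y_n$, where $Y_n$ depends only on $S_1,\ldots,S_{n-1}$, $h^{(0)},\ldots,h^{(n-1)}$, $q_{\text{univ}}$, and $\partial$. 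The semisimplicity of $\Lambda$ in $\g((z^{-1}))$ provides a direct sum decomposition $\g \otimes \CC((z^{-1})) = \ker(\ad\Lambda) \oplus \mathrm{im}(\ad\Lambda)$, so one reads off $h^{(n)}$ as the $\ker(\ad\Lambda)$ component of $Y_n$ and solves for $S_n$ uniquely modulo $\ker(\ad\Lambda)$ in the $\mathrm{im}(\ad\Lambda)$ component.

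Second, I would show that each $h_i = (z^i\Lambda\otimes 1\,|\,h(z))$ lies in $\WW(\g,f,k)$, i.e.\ that $h_i$ defines a gauge-invariant function on $\mathcal{F}_{\g,f}$ in the sense of \eqref{eqn:functional_even}. If $A \in \n \otimes \mathcal{V}_\II(\p)$ generates a gauge transformation, the conjugated operator $e^{\ad A}\mathcal{L}(\Lambda)$ admits a diagonalizer $\widetilde{S}(z)$ obtained from $S(z)$ via Baker-Campbell-Hausdorff combination with $-A$, and the resulting $\widetilde{h}(z)$ differs from $h(z)$ only by adjoint action of an element of $\ker(\ad\Lambda)\otimes \mathcal{V}_\II(\p)$. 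Invariance of $(\,|\,)$ together with $[\Lambda,\ker(\ad\Lambda)] = 0$ then gives $(z^i\Lambda\,|\,\widetilde{h}(z)) = (z^i\Lambda\,|\,h(z))$, so $h_i$ is gauge invariant.

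The decisive step, and the one I expect to be the main obstacle, is the Lenard-Magri recursion. The strategy is to package the two brackets into the family $\{\,_\lambda\,\}^t = \{\,_\lambda\,\}_1 + t\{\,_\lambda\,\}_2$ appearing in the proof of Proposition \ref{Prop:second}. At the Lax operator level, varying $t$ corresponds to replacing $\Lambda$ by $\Lambda + ts = f + (z+t)s$, i.e.\ to shifting the spectral parameter $z \mapsto z+t$. Computing variational derivatives of $h_i$ via the master formula and comparing coefficients of $t^0$ and $t^1$ yields both the initial condition $\{h_{0\,\lambda\,}u\}_2|_{\lambda=0} = 0$ and the recursion $\{h_{i\,\lambda\,}u\}_1|_{\lambda=0} = \{h_{i+1\,\lambda\,}u\}_2|_{\lambda=0}$ for every $u \in \WW(\g,f,k)$, since the identity $z^i\Lambda(z+t) = z^i\Lambda + t\, z^{i-1}\cdot zs$ telescopes the pairing. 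Finally, linear independence of $\int h_i$ in $\WW(\g,f,k)/\partial \WW(\g,f,k)$ follows by tracking leading $z$-degree symbols: $h_i$ contributes a term of strictly increasing conformal weight with principal part in $\ker(\ad f)$, ruling out any nontrivial combination lying in $\partial \mathcal{V}_\II(\p)$. Combining the three steps produces the claimed sequence of integrable Hamiltonian systems.
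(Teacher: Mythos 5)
Your overall template is the right one and coincides in outline with what the paper does for the super analogue (Theorem \ref{Thm:5.10_0730}, built on Propositions \ref{prop:fund_Hamil}--\ref{Prop:5.9_0613}), but two of your steps contain genuine gaps. The first is in the construction of $S(z)$: if you expand purely in powers of $z^{-1}$, the recursion is not well-founded. Since $\Lambda=f+zs$, one has $[\Lambda, S_nz^{-n}]=[f,S_n]z^{-n}+[s,S_n]z^{-n+1}$, so the coefficient of $z^{-n}$ in $[\Lambda,S(z)]$ involves both $S_n$ and $S_{n+1}$; the system is coupled rather than triangular, and the claim that $Y_n$ depends only on $S_1,\dots,S_{n-1}$ is false. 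The standard fix, which the paper uses in Proposition \ref{prop:fund_Hamil}, is to grade $\g((z^{-1}))$ by $\deg z=-d-1$ together with the $\ad\frac{h}{2}$-eigenvalue on $\g$, so that $\Lambda$ is homogeneous and the recursion closes degree by degree.

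The more serious gap is in the Lenard--Magri step. The recursion $\{h_{i\,\lambda}u\}_1|_{\lambda=0}=\{h_{i+1\,\lambda}u\}_2|_{\lambda=0}$ does not follow formally from the spectral shift $z\mapsto z+t$ plus the master formula: one must first prove the identity
\[ \frac{\delta h_i}{\delta q}=\left.\Big(e^{-\ad S}\big(z^{i}\Lambda\otimes 1\big)\Big)\right|_{\p},\]
together with the fact that the discarded $\m$-component pairs to zero against $\frac{\delta u}{\delta q}$ for gauge-invariant $u$. This is the content of Lemma \ref{Lem:Hamiltonian} and Propositions \ref{Prop:5.7_0609}--\ref{Prop:5.9_0613}, and the computation (\ref{Eqn:5.8}) --- differentiating the diagonalized operator in $\epsilon$ and showing the extraneous terms are total derivatives --- is the technical heart of the whole proof; your proposal omits it entirely. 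Only with this identity does $\big[e^{-\ad S}(z^{i}\Lambda\otimes 1),\mathcal{L}(\Lambda)\big]=e^{-\ad S}\big[z^{i}\Lambda\otimes 1,\, k\partial+h(z)-\Lambda\otimes 1\big]=0$ split into powers of $z$ to yield both the recursion and the base case. Relatedly, your Step 2 argument is too weak as stated: conjugating the diagonalized operator by $e^{\ad T}$ with $T\in\ker(\ad\Lambda)\otimes\mathcal{V}_\II(\p)$ also shifts $h(z)$ by terms involving $\partial T$ coming from $k\partial$, so pairing with $z^{i}\Lambda\otimes 1$ gives invariance of $\int h_i$ only up to total derivatives. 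The paper instead uses the uniqueness of the $\ker(\ad\Lambda)$-component (Remark \ref{Rem:5.5_1011}) applied to the composite of the gauge transformation to canonical form with the diagonalization, which gives $h(q_{\text{univ}})\in\ker(\ad\Lambda)\otimes\WW(\g,f,k)$ exactly.
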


\begin{rem} (\cite{DKV}) \label{Rem:assumption}
It is a natural question to ask that if we can find a semisimple element $\Lambda= f+zs$ for a given nilpotent element $f$. In the case when $\g= \sll_n$, if $f$ corresponds to one of the following partitions $\lambda$ of $n$ then we can find such a semisimple element $\Lambda$.
\[ (1)\ \lambda=(r,r, \cdots, r, 1,1, \cdots, 1), \quad (2) \ \lambda= (r, r-1, r, r-1, \cdots, r, r-1, 1,1, \cdots, 1).\]
\end{rem}

\section{Classical affine W-superalgebras and generalized Drinfel'd-Sokolov reductions} \label{Sec:W-super}

In this section, we shall show a set of generators of a classical affine W-superalgebra as a superalgebra of differential  polynomials can be obtained by an analogous method to the generalized Drinfeld-Sokolov reduction. 

Let $\g$ be a simple Lie superalgebra with a nondegenerate invariant even supersymmetric bilinear form $(\, | \, )$ and let  $\mathcal{V}$ be a supersymmetric differential superalgebra. The vector superspace $\g \otimes \mathcal{V}$ is endowed with the Lie bracket and the bilinear form defined by 
\[ [ \, a\otimes F , \, b\otimes G]= (-1)^{\pa(b)\pa(F)} [a,b]\otimes FG,  \qquad (\, a\otimes F\, | \, b\otimes G\, )= (-1)^{\pa(b)\pa(F)} (a|b)FG \] 
for the homogeneous elements $a,b\in \g$ and $F,G\in \mathcal{V}.$
%Analogously, consider the vector superspace $\mathcal{V}\otimes \g$ endowed with the Lie bracket and the bilinear form 
%\[ [ \, F\otimes a,\, G\otimes b\, ] = (-1)^{\pa(a)\pa(G)} FG \otimes [a,b], \qquad ( \, F\otimes a\, | \, G\otimes b\, )= (-1)^{\pa(a)\pa(G)} FG(a|b) .\] 

Due to the invariance of the bilinear form on $\g$, we get the invariance of the bilinear form $(\, |\, )$ on $\g\otimes \mathcal{V}$ %and $\mathcal{V} \otimes \g$:
\begin{equation*}
\begin{aligned}
 & (\, a \otimes F\, | \,[ b\otimes G, c\otimes H]\, )=(\, [a \otimes F, b\otimes G]\, |\, c\otimes H]\, ),
% & (\, F\otimes a\, | \,[ G\otimes b, H\otimes c]\, )=(\, [F\otimes a, G\otimes b] \, | \, H\otimes c \, ),
 \end{aligned}
\end{equation*}
 for $a,b,c\in \g$ and $F,G,H\in \mathcal{V}$.

\vskip 2mm

Let us consider an even derivation $\partial:\mathcal{V}\to\mathcal{V}$ on $\mathcal{V}$. Then it can be extended to the map on $\g\otimes \mathcal{V}$ by $\partial(a\otimes F)= a \otimes \partial F$. The Lie superalgebra
\[ \CC\partial\ltimes (\g \otimes \mathcal{V})\]
is the semidirect product of the trivial Lie algebra $\CC\partial$ and the Lie superalgebra $\g \otimes \mathcal{V}$.

\vskip 2mm

Suppose the Lie superalgebra $\g$ has an $\sll_2$-triple $(e,h,f)$ with the even supersymmetric bilinear form $(\, |\, )$ such that $(e|f)= \frac{1}{2}(h|h)=1$. As in Section \ref{Subsec:D-S}, let $\m=\bigoplus_{i\geq 1} \g(i)$, $\m^\perp=\bigoplus_{i>-1}\g(i)$ and $\p= \bigoplus_{i< 1} \g(i)$, where $\g(i)$ is the $\ad \frac{h}{2}$ eigenspace with the eigenvalue $i$. Recall that  $\g=\m\oplus \p$ and $\p \simeq \m^\perp$ as vector superspaces via the bilinear form $(\, |\, )$ on $\g$. For the superspace $\m^\perp=\m^\perp_{\bar{0}}\oplus \m^\perp_{\bar{1}}$, there is a basis $\{\, q_i\, |\,  i\in I:=  I_{\bar{0}}\cup I_{\bar{1}}\, \}$  of $\m^\perp$ such that 
\begin{equation}
 \text{(i) }\{\, q_i\, |\, i\in I_{\bar{0}}\, \} \text{ is a basis of  } \m^\perp_{\bar{0}}, \ \text{ (ii) } \{\, q_i\, |\, i\in I_{\bar{1}}\, \}\text{ is a basis of } \m^\perp_{\bar{1}}.
 \end{equation}

%\vskip 2mm

%Consider the linear map  \[ \text{ $\text{rev}: \CC\partial\ltimes \g \otimes \mathcal{V}\to \CC\partial\ltimes \mathcal{V}\otimes \g$ \quad (resp. $\text{rev}: \CC\partial\ltimes \mathcal{V} \otimes \g \to \CC\partial\ltimes\g \otimes \mathcal{V}$)}\]
%such that  (i) $\partial \mapsto \partial$ and 
%\[ \text{(ii)} \ a\otimes F \ \mapsto \ (a\otimes F)^{\text{rev}}=F \otimes a \quad \text{(resp. $\text{(ii)} \ F\otimes a  \ \mapsto \ (F\otimes a )^{\text{rev}} = a\otimes F$)}. \] 

\begin{defn} \label{Def:Lax_super}
Let $\mathcal{V}(\p):= S(\CC[\partial]\otimes \p)$ be the differential superalgebra generated by the vector superspace $\p$.
A {\it Lax operator} $L$ is an even element in $\CC\partial \ltimes  \g \otimes  \mathcal{V}(\p) $ such that 
\[ \textstyle L= k\partial + \sum_{i\in I_{\bar{0}}}q_i \otimes P^i +\sum_{i\in I_{\bar{1}}} q_j\otimes P^j -f\otimes 1\in\CC \partial \ltimes \big( \g \otimes  \mathcal{V}(\p) \big)_{\bar{0}},\]
where $q= \sum_{i\in I_{\bar{0}}} q_i \otimes P^i +\sum_{i\in I_{\bar{1}}} q_j\otimes P^j \in \big( \m ^\perp \otimes \mathcal{V}(\p) \big)_{\bar{0}}$.
%A {\it reversed  Lax operator} $L^{\text{rev}}$ is 
%\[ \textstyle L^{\text{rev}}= \partial +\sum_{i\in I_{\bar{0}}}P^i \otimes q_i +\sum_{i\in I_{\bar{1}}} P^j\otimes q_j -1\otimes f\in\CC \partial \ltimes \big(  \mathcal{V}(\p) \otimes \g \big)_{\bar{0}}, \]
%where $q^{\text{rev}}= \sum_{i\in I_{\bar{0}}}P^i \otimes q_i +\sum_{i\in I_{\bar{1}}} P^j\otimes q_j \in \big( \mathcal{V}(\p)\otimes \m^\perp \big)_{\bar{0}}.$ Note that  $P^i\in \mathcal{V}(\p)_{\bar{0}}$ and $P^j\in \mathcal{V}(\p)_{\bar{1}}.$ 
\end{defn}

%\begin{rem}
%When $\g$ is a Lie algebra, reversed Lax operators are not needed. Also, when $\g$ is a Lie superalgebra, we can find free generators of $\WW(\g,f,1)$ via Lax operators, which is the same argument as the case of Lie algebra. (See Proposition \ref{Prop:3.14_0730}.) However, reversed Lax operators are better to see $\lambda$-brackets on $\WW(\g,f,1)$ and integrable systems associated to the algebra. Hence, in later sections, we mainly use reversed Lax operators.
%\end{rem}

\begin{rem}\label{Rem:3.4}
 Let $\mathcal{I}$ be the differential superalgebra ideal of $\mathcal{V}(\g)$ generated by $\{m+(f|m)|m\in \m\}.$  Denote $\mathcal{V}_\II(\p) :=\mathcal{V}(\g)/\II$. Then we can check the following facts:
 \begin{enumerate}
\item $ \mathcal{V}(\p)\simeq \mathcal{V}_\II(\p)$ 
as differential superalgebras by the canonical isomorphism $\iota: \mathcal{V}(\p) \to \mathcal{V}_{\II}(\p)$. If there is no danger of confusion then we denote $\iota(P)\in \mathcal{V}_\II(\p)$ by $P$.
\item We can regard Lax operators as elements in $\CC\partial \ltimes  \big(  \m^\perp \otimes \mathcal{V}_{\II}(\p) \big)_{\bar{0}}$ via the isomorphism $\iota$ in (1).
\end{enumerate}
\end{rem}

Recall that the W-superalgebra $\WW(\g,f,k)$ is a subset of $\mathcal{V}_{\II}(\p)$. In order to see the relation between W-superalgebras and Lax operators, we use (2) in Remark \ref{Rem:3.4}. \\
% and regard an element $P\in \mathcal{V}_{\II}(\p)$ as a function 
%\[ P=\iota(P): \partial \ltimes \g \otimes \mathcal{V}_{\II}(\p)\to \mathcal{V}_\II(\p)\] induced by the functional $P\in \mathcal{V}(\p)$ and the isomorphism $\iota$ in Remark \ref{Rem:3.4}.

A Lax operator $L$  acts on $\g\otimes \mathcal{V}_{\II}(\p)$ by 
\[ L(a \otimes F):= [L, a\otimes F].\]
%Using Definition \ref{Def:Lax_super}, we can consider the gauge transformation as follows. 
Consider the phase space 
\[ \mathcal{F}_{\g,f}:= \big(  \m^\perp \otimes \mathcal{V}_{\II}(\p) \big)_{\bar{0}}.\]
Then for any $q\in \mathcal{F}_{\g,f}$, there is the corresponding Lax operator $L= k\partial+q-f\otimes 1$. Note that, for a Lax operator $L$ and an element  $X\in \big(\n \otimes  \mathcal{V}_{\II}(\p) \big)_{\bar{0}}$, there is  $q^X \in \mathcal{F}_{\g,f}$ such that
\begin{equation}\label{Eqn:GaugeTransf}
 e^{\ad X} L =e^{\ad X} (k\partial + q -f \otimes 1)= k\partial + q^X -f \otimes 1.
 \end{equation}
 Hence $e^{\ad X} L $ is again a Lax operator. 

\begin{defn}\ 
\begin{enumerate}
\item Let $q\in \mathcal{F}_{\g,f}$ and $X\in \big(\n \otimes  \mathcal{V}_{\II}(\p) \big)_{\bar{0}}$. 
Then $q^X\in \mathcal{F}_{\g,f}$ defined as in (\ref{Eqn:GaugeTransf}) is said to be the {\it gauge transformation} of $q\in \mathcal{F}_{\g,f}$ by $X$. 
\item
For two elements $q, q'\in \mathcal{F}_{\g,f}$, if there is an element $Y\in \big(  \n \otimes \mathcal{V}_{\II}(\p) \big)_{\bar{0}}$ such that $q^Y=q'$ then we say $q$ and $q'$ are {\it gauge  equivalent} and write $q\sim q'.$ 
\item 
The {\it universal Lax operator} associated to $\g$ and $f$ is 
\begin{equation} \label{Eqn:univ}
\textstyle  \mathcal{L} = k\partial +q_\text{univ}-1\otimes f=k
 \partial +\sum_{i\in I= I_{\bar{0}}\cup I_{\bar{1}}} q_i \otimes q^i - f\otimes 1,
 \end{equation}
where $\{q_i\}_{i\in I}$ and $\{q^i\}_{i\in I}$  are bases  of $\m^\perp$ and $\p,$ such that $(q_i|q^j)= \delta_{ij}$.  
\end{enumerate}
\end{defn}

\begin{rem}
Since the bilinear form $(\,|\,)$ defined on the Lie superalgebra $\g$ is {\it even}, the universal Lax operator  in (\ref{Eqn:univ}) is even. Hence it is a Lax operator.
\end{rem}

Now we identify an element in $\mathcal{V}_{\II}(\p)$  with a linear map $\partial\ltimes (\g \otimes \mathcal{V}_{\II}(\p))_{\bar{0}}  \to \mathcal{V}_{\II}(\p)$ defined by (\ref{Eqn:functional_even+odd}) and (\ref{Eqn:functional_even+odd_extend}):
%\begin{equation}
%p(q)=(1\otimes p | q), \quad p(\partial)= 0, \quad PQ(q)=P(q)Q(q), \quad (\partial P)(q)= \partial (P(q))
%\end{equation}
\begin{equation} \label{Eqn:functional_even+odd}
p(\partial)= 0, \quad c(q)=c, \quad  p(a\otimes F)= F(a| p )=(a|p)F,
\end{equation} 
for $c\in \CC$, $p\in \p\subset \mathcal{V}_{\II}(\p) $ and $a\otimes F \in (\g \otimes \mathcal{V}_{\II}(\p))_{\bar{0}}$. For $P, Q\in \mathcal{V}_{\II}(\p)$, we have 
\begin{equation}\label{Eqn:functional_even+odd_extend}
 PQ(a\otimes F)=P(a\otimes F) Q(a\otimes F), \quad  \partial P (a\otimes F)= \partial (P (a\otimes F)).
 \end{equation}

%\begin{equation}
 %\bar{P}: \partial \ltimes \g \otimes \mathcal{V}(\p)\to \mathcal{V}(\p).
 %\end{equation}
% induced from the function $P.$
 %More precisely,  for the map $\iota: \mathcal{V}(\p) \to \mathcal{V}_{\II}(\p)$ such that $P\mapsto \overline{P}$, we have 
 %\begin{equation}\label{Eqn:3.4_0802}
  % P(q)= \iota\big( \underline{P}(q) \big). 
  % \end{equation}

\begin{rem}
 If $\g$ is even then $p(a\otimes F)= (p | a)F=(a|p)F$. Hence (\ref{Eqn:functional_even+odd}) and (\ref{Eqn:functional_even+odd_extend}) define the same functions as those in (\ref{eqn:functional_even}). If $\g$ is not even and $a\otimes F$ is an even element in $\g \otimes \mathcal{V}_{\II}(\p)$ then 
\[ p(a\otimes F) = (a|p)F= (-1)^{\pa(p)}(p|a)F.\]
Here, the last equality holds since $(p|a)\neq 0$ implies $\pa(p)=\pa(a)$.  The reason we consider the definition $p(a\otimes F) := (a|p)F$ instead of $p(a\otimes F) := (p|a)F$  can be explained by the proof of Lemma \ref{Lem:basic1} and Proposition \ref{Prop:3.3_0207}.
\end{rem}
% Moreover, we let 
%\[ L(p):=q(p)-(1\otimes f)(p) \quad \text{ for } \quad L= \partial +q-1\otimes f.\]

\begin{defn}
A function $P\in \mathcal{V}_{\II}(\p)$ is said to be  {\it gauge invariant } if $P(q)=P(q')$ for any gauge equivalent elements $q$ and $q'$ in $\mathcal{F}_{\g,f}$.
\end{defn}

\begin{prop}
The subset of $\mathcal{V}_{\II}(\p)$ consisting of gauge invariant functions is a differential superalgebra .
\end{prop}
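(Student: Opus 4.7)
The plan is to verify the three closure properties---sum, product, and $\partial$---directly from the defining formulas (\ref{Eqn:functional_even+odd}) and (\ref{Eqn:functional_even+odd_extend}). The key first step I would make is to observe that for each fixed $q\in \mathcal{F}_{\g,f}$, those formulas make the evaluation map $\mathrm{ev}_q: \mathcal{V}_\II(\p) \to \mathcal{V}_\II(\p)$, $P\mapsto P(q)$, into a homomorphism of differential superalgebras: the multiplicativity $(PQ)(q) = P(q)Q(q)$ and the intertwining $(\partial P)(q) = \partial(P(q))$ are precisely what (\ref{Eqn:functional_even+odd_extend}) records, and additivity is automatic from $\CC$-linearity.

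Granted this, a function $P\in \mathcal{V}_\II(\p)$ is gauge invariant precisely when $P$ lies in the equalizer
\[ E_{q,q'} := \{\, R \in \mathcal{V}_\II(\p) \mid \mathrm{ev}_q(R) = \mathrm{ev}_{q'}(R)\, \} \]
for every pair with $q\sim q'$. For any two differential superalgebra homomorphisms, the equalizer is automatically a differential sub-superalgebra, since the conditions $\mathrm{ev}_q(P)=\mathrm{ev}_{q'}(P)$ and $\mathrm{ev}_q(Q)=\mathrm{ev}_{q'}(Q)$ immediately force the analogous equalities for $P+Q$, $PQ$, and $\partial P$. Intersecting the $E_{q,q'}$ over all gauge-equivalent pairs yields exactly the set of gauge invariant functions, which is therefore a differential sub-superalgebra of $\mathcal{V}_\II(\p)$.

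The only point requiring care---which I expect to be the main thing to write out, rather than a genuine obstacle---is that (\ref{Eqn:functional_even+odd}) specifies $p(a\otimes F)$ only on pure tensors, whereas a general $q=\sum_i q_i\otimes P^i\in \mathcal{F}_{\g,f}$ is a sum. One must extend the definition $\mathcal{V}_\II(\p)$-linearly in the coefficient slot and check that the extension respects the differential superalgebra structure. This follows from the fact that $\mathcal{V}_\II(\p)$ is freely generated as a differential superalgebra by $\p$ (via the canonical isomorphism $\iota$ in Remark \ref{Rem:3.4}(1)), so $\mathrm{ev}_q$ is determined by its values on the generators $p\in \p$ and then extended uniquely via (\ref{Eqn:functional_even+odd_extend}). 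Parity subtleties do not enter because $\mathcal{F}_{\g,f}=(\m^\perp\otimes \mathcal{V}_\II(\p))_{\bar 0}$ is even by construction, so no sign factors arise when substituting the components of $q$ into $P$.
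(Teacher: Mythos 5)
Your proposal is correct and is essentially the paper's argument made explicit: the paper simply asserts that products and derivatives of gauge invariant functions are gauge invariant, which is exactly your observation that each evaluation map $\mathrm{ev}_q$ is a differential superalgebra homomorphism (evaluation being substitution $q^i\mapsto Q^i$, cf.\ (\ref{Eqn:compute functions})) and that equalizers of such homomorphisms are closed under sums, products, and $\partial$. No further comment is needed.
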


\begin{proof}
It is clear that if $P,Q\in \mathcal{V}_{\II}(\p)$ are gauge invariant then $PQ$ and $\partial P$ are also gauge invariant.
\end{proof}

We note that
\begin{equation}
P(\mathcal{L})= P \text{ for } P\in \mathcal{V}_{\II}(\p)
\end{equation}
 since an element $p\in \p\subset \mathcal{V}_{\II}(\p)$ has properties $(f|p)= 0$ and $p(\partial)=0$ so that   
\[
\textstyle p(\mathcal{L})  = p(q_{\text{univ}})=\sum_{i\in I_{\bar{0}}\cup I_{\bar{1}}} q^i(q_i | p)=p.
\]
Also, a Lax operator $L= k\partial + Q -f\otimes 1$ with $Q=\sum_{i\in I} q_i \otimes Q^i  \in \mathcal{F}_{\g,f}$ satisfies  
\begin{equation}\label{Eqn:compute functions}
 P(Q)=P(L)= P(\mathcal{L})|_{q^i=Q^i} \text{ for any }P\in \mathcal{V}_\II(\p).
\end{equation}
 Here, the subscript $q^i= Q^i$ means that we substitute $q^i$ by $Q^i.$
 
 \vskip 2mm
 
 Now, the following lemma is useful to see detailed computations in the proof of Proposition \ref{Prop:3.3_0207}.

\begin{lem} \label{Lem:basic1}
Let $X= n\otimes r \in (\n \otimes \mathcal{V}_{\II}(\p))_{\bar{0}}$ and $p\in \mathcal{V}_{\II}(\p)$. Then we have
\[ p([X, \mathcal{L}])=-k \partial r (n|p)-r[n,p] \in \mathcal{V}_{\II}(\p).\]
\end{lem}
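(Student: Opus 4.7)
My plan is to split $\mathcal{L}$ into its three summands $k\partial$, $q_{\text{univ}}$, and $-f\otimes 1$, compute $[X,\cdot]$ on each piece separately using the semidirect-product bracket, and then evaluate the functional $p(\,\cdot\,)$ term by term. Since both $X$ and $\partial$ are even, sesquilinearity in $\CC\partial \ltimes (\g\otimes \mathcal{V}_\II(\p))$ gives $[n\otimes r, k\partial] = -k(n\otimes \partial r)$, which after applying $p$ contributes exactly $-k(n|p)\partial r$, the first term of the claimed identity. The commutator with $-f\otimes 1$ is $-[n,f]\otimes r$, contributing $-([n,f]|p)\,r$. The commutator with $q_{\text{univ}} = \sum_{i\in I} q_i \otimes q^i$ is $\sum_i (-1)^{\pa(q_i)\pa(r)}[n,q_i]\otimes rq^i$, contributing $\sum_i (-1)^{\pa(q_i)\pa(r)}([n,q_i]|p)\,rq^i$.

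Next I would simplify the middle sum using the dual-basis identity. For any $y\in \g$ decomposed as $y=y_\m + y_\p$ via $\g = \m\oplus \p$, the relation $(\m^\perp|\m)=0$ together with $(q_i|q^j)=\delta_{ij}$ yields $\sum_i (q_i|y)q^i = y_\p$. Combining this with invariance $([n,q_i]|p) = (n|[q_i,p])$ and a parity reshuffle (streamlined by the evenness of $X$, which forces $\pa(n)=\pa(r)$, and by the fact that only $q_i$ with parity matching $[n,p]$ survive the pairing with $p$), I expect the middle contribution to collapse to $-r[n,p]_\p$. For the third piece, invariance and super-skew-symmetry give $([n,f]|p) = -(f|[n,p])$, so this contribution equals $r(f|[n,p])$. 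Since $f\in \g(-1)$ pairs trivially with $\p = \bigoplus_{i\leq 1/2}\g(i)$, one has $(f|[n,p]_\p)=0$, hence $(f|[n,p]) = (f|[n,p]_\m)$.

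Finally I would invoke the defining relation of $\mathcal{V}_\II(\p) = \mathcal{V}(\g)/\II$: every $m\in \m$ is identified with $-(f|m)$, so that $[n,p] = [n,p]_\p - (f|[n,p]_\m) = [n,p]_\p - (f|[n,p])$ as an element of $\mathcal{V}_\II(\p)$. Summing the three contributions gives
\[ p([X,\mathcal{L}]) = -k(n|p)\partial r - r[n,p]_\p + r(f|[n,p]) = -k\,\partial r\,(n|p) - r[n,p], \]
which is the desired identity. The main technical obstacle I anticipate is the consistent bookkeeping of super-signs arising from the bracket $[a\otimes F, b\otimes G] = (-1)^{\pa(b)\pa(F)}[a,b]\otimes FG$, the supersymmetry $(a|b) = (-1)^{\pa(a)\pa(b)}(b|a)$, and the functional rule $p(a\otimes F) = (a|p)F$; evenness of $X$ and the parity constraints imposed by nonvanishing of $(\cdot|p)$ are what make these signs cancel cleanly.
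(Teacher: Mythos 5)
Your proposal is correct and follows essentially the same route as the paper: expand $[X,\mathcal{L}]$ term by term using the semidirect-product bracket (with $\pa(n)=\pa(r)$ killing the extra signs) and then apply the functional $p$. You in fact supply more detail than the paper does — the dual-basis collapse $\sum_i(q_i|[n,p])q^i=[n,p]_\p$ and the reduction of $[n,p]_\m$ modulo $\II$ are exactly the steps the paper leaves implicit in its final equality — and your signs are consistent with the stated conclusion.
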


\begin{proof}
We have 
\begin{equation}
\textstyle \, [X, \mathcal{L}]= -n\otimes k\partial r + [n\otimes r, q_{\text{univ}}-f\otimes 1]   =-n\otimes k\partial r - \sum_{i\in I} [q_i, n]\otimes r  q^i - [n, f]\otimes r
 \end{equation}
 since $\pa(q_i)=\pa(p^i)$ and $\pa(n)=\pa(r).$ Hence
 \begin{equation}
 \begin{aligned}
  p([X, \mathcal{L}]) & =-k\partial r ( n|p)-  r \,  q^i  (\, [q_i,n] \, |\, p\, )+ r \, (\, [n,f]\, |\, p\, )= -k \partial r (n|p) - r [n,p] \in \mathcal{V}_{\II}(\p).
  \end{aligned}
  \end{equation}
\end{proof}

\begin{prop} \label{Prop:3.3_0207}
If $W\in \mathcal{V}_\II(\p)$ is  a gauge invariant function then $W\in \WW(\g,f,k).$
\end{prop}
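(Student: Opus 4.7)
The plan is to promote the gauge invariance of $W$ to the statement $\{n_\lambda W\}=0$ in $\mathcal{V}_\II(\p)[\lambda]$ for every $n\in\n$, which is exactly the defining condition for $W\in \WW(\g,f,k)$.

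First, I would extract an infinitesimal consequence of gauge invariance. For each even element $Y=n\otimes r\in (\n\otimes \mathcal{V}_\II(\p))_{\bar 0}$ with $n\in\n$, the family $q_{\text{univ}}^{sY}$ is polynomial in $s\in\CC$: iterated brackets $(\ad Y)^N\mathcal{L}$ eventually land in $\m\otimes\mathcal{V}_\II(\p)$, whose $\m$-component is identified with a constant in $\mathcal{V}_\II(\p)$. Gauge invariance forces $W(q_{\text{univ}}^{sY})=W(q_{\text{univ}})=W$ for every $s$, and differentiating at $s=0$ yields $D_Y W=0$, where $D_Y$ is the unique even derivation of $\mathcal{V}_\II(\p)\cong S(\CC[\partial]\otimes\p)$ extending $p\mapsto p([Y,\mathcal{L}])$ on generators $p\in\p$. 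Lemma~\ref{Lem:basic1} gives the explicit formula
\[
D_{n\otimes r}(p)=-k(\partial r)(n|p)-r[n,p]\quad\text{in }\mathcal{V}_\II(\p).
\]

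Second, I would match $D_{n\otimes r}$ with a derivation built from the $\lambda$-bracket $\{n_\lambda\cdot\}$ of the affine PVA $\mathcal{P}$. In $\mathcal{P}$ one has $\{n_\lambda p\}=[n,p]+k\lambda(n|p)$. Writing $\{n_\lambda W\}=\sum_{k\geq 0}(\lambda^k/k!)\,n_{(k)}W$, the data $\{n_{(k)}W\}_{k\geq 0}$ can be paired with $r$ to produce an even derivation $E_{n,r}$ of $\mathcal{V}_\II(\p)$ in $W$, the derivation property coming from the super-Leibniz rule and sesquilinearity for $\lambda$-brackets. A parity-careful computation on the generating set $\p$—combined with the identification $m\equiv -(f|m)$ for $m\in\m$ in $\mathcal{V}_\II(\p)$—shows that $D_{n\otimes r}$ and $E_{n,r}$ agree on generators up to a uniform overall sign. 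Since both are even derivations of the free differential superalgebra $\mathcal{V}_\II(\p)$, this equality on generators propagates to all of $\mathcal{V}_\II(\p)$, so $D_{n\otimes r}W=0$ translates into a specific $r$-twisted linear combination of the coefficients $n_{(k)}W$ being zero.

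Third, I would extract each $n_{(k)}W$ separately. For fixed $W$, only finitely many $n_{(k)}W$ are non-zero, and they all lie in a finitely generated sub-differential-superalgebra of $\mathcal{V}_\II(\p)=\Cdiff[\text{basis of }\p]$. Choosing $r$ to be a generator $y$ (adjoining a fresh one formally if necessary) not appearing in any $n_{(k)}W$, the resulting identity is a linear relation among $y,\partial y,\partial^2 y,\dots$ with coefficients $n_{(k)}W$ not involving $y$; algebraic independence forces $n_{(k)}W=0$ for every $k\geq 0$. Hence $\{n_\lambda W\}=0$ in $\mathcal{V}_\II(\p)[\lambda]$, equivalently $\{n_\lambda W\}\in\II[\lambda]$ in $\mathcal{P}[\lambda]$, for every $n\in\n$, and therefore $W\in\WW(\g,f,k)$.

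The hard part will be Step~2: tracking the parity signs when verifying that $D_{n\otimes r}$ coincides with $\pm E_{n,r}$ on the generators $p\in\p$. The $\m$-component of $[n,p]$ is replaced by the scalar $-(f|[n,p])$ in $\mathcal{V}_\II(\p)$, mixing the $\p$-projection of the bracket with an ideal-theoretic contribution, and this must be reconciled with the signs produced by super-skew-symmetry and the Leibniz rule on $\lambda$-brackets. Once this match is established on generators, the derivation property automatically propagates the identity to all of $\mathcal{V}_\II(\p)$, and the final extraction step is routine.
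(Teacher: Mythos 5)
Your proposal is correct and follows essentially the same route as the paper: you linearize gauge invariance to get the first-order variation $D_YW=0$ (the paper's $P^W_1$), identify this derivation with the $\lambda$-bracket action $\{n_\lambda\cdot\}$ by checking on generators via Lemma \ref{Lem:basic1} and propagating through sesquilinearity and the Leibniz rule, and then extract the vanishing of each coefficient $n_{(k)}W$. Your Step 3 (choosing $r$ to be a fresh variable to separate the coefficients) merely makes explicit a point the paper's proof of the equivalence (\ref{Eqn:3.8_0103}) leaves implicit, so the two arguments are the same in substance.
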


\begin{proof}
Let $X=  n  \otimes  r \in (\n \otimes \mathcal{V}_{\II}(\p))_{\bar{0}}$ and $\epsilon\in \CC$. For $W\in \mathcal{V}_{\II}(\p)$, we denote 
\begin{equation} \label{Eqn:3.6_0103}
\textstyle W ( \mathcal{L}+\epsilon[X, \mathcal{L}] +\frac{1}{2}\epsilon^2 [X,[X,\mathcal{L}]]+\cdots ) = \sum_{t\geq 0} \epsilon^t P^W_t.
\end{equation}
for some $P^W_t \in \mathcal{V}_\II(\p).$

If $W$ is a gauge invariant function, we have
\begin{equation} \label{Eqn:3.5_0103}
W( e^{\ad \epsilon X} (\mathcal{L}))= W
\end{equation}
for any $\epsilon\in \CC.$  Since (\ref{Eqn:3.5_0103}) implies $P^W_t=0$ for any $t\geq 1$, it is enough to show that 
\[\text{  $P^W_1=0$ implies $W\in \WW(\g,f,k).$ }\]

 In order to show that $P^W_1=0$, let us  denote by $[ A_\lambda B]_{\II}\in\mathcal{V}_{\II} (\p) [\lambda]$, for $A,B\in \mathcal{V}(\g)$, the image of $[ A_\lambda B]\in \mathcal{V}(\g)[\lambda]$. In other words,
\begin{equation}
[ A_\lambda B]_\II=[ A_\lambda B]+  \II [\lambda].
\end{equation}
Note that $[\, _\lambda\, ]_{\II}$ induces the well-defined $\lambda$-bracket on $\WW(\g,f,k)$ since 
\begin{equation} \label{Eqn:brac_well}
 [ n_\lambda P(m+(f|m)) ]_\II= 0+\II[\lambda] \text{ \  for \ } P \in \mathcal{V}(\g)=S(\CC[\partial]\otimes \g) \text{ and } m\in \m
 \end{equation}
 so that $[n_\lambda \II ] = \II[\lambda].$
 
 By the definition of $\WW(g,f,k),$ an element $W \in \mathcal{V}_{\II}(\p)$ is in $\WW(\g,f,k)$ if and only if  $[n_\lambda \widetilde{W}]_\II=0$, where $\widetilde{W}$ is  an element in $\mathcal{V}(\p)$ such that $\iota(\widetilde{W})=W$ for the map $\iota$ in Remark \ref{Rem:3.4}. Thus, it is enough to show that  
  \begin{equation}\label{Eqn:3.8_0103}
\textstyle  P^W_1= -\sum_{t\geq 0} (\partial^t r) P^W_{1t} \quad \text{ if and only if  }\quad [n_\lambda \widetilde{W}]_{\II}= \sum_{t\geq 0} \lambda^t \cdot P^W_{1t}.
 \end{equation}
 \vskip 1mm
 
 Now, observe the following facts.
 
 \begin{enumerate}[(i)]
\item If we substitute $W$ with $p\in \p\subset \mathcal{V}_\II(\p)$ in (\ref{Eqn:3.6_0103}) then, by Lemma \ref{Lem:basic1}, we have $P^W_1= -k\partial r(n|p)-r[n,p]\in \mathcal{V}_{\II}(\p).$ On the other hand, take $\widetilde{W} \in \mathcal{V}(\p)$ such that $\iota(\widetilde{W})=p$. Since $[n_\lambda \widetilde{W}]_{\II}= [n,p]+k\lambda(n|p) \in \mathcal{V}_{\II}(\p)[\lambda]$, we have (\ref{Eqn:3.8_0103}).
\item   If $W=\partial^m p$ then $P^W_1= -\partial^m(k \partial r(n|p)-r[n,p]).$ In this case, the sesquilinearity $[n_\lambda \partial^m p]= (\lambda+\partial)^m [n_\lambda p]$ implies (\ref{Eqn:3.8_0103}).
\item  Suppose $W=BC$ for homogeneous elements $B,C\in \mathcal{V}_\II(\p)$. Since 
\[ W(e^{\ad \epsilon X} \mathcal{L})= B(e^{\ad \epsilon X} \mathcal{L}) \cdot  C(e^{\ad \epsilon X} \mathcal{L}),\]
we have $ P^W_1= B P^C_1 + P^B_1 C. $
Denote $P^B_1= \sum_{t\geq 0} (\partial^t r) P^B_{1t}$  and $P^C_1= \sum_{t\geq 0} (\partial^t r) P^C_{1t}$.  Then
\begin{equation} \label{Eqn:3.17_0802}
\textstyle P^W_1= \sum_{t\geq 0}  \big[ (-1)^{\pa(r)\pa(B)} (\partial^t r)B \, P^C_{1t} + (\partial^{t} r) P^B_{1t} \, C \big].
\end{equation}
On the other hand, by the Leibniz rule, we have
\begin{equation} \label{Eqn:3.18_0802}
\begin{aligned}
\, [n_{\lambda}\widetilde{W}]_{\II}=[n_{\lambda}\widetilde{B}\widetilde{C}]_{\II}& = (-1)^{\pa(r)\pa(B)}B[n_\lambda \widetilde{C}]_{\II} + [n_\lambda \widetilde{B}]_{\II} C \\
& = \textstyle - \sum_{t\geq 0}  \big[ (-1)^{\pa(r)\pa(B)} B \, \lambda^t P^C_{1t} + \lambda^t P^B_{1t} \, C \big] ,
\end{aligned}
\end{equation}
for $\widetilde{B}, \widetilde{C}\in \mathcal{V}(\p)$ such that $\iota(\widetilde{B})=B$ and $\iota(\widetilde{C})=C$.
It is not hard to see  (\ref{Eqn:3.17_0802}) and (\ref{Eqn:3.18_0802}) imply (\ref{Eqn:3.8_0103}). 
\end{enumerate}

By (i), (ii), (iii) and the induction, we prove the proposition.
\end{proof}

\begin{prop} \label{Prop:3.4_0211}
If $W\in \WW(\g,f,k)$ then $W$ is a gauge invariant function in $\mathcal{V}_\II(\p).$
\end{prop}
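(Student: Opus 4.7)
By Definition \ref{Def:2.11_0801}, the hypothesis $W\in \WW(\g,f,k)$ says $[n_\lambda \widetilde W]_\II=0$ in $\mathcal{V}_\II(\p)[\lambda]$ for every $n\in \n$. The plan is to deduce gauge invariance in three steps: (i) reuse the equivalence (\ref{Eqn:3.8_0103}) from the proof of Proposition \ref{Prop:3.3_0207} to conclude that the first-order variation of $W$ at $\mathcal{L}$ vanishes; (ii) transfer this to vanishing of the first-order variation at every Lax operator via a universal substitution argument; and (iii) integrate along the one-parameter family $\epsilon\mapsto e^{\ad \epsilon X}\mathcal{L}$. By linearity it suffices to take $X=n\otimes r$ with $n\in \n$ homogeneous, and since $\ad X$ is locally nilpotent, $F(\epsilon):=W(e^{\ad \epsilon X}\mathcal{L})$ lies in $\mathcal{V}_\II(\p)[\epsilon]$, so the whole argument stays polynomial in $\epsilon$.

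\textbf{Execution.} Writing $[n_\lambda \widetilde W]_\II=\sum_{t}\lambda^t P^W_{1t}$, the equivalence (\ref{Eqn:3.8_0103}) gives $F'(0)=P^W_1=-\sum_{t}(\partial^t r)P^W_{1t}$, in which the coefficients $P^W_{1t}\in \mathcal{V}_\II(\p)$ do not involve $r$. Hence the hypothesis $[n_\lambda \widetilde W]_\II=0$ forces $F'(0)=0$ for every choice of $r$. To upgrade this to $F'(\epsilon)=0$ for all $\epsilon$, the group law $e^{\ad(\epsilon+\delta)X}=e^{\ad \delta X}e^{\ad \epsilon X}$ rewrites $F'(\epsilon)$ as the first-order gauge variation $\delta_X W(L_\epsilon)$ at $L_\epsilon:=e^{\ad \epsilon X}\mathcal{L}$, so it suffices to show $\delta_X W(L_0)=0$ for every Lax operator $L_0=k\partial+\sum_i q_i\otimes Q_0^i-f\otimes 1$. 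For this, introduce the differential superalgebra endomorphism $\phi:\mathcal{V}_\II(\p)\to \mathcal{V}_\II(\p)$ with $\phi(q^i)=Q_0^i$, so that $(\mathrm{id}_\g\otimes \phi)\mathcal{L}=L_0$. Treating $r,\partial r,\dots$ as formal variables fixed by $\phi$, a direct computation generalizing Lemma \ref{Lem:basic1} gives $q^i([X,L_0])=\phi\bigl(q^i([X,\mathcal{L}])\bigr)$, and the chain rule then yields $\delta_X W(L_0)=\phi(\delta_X W(\mathcal{L}))=\phi(P^W_1)$. Since $P^W_1=0$ as a polynomial in $\partial^s r$ with coefficients in $\mathcal{V}_\II(\p)$, so is $\phi(P^W_1)$; specializing $r$ back to any element of $\mathcal{V}_\II(\p)$ gives $\delta_X W(L_0)=0$. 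Thus $F'\equiv 0$, and combined with $F(0)=W(\mathcal{L})=W$ this gives $F(1)=W(e^{\ad X}\mathcal{L})=W$, as required.

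\textbf{Main obstacle.} The hard part is the substitution argument in step (ii): the element $r\in \mathcal{V}_\II(\p)$ may itself be a polynomial in the generators $q^i$, so naively applying $\phi$ would also act on $r$. The clean fix is to work throughout in the enlarged polynomial ring $\mathcal{V}_\II(\p)[\partial^s r:s\geq 0]$, with $r,\partial r,\dots$ fresh variables (of the appropriate parity) on which $\phi$ acts as the identity; the identity $\delta_X W(L_0)=\phi(\delta_X W(\mathcal{L}))$ is then a genuine polynomial identity that is specialized to a particular $r\in\mathcal{V}_\II(\p)$ only at the end. Secondary Koszul-sign bookkeeping from the Lie superbracket and the supersymmetric bilinear form parallels that already handled in the proof of Proposition \ref{Prop:3.3_0207}.
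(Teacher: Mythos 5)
Your proof is correct and follows essentially the same route as the paper: both extract from the proof of Proposition \ref{Prop:3.3_0207} that membership in $\WW(\g,f,k)$ forces the first-order gauge variation of $W$ to vanish at $\mathcal{L}$, transfer this to an arbitrary Lax operator by the substitution $q^i\mapsto Q^i$, and then integrate to get invariance under the full gauge action. Your group-law argument for the flow $\epsilon\mapsto e^{\ad \epsilon X}L$ and the formal-variable treatment of $r$ merely make explicit what the paper compresses into the remark that $\ad^{n-1}X(L)$ is again a Lax operator; just be sure to state the final conclusion $W(e^{\ad X}L)=W(L)$ for an arbitrary Lax operator $L$ (your step (ii) already supplies the needed vanishing at every $L_0$), not only for $L=\mathcal{L}$.
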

\begin{proof}
Let $W$ be an element in  $\WW(\g,f,k)$. Note that 
\begin{enumerate}[(i)]
\item in the proof of Proposition \ref{Prop:3.3_0207}, we showed that $W\in \WW(\g,f,k)$ if and only if $W([X, \mathcal{L}])=0$ for any $X\in (\n \otimes \mathcal{V}_\II(\p))_{\bar{0}}$, 
\item for $P\in \mathcal{V}_\II(\p)$ and a Lax operator $L= k\partial + \sum_{i\in I} q_i \otimes Q^i -f\otimes 1$, we have
\[ P([X,L])= P([X, \mathcal{L}])|_{q^i= Q^i}.\]
\end{enumerate}
Hence $W([X, L])=0$ for any $X\in (\n \otimes \mathcal{V}_\II(\p))_{\bar{0}}$ and any Lax operator $L$. Moreover, since $\ad^{n-1} X (L)$ is a Lax operator for $n\geq 1$, we have $W(\ad^n X (L))=0.$ Thus, 
\[ W(e^{\ad X}(L))=W(L),\]
which means that $W$ is gauge invariant.
\end{proof}

\begin{thm}\label{Thm:3.11_0730}
The set of gauge invariant functions in $\mathcal{V}_{\II}(\p)$ is the classical affine W-superalgebra associated to $\g$ and $f$.
\end{thm}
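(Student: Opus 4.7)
The theorem is exactly the conjunction of Propositions \ref{Prop:3.3_0207} and \ref{Prop:3.4_0211}, which establish the two inclusions, so my plan is simply to invoke them. Still, let me record how I would organize the argument if stating it from scratch.

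For the inclusion (gauge invariant functions) $\subseteq \WW(\g,f,k)$, I would use the infinitesimal version of gauge invariance. Given $X = n \otimes r \in (\n \otimes \mathcal{V}_\II(\p))_{\bar{0}}$, expand
\begin{equation*}
W(e^{\epsilon\ad X}\mathcal{L}) = \sum_{t\geq 0} \epsilon^t P^W_t,
\end{equation*}
so gauge invariance forces $P^W_1 = 0$. The key is to identify $P^W_1$, coefficient by coefficient in powers of $\partial^t r$, with the $\lambda$-bracket $[n_\lambda \widetilde W]_\II$, coefficient by coefficient in powers of $\lambda^t$, for a lift $\widetilde W \in \mathcal{V}(\p)$. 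This dictionary is built by induction on the differential-polynomial structure of $W$: the base case $W = p \in \p$ is Lemma \ref{Lem:basic1} together with the defining $\lambda$-bracket $[n_\lambda p]_\II = [n,p] + k\lambda(n|p)$; the case $W = \partial^m p$ follows from sesquilinearity; and the product case $W = BC$ is handled by comparing the derivation rule $P^{BC}_1 = (-1)^{\pa(r)\pa(B)} B P^C_1 + P^B_1 C$ with the supercommutative Leibniz rule for $[n_\lambda \widetilde B \widetilde C]_\II$. Once the dictionary is in place, $P^W_1 = 0$ translates to $[n_\lambda \widetilde W]_\II = 0$, i.e.\ $W \in \WW(\g,f,k)$.

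For the reverse inclusion $\WW(\g,f,k) \subseteq$ (gauge invariant functions), the argument is more direct. If $W \in \WW(\g,f,k)$ then, by the dictionary above, $W([X, \mathcal{L}]) = 0$ for every even $X \in \n \otimes \mathcal V_\II(\p)$. Since any Lax operator $L = k\partial + Q - f\otimes 1$ with $Q = \sum q_i \otimes Q^i$ is obtained from $\mathcal L$ by the substitution $q^i \mapsto Q^i$, the identity $W([X, L]) = 0$ follows for every $L$. Because each $\ad^{n-1} X(L)$ is again a Lax operator (lying in the same semi-direct product and having the same coefficient of $\partial$), iteration yields $W(\ad^n X\, L) = 0$ for all $n \geq 1$; summing the exponential series gives $W(e^{\ad X} L) = W(L)$.

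The main bookkeeping obstacle is the sign convention in the evaluation map $p(a \otimes F) := (a|p)F$: this is precisely the choice that makes the inductive step for products in the first direction compatible with the parity-twisted Leibniz rule for $\lambda$-brackets, and it is what justifies the identifications made in Lemma \ref{Lem:basic1} and Proposition \ref{Prop:3.3_0207}.
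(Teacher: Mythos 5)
Your proposal is correct and follows exactly the paper's route: the theorem is proved by citing Propositions \ref{Prop:3.3_0207} and \ref{Prop:3.4_0211}, and your sketches of those two inclusions (the dictionary between $P^W_1$ and $[n_\lambda \widetilde W]_\II$ built by induction on the differential-polynomial structure, and the iteration of $W([X,L])=0$ through the exponential series) faithfully reproduce the paper's arguments, including the role of the sign convention $p(a\otimes F)=(a|p)F$.
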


\begin{proof}
It directly follows from Proposition \ref{Prop:3.3_0207} and Proposition \ref{Prop:3.4_0211}.
\end{proof}

Now the following propositions are useful to find generators of W-superalgebras.

\begin{prop} \label{Prop:3.6_0212}
Let us fix an $\ad \, h$-invariant complementary subspace $V_f\subset \g$  of $[f, \n]$ in $\m^\perp$:  
\[ \m^{\perp}= V_f \oplus [f, \n].\]
Consider a Lax operator $L=k \partial + Q -f\otimes 1$ for $Q\in \mathcal{F}_{\g,f}$. Then there exists unique $Q^{can}\in V_f \otimes (\mathcal{V}_{\II}(\p))_{\bar{0}} $ and unique $X\in (\n \otimes \mathcal{V}_{\II}(\p))_{\bar{0}}$ such that 
\begin{equation} \label{Eqn:3.11_0212}
 e^{\ad X}L =k \partial +Q^{\text{can}}-f\otimes 1.
 \end{equation}
\end{prop}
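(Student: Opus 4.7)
The plan is to adapt the classical Drinfeld--Sokolov argument (Theorem \ref{Thm:2.17_0730}) to the super setting; since $X$, $Q$, and $f \otimes 1$ are all \emph{even}, no sign subtleties from the super structure intervene, and the proof is essentially the grade-by-grade induction used in the Lie algebra case. Using the $\ad h$-invariance of $V_f$, I would first refine the decomposition to $V_f = \bigoplus_{i \geq -1/2} V_f^{(i)}$ with $V_f^{(i)} = V_f \cap \g(i)$, so that $\g(i) = V_f^{(i)} \oplus [f, \g(i+1)]$ for every $i \geq -1/2$, and the map $\ad f : \g(i+1) \xrightarrow{\sim} [f, \g(i+1)]$ is a bijection (by the injectivity of $\ad f|_\n$ in Setup \ref{Setup}). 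Write $X = \sum_{j \geq 1/2} X_j$, $Q = \sum_{i \geq -1/2} Q_i$, $Q^{\text{can}} = \sum_{i \geq -1/2} Q^{\text{can}}_i$, with $X_j \in (\g(j) \otimes \mathcal{V}_\II(\p))_{\bar 0}$, etc.

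A direct computation using (\ref{Eqn:Semidirect}) together with evenness of $X$ and $f$ gives $(\ad X) L = -k\partial X + [X, Q] + (\ad f)(X)$. Expanding $e^{\ad X} L = L + \sum_{k \geq 1} \frac{1}{k!} (\ad X)^k L$ and collecting grade-$m$ components, the equation $e^{\ad X} L = k\partial + Q^{\text{can}} - f \otimes 1$ becomes, for each $m \geq -1/2$,
\begin{equation*}
Q^{\text{can}}_m - Q_m = (\ad f)(X_{m+1}) + R_m(X_{1/2}, X_1, \ldots, X_m),
\end{equation*}
where $R_m$ is polynomial in the strictly lower-grade components $X_j$ ($j \leq m$) and their $\partial$-derivatives. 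The key observation is that $X_{m+1}$ enters only through the displayed $(\ad f)$-term: a grade-$m$ summand of $(\ad X)^k(-f \otimes 1)$ satisfies $\sum_\ell j_\ell = m+1$, which for $k \geq 2$ forces $j_\ell \leq m + 1 - (k-1)/2 < m+1$; parallel bookkeeping for $(\ad X)^k Q$ and $(\ad X)^k (k \partial)$ uses $Q_i$ having grade $\geq -1/2$ and $\partial$ preserving grade, producing only $X_j$ with $j < m+1$.

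Given this separation, I would solve the system by induction on $m$, starting at $m = -1/2$ (where $R_{-1/2} = 0$). At each step, split $Q_m + R_m \in \g(m) \otimes \mathcal{V}_\II(\p)$ according to $\g(m) = V_f^{(m)} \oplus [f, \g(m+1)]$: the $V_f^{(m)}$-part is forced to equal $Q^{\text{can}}_m$, yielding existence and uniqueness of $Q^{\text{can}}_m$, and the $[f, \g(m+1)]$-part determines $X_{m+1}$ uniquely via the bijection $\ad f$. Because $\g(i) = 0$ for $i > d$, the induction terminates after finitely many steps. Parity is preserved throughout, since $(\ad f)$, $\partial$, and super-brackets of even elements are even operations, so each constructed $X_{m+1}$ lies in $(\g(m+1) \otimes \mathcal{V}_\II(\p))_{\bar 0}$. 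The main obstacle is precisely the grading bookkeeping in the combinatorial claim above, i.e.\ confirming that no $X_j$ with $j \geq m+2$ can leak into the grade-$m$ equation through nested brackets; once this is established, the triangular induction is routine.
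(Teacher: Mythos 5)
Your proposal is correct and takes essentially the same approach as the paper: the paper's proof is exactly this grade-by-grade decomposition of $e^{\ad X}L = k\partial + Q^{\text{can}} - f\otimes 1$ along the $\ad\frac{h}{2}$-eigenspaces, determining $X_{m+1}$ and $Q^{\text{can}}_m$ inductively via the splitting $\m^\perp = V_f \oplus [f,\n]$ and the injectivity of $\ad f$ on $\n$, with the same parity check at each step. Your write-up is somewhat more explicit about the triangularity bookkeeping (that only the single term $(\ad f)(X_{m+1})$ involves the top-grade unknown), which the paper leaves implicit in its displayed system of equations.
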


\begin{proof}
 We can write $Q=\sum_{i\geq -\frac{1}{2}} Q_i$ where $Q_i\in \g(i) \otimes  \mathcal{V}_{\II}(\p) $. Similarly, let $X= \sum_{i\geq \frac{1}{2}}X_i$ and $Q^{\text{can}}= \sum_{i\geq 0} Q^{can}_i$ for $X_i, Q^{\text{can}}_i \in \g(i)\otimes  \mathcal{V}_{\II}(\p).$ Then the $\ad\, \frac{h}{2}$-decomposition of (\ref{Eqn:3.11_0212}) implies the following equalities:
\begin{equation}\label{Eqn:3.12_0212}
\begin{aligned}
&  Q_{-1/2}+[X_{1/2},-f\otimes 1] =0,\\
& Q_0 +[X_1, -f\otimes 1]+[X_{1/2}, Q_{-1/2}]= Q^{\text{can}}_0 ,\\
&  Q_{1/2} + [ X_{3/2}, -f\otimes 1] +[X_1, Q_{-1/2}]+[X_{1/2}, \partial+Q_0]=Q^{\text{can}}_{1/2}, \\
& \quad \vdots
\end{aligned}
\end{equation}
Then we can determine $X_{1/2}$ uniquely by the first equation in (\ref{Eqn:3.12_0212}) and it is even. Also, $X_1$ and $Q^{\text{can}}_0$ can be uniquely determined by $X_{1/2}$ and  the second  equation in (\ref{Eqn:3.12_0212}). Since $[X_{1/2}, Q_{-1/2}]+Q_0$ is even, both $X_1$ and $Q^{\text{can}}_0$ are even. The Inductively, even elements $X_{i+1}$ and $Q^{can}_{i}$ are determined uniquely by $X_{j+1}$, $Q^{\text{can}}_{j}$ for $j<i$ and  (\ref{Eqn:3.12_0212}).  
\end{proof}

\begin{lem} \label{Lem:label_gen}
Let us take the universal Lax operator $\mathcal{L}=k \partial +q_{\text{univ}} -1\otimes f$  in Proposition \ref{Prop:3.6_0212} and let $\{\, q_i\, |\,  i\in \mathcal{J}\, \}$ be a basis of $V_f$. If we denote $q^{\text{can}}_{\text{univ}}= \sum_{i\in \mathcal{J}} q_i \otimes w^i$ then we have the following properties. 
\begin{enumerate}
\item For any $q\in \mathcal{F}_{\g,f}$, we have 
\begin{equation}
\textstyle q^{\text{can}}=\sum_{i\in \mathcal{J}} q_i\otimes (w^i(q)).
\end{equation}
\item Let $P\in \mathcal{V}_{\II}(\p)$ and $q\in \mathcal{F}_{\g,f}$. Then we have  $P(q)=P(q^{\text{can}})$ if and only if $P(q_{\text{univ}})=P(q^{\text{can}}_{\text{univ}})$.
\end{enumerate}
\end{lem}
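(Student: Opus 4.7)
The plan is to reduce both statements to the general substitution principle (\ref{Eqn:compute functions}) together with the uniqueness clause of Proposition \ref{Prop:3.6_0212}. Write $q = \sum_{i \in I} q_i \otimes Q^i$ with $Q^i \in \mathcal{V}_\II(\p)$, so that $L = k\partial + q - f\otimes 1$ is obtained from $\mathcal{L}$ by substituting $q^i \mapsto Q^i$ in every coefficient. Apply Proposition \ref{Prop:3.6_0212} to both operators to obtain unique gauge transformations
\[
e^{\ad X_\text{univ}}\,\mathcal{L} \;=\; k\partial + q^{\text{can}}_{\text{univ}} - f\otimes 1, \qquad e^{\ad X}\,L \;=\; k\partial + q^{\text{can}} - f\otimes 1.
\]

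For (1), I would inspect the recursive system (\ref{Eqn:3.12_0212}) that pins down the homogeneous components of $X_\text{univ}$ and $q^{\text{can}}_{\text{univ}}$. Each $X_{j+1/2}$ is determined by the $\ad\,f$-preimage of a \emph{polynomial} expression in the data $q^i$ and in previously-constructed components $X_{\leq j}$, $q^{\text{can}}_{<j}$, and $\partial$ commutes with the substitution map $|_{q^i\mapsto Q^i}$ on $\mathcal{V}_\II(\p)$. Consequently, substituting $q^i \mapsto Q^i$ throughout (\ref{Eqn:3.12_0212}) produces exactly the system that characterizes $(X, q^{\text{can}})$. Uniqueness then forces
\[
X \;=\; X_\text{univ}\big|_{q^i \mapsto Q^i}, \qquad q^{\text{can}} \;=\; q^{\text{can}}_{\text{univ}}\big|_{q^i \mapsto Q^i} \;=\; \sum_{i\in\mathcal{J}} q_i \otimes \bigl(w^i\big|_{q^i\mapsto Q^i}\bigr).
\]
Finally, using (\ref{Eqn:compute functions}) to identify $w^i(q) = w^i|_{q^i\mapsto Q^i}$ yields the claim of (1).

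For (2), I would apply (\ref{Eqn:compute functions}) on both sides: $P(q) = P(q_\text{univ})|_{q^i\mapsto Q^i}$ directly, while part (1) together with the same principle gives $P(q^{\text{can}}) = P(q^{\text{can}}_{\text{univ}})|_{q^i\mapsto Q^i}$. Subtracting,
\[
P(q) - P(q^{\text{can}}) \;=\; \bigl(P(q_\text{univ}) - P(q^{\text{can}}_{\text{univ}})\bigr)\big|_{q^i \mapsto Q^i}.
\]
The ``if'' direction is then immediate, since the right-hand side vanishes whenever the universal identity $P(q_\text{univ}) = P(q^{\text{can}}_{\text{univ}})$ holds, for every choice of $Q^i$. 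The ``only if'' direction is handled by specializing to $q = q_\text{univ}$ (which lies in $\mathcal{F}_{\g,f}$ since the bilinear form is even), so that $Q^i = q^i$ and the substitution is the identity. The main obstacle is the bookkeeping step of verifying that the inductive construction in Proposition \ref{Prop:3.6_0212} genuinely commutes with the substitution $q^i\mapsto Q^i$; this reduces to noting that each stage of the recursion is defined by polynomial operations in the coefficients together with $\ad\,f$ and $\partial$, all of which respect the substitution, so that no actual computation is required beyond a clean induction on the $\ad\,\tfrac{h}{2}$-grading.
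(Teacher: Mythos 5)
Your proposal is correct and follows essentially the same route as the paper: part (1) rests on the observation that the substitution $q^i\mapsto Q^i$ commutes with the gauge-transformation construction (the paper phrases this as $e^{\ad X_q}(L)$ being the substitution of $e^{\ad X_{\text{univ}}}(\mathcal{L})$, you phrase it via the recursion of Proposition \ref{Prop:3.6_0212}) combined with the uniqueness of the canonical form, and part (2) is the same substitution argument, with the converse direction obtained by specializing $q=q_{\text{univ}}$ exactly as the paper implicitly does.
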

\begin{proof}
(1)  For any $q\in \mathcal{F}_{\g,f}$ and the basis $\{q^i\}_{i\in I}$ of $\p$ such that $(q_i|q^j)=\delta_{ij}$,   
the Lax operator \[ \textstyle L=k\partial + q-f\otimes 1=k\partial+\sum_{i\in I} q_i\otimes (q^i(q))-f\otimes 1.\] 
%is obtained from $\mathcal{L}$ by substituting $q^i \in \mathcal{V}_{\II}(\p)$ by $Q^i$. 
Also, for $X=n\otimes r\in \n \otimes \mathcal{V}_{\II}(\p)$ such that $e^{\ad X}(\mathcal{L})=k \partial+q^{\text{can}}_{\text{univ}}-f\otimes 1$, if we let $X_q:= n\otimes (r(q))$ then $e^{\ad X_q}(L)=k\partial+q^{X_q}-1\otimes f$ is obtained from $e^{\ad X}(\mathcal{L})$ by substituting $q^i$ in $q_{\text{univ}}^{\text{can}}$by $q^i(q).$ 
%Here we used the fact that $\{q^i|i\in I\}$ is a free generator of the differential superalgebra $\mathcal{V}(\p).$ 
In other words, $q^{X_q} \in V_f \otimes \mathcal{V}_{\II}(\p)$ and 
\[ \textstyle q^{X_q}=q^{\text{can}}= \sum_{i\in \mathcal{J}} q_i\otimes (w^i(q)).\]

(2) It is enough to show that $P(q_{\text{univ}})=P(q^{\text{can}}_{\text{univ}})$ implies $P(q)=P(q^{\text{can}})$ for any $q\in \mathcal{F}_{\g,f}.$ Suppose $P(q_{\text{univ}})=P(q^{\text{can}}_{\text{univ}})$. 
Then, by (1), we have 
 \[ \textstyle P(q^{can})=P\big(\sum_{i\in \mathcal{J}} q_i \otimes w^i\big)|_{q^i=Q^i}=P(q_{\text{univ}}^{\text{can}})|_{q^i=Q^i}=P(q_{\text{univ}})|_{q^i=Q^i}= P(q)\]
where $|_{q^i=Q^i}$ denotes we substitute $q^i$ by $Q^i.$
\end{proof}

\begin{prop}\label{Prop:3.14_0730}
For $q^{\text{can}}_{\text{univ}}= \sum_{i\in \mathcal{J}} q_i \otimes w^i\in V_f \otimes \mathcal{V}_{\II}(\p)$ in Lemma \ref{Lem:label_gen}, we have
 \begin{equation}
 \WW(\g,f,k)=\Cdiff[\, w^i\, |\, i\in \mathcal{J}\, ] \subset \mathcal{V}_\II(\p) 
 \end{equation}
as differential superalgebras.
\end{prop}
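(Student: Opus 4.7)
The plan is to prove both inclusions and then freeness. For $\Cdiff[w^i\mid i\in\mathcal{J}]\subseteq \WW(\g,f,k)$, it suffices by Theorem~\ref{Thm:3.11_0730} to show each $w^i$ is gauge invariant. Given $q\sim q'$ in $\mathcal{F}_{\g,f}$, the uniqueness statement in Proposition~\ref{Prop:3.6_0212} forces $q^{\text{can}}=(q')^{\text{can}}$, and Lemma~\ref{Lem:label_gen}(1) then yields $w^i(q)=w^i(q')$. Hence each $w^i$ lies in $\WW(\g,f,k)$, and the differential subalgebra they generate is contained in $\WW(\g,f,k)$.

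For the reverse inclusion, let $W\in\WW(\g,f,k)$. By Theorem~\ref{Thm:3.11_0730}, $W$ is gauge invariant; since $q_{\text{univ}}\sim q^{\text{can}}_{\text{univ}}$ by construction, we get $W(q_{\text{univ}})=W(q^{\text{can}}_{\text{univ}})$ (this also follows directly from Lemma~\ref{Lem:label_gen}(2)). The left-hand side equals $W$ itself, because $q^j(q_{\text{univ}})=q^j$ and the identification from \eqref{Eqn:functional_even+odd_extend} is a superalgebra map commuting with $\partial$. The right-hand side is computed by the same multiplicativity: write $W$ as a differential polynomial in the dual basis $\{q^j\mid j\in I\}$ of $\p$, and substitute $q^j\mapsto w^j$ for $j\in\mathcal{J}$ and $q^j\mapsto 0$ for $j\notin\mathcal{J}$, using
$q^j(q^{\text{can}}_{\text{univ}})=\sum_{i\in\mathcal{J}}(q_i\mid q^j)\,w^i$ together with the duality $(q_i\mid q^j)=\delta_{ij}$. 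Thus $W\in\Cdiff[w^i\mid i\in\mathcal{J}]$.

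For freeness, the inductive construction in the proof of Proposition~\ref{Prop:3.6_0212}, applied to the universal Lax operator $\mathcal{L}$, shows that $w^i=q^i+R^i$, where $R^i$ has total polynomial degree at least two in the generators $\partial^n q^j$; here the total-degree $\NN$-grading on $\mathcal{V}_\II(\p)$ assigns each $\partial^n q^j$ degree one. Any nonzero differential polynomial relation among $\{w^i\}_{i\in\mathcal{J}}$ would therefore have a nonzero minimal-degree component whose image under $w^i\mapsto q^i$ would be a nontrivial relation among $\{\partial^n q^i\}_{i\in\mathcal{J},\,n\geq 0}$ inside $\mathcal{V}_\II(\p)=S(\CC[\partial]\otimes\p)$, contradicting the fact that these are part of a free generating set. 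The main technical obstacle I anticipate is the $\ZZ/2\ZZ$-parity bookkeeping in the middle paragraph: one must verify that \eqref{Eqn:functional_even+odd}--\eqref{Eqn:functional_even+odd_extend} do define a bona fide superalgebra homomorphism on $\mathcal{V}_\II(\p)$, so that multiplicativity on each (possibly odd) monomial of $W$ holds with the correct signs, making the substitution $q^j\mapsto w^j$ legal in the super setting.
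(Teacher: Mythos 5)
Your two containments are precisely the paper's argument: $\WW(\g,f,k)\subseteq\Cdiff[\,w^i\,|\,i\in\mathcal{J}\,]$ via $\Phi=\Phi(q_{\text{univ}})=\Phi(q^{\text{can}}_{\text{univ}})$ together with $w^i=q^i(q^{\text{can}}_{\text{univ}})$, and the reverse containment via $w^i(q)=q^i(q^{\text{can}})=q^i(q'^{\text{can}})=w^i(q')$ for gauge-equivalent $q\sim q'$, using the uniqueness of the canonical form from Proposition \ref{Prop:3.6_0212} and Lemma \ref{Lem:label_gen}. So for the equality of the two differential subalgebras of $\mathcal{V}_\II(\p)$ you are on exactly the paper's route, and that part is fine.

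The place where you go beyond the paper --- freeness of the $w^i$ --- rests on an incorrect intermediate claim. It is not true that $w^i=q^i+R^i$ with $R^i$ of total polynomial degree at least two: the recursion in Proposition \ref{Prop:3.6_0212} contains terms of the form $[X_j,k\partial+Q_0]$, and the $k\partial$ part produces corrections that are \emph{linear} in the variables $\partial^n q^j$. Already for $\g=\sll_2$ one finds $w=f-x^2-k\partial x$, whose degree-one component is $f-k\partial x$, not $f$ alone; so passing to the minimal-degree component of a putative relation and "setting $w^i\mapsto q^i$" is not legitimate as stated. The argument can be repaired: the linear part of $w^i$ is $q^i+L^i$ where, by the triangular structure recorded in the paper ($\phi_g=g+\psi_g$ with $\psi_g$ a differential polynomial in generators from strictly higher $\ad\frac{h}{2}$-eigenspaces), the family $\{\partial^n(q^i+L^i)\}_{i\in\mathcal{J},\,n\geq 0}$ is still linearly independent in $\CC[\partial]\otimes\p$, and then your minimal-degree argument goes through against these modified linear parts. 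Note finally that the paper's own proof of this proposition does not address freeness at all; it only proves the two containments, with freeness taken from the earlier structure results.
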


\begin{proof}
  Note that  
   we have
\[\textstyle  \text{ (i) } w^i= w^i(q_{\text{univ}}) \text{ since } q_{\text{univ}}=\sum_{i\in I} q_i\otimes q^i, \qquad \text{ (ii) }  w^i=q^i(q^{\text{can}}_{\text{univ}}). \]
Consider the subset  $\{\, q^i\, |\,  i\in \mathcal{J}\, \}$ of the basis $\{q^i\}_{i\in I}$ of $\p$
and take an element $\Phi\in \WW(\g,f,k)$. Then 
\[  \Phi=\Phi(q_{\text{univ}})= \Phi( q^{\text{can}}_{\text{univ}}) \in  \Cdiff[\, q^i(q^{\text{can}}_{\text{univ}})\,  |\, i\in \mathcal{J}\, ]=\Cdiff[\, w_i \, |\, i\in \mathcal{J}\, ]\]
 and 
$\Phi\in \Cdiff[\, w^i\, |\, i\in \mathcal{J}\, ] .$ Hence, by Lemma \ref{Lem:label_gen}, we have $\WW(\g,f,k)\subset \Cdiff[\, w_i\, |\, i\in \mathcal{J}\, ] .$

Conversly, for $q, q'\in \mathcal{F}_{\g,f}$ such that $q\sim q'$, we have $q^{\text{can}}= q'^{\text{can}}$. Since $w^i(q)= q^i(q^{\text{can}})=  q^i(q'^{\text{can}})=w^i(q'^{\text{can}})$, we get $ \Cdiff[\, w^i\, |\, i\in \mathcal{J}\, ] \subset \WW(\g,f,k).$
\end{proof}

By Proposition \ref{Prop:3.14_0730}, we can find generators of $\WW(\text{spo}(2|1), f,k)$ as follows (cf. Example \ref{Ex:W(spo(2,1))}).

\begin{ex} \label{Ex:spo(1|2)W}
Let $\g=\text{spo}(2|1)$ and let 
\[ h= e_{11}-e_{22}, \quad e=e_{12}, \quad f=e_{21}, \quad e_{od}=e_{1\bar{1}}+e_{\bar{1}2}, \quad f_{od}=e_{2\bar{1}}-e_{\bar{1}1}\]
where $e_{ij}$ is the matrix in $\mathfrak{gl}(2|1)$ which has $1$ in the $ij$-entry and $0$ in other entries. Then $h,e,f$ are even elements and $e_{od}, f_{od}$ are odd elements. Lie brackets between generators are
\[ \, [h, e_{od}]= e_{od}, \ [h, f_{od}]=-f_{od}, \ [e_{od}, f_{od}]= [f_{od}, e_{od}]= -h, \ [e_{od}, f]=-f_{od},\]
\[ \, [f_{od}, e]=-e_{od}, \ [e_{od}, e_{od}]= 2e, \ [f_{od}, f_{od}]= -2f.\]
The even supersymmetric bilinear form we consider satisfies
\[ (h|h)= 2(e|f)=2, \ (e_{od}|f_{od})=-2.\]
Take the Lax operator 
\[ \mathcal{L}=k \partial +\frac{1}{2} f_{od}\otimes e_{od} +\frac{1}{2} h\otimes h -\frac{1}{2} e_{od}\otimes f_{od} + e\otimes f -f \otimes 1\]
and $V_f= \CC e \oplus \CC e_{od}$. We can check that $\m^{\perp}= V_f\oplus [\n,f].$ 
Suppose  $X= e_{od}\otimes r_{od}+e\otimes r$ for $r_{od} \in \mathcal{V}_{\II}(\p)_{\bar{1}}$ and $r\in \mathcal{V}_{\II}(\p)_{\bar{0}}$ satisfies 
\begin{equation}\label{Eqn:exLcan}
 e^{\ad X} \mathcal{L}= \mathcal{L}^{can}\in \CC \partial \ltimes V_f \otimes  \mathcal{V}_{\II}(\p).
 \end{equation}
The terms with $\ad h$-eigenvalue $-1$  in (\ref{Eqn:exLcan})  are
\[\frac{1}{2} f_{od} \otimes e_{od} - [e_{od},f] \otimes r_{od} =0.\]
Hence $r_{od}=-\frac{1}{2} e_{od}.$ The terms with $\ad h$-eigenvalue $0$ in (\ref{Eqn:exLcan})  are
\[\frac{1}{2} h\otimes h+ h\otimes ( -\frac{1}{2} e_{od}r_{od}-r) +\frac{1}{2} [e_{od}, f_{od}]\otimes  r_{od}^2 =0.\]
Since $e_{od}r_{od}=r_{od}^2=0$, we have $r=\frac{1}{2}h.$ Hence $X=-\frac{1}{2} e_{od}\otimes e_{od}+\frac{1}{2} e\otimes h$ and, by direct computations,
\begin{equation}
\begin{aligned}
 e^{\ad X} \mathcal{L}& =k \partial+e_{od} \otimes \big( -\frac{1}{2} f_{od}+ \frac{k}{2} \partial e_{od} +\frac{1}{4} h e_{od}\big) \\
 & + e\otimes \big( f+\frac{1}{2} f_{od}e_{od}-\frac{1}{4} h^2+\frac{k}{4} e_{od}\partial e_{od} -\frac{k}{2}\partial h \big) -f\otimes 1.
 \end{aligned}
 \end{equation}
 If we denote $\phi_{od}=  -\frac{1}{2} f_{od}+ \frac{k}{2} \partial e_{od} +\frac{1}{4} h e_{od}$ and $\phi_{ev}=  f+\frac{1}{2} f_{od}e_{od}-\frac{1}{4} h^2+\frac{k}{4} e_{od}\partial e_{od} -\frac{k}{2}\partial h$ then  $\mathcal{W}(\g,f,k)= \Cdiff[\phi_{od}, \phi_{ev}].$
\end{ex}

\begin{ex} \label{Ex:W(sl(2|1))}
Let $\g= \sll(2|1)$ and take $e= e_{12}$ and $f=e_{21}$. Consider the Lax operator 
\[ \mathcal{L}=k\partial +Q-f\otimes 1 = k\partial+e\otimes f+e_{1\bar{1}}\otimes e_{\bar{1}1}-e_{\bar{1}2}\otimes e_{2\bar{1}}+\frac{1}{2}h\otimes h -\frac{1}{2} \tau\otimes \tau -f \otimes 1,\]
where $h=[e,f]= e_{11}-e_{22}$ and $\tau= e_{11}+e_{22}+2e_{\bar{1}\bar{1}}\in \ker{\ad f}.$ Fix 
\[ V_f= \CC e_{12} \oplus \CC e_{1\bar{1}} \oplus \CC e_{\bar{1}2} \oplus \CC \tau\]
so that $[\n, f ]\oplus V_f = \m^\perp.$ Let us take
\[ X=e_{1\bar{1}}\otimes X_{1\bar{1}} + e_{\bar{1}2} \otimes X_{\bar{1}2} + e\otimes X_{12}\]
such that $e^{\ad X} \mathcal{L} \in \partial \ltimes V_f \otimes \mathcal{V}_{\II}(\p).$ In order to vanish degree $\frac{1}{2}$-part of $e^{\ad X} \mathcal{L}$, which is 
\[ -e_{\bar{1}1}\otimes e_{1\bar{1}} +e_{2\bar{1}}\otimes e_{\bar{1}2} + e_{2\bar{1}} \otimes X_{1\bar{1}}-e_{\bar{1}1}\otimes X_{\bar{1}2},\]
set $X_{1\bar{1}}= -e_{\bar{1}2}$ and $X_{\bar{1}2}= -e_{1\bar{1}}.$ Moreover, we want degree $0$-part of $e^{\ad X} \mathcal{L}$ lies in $\partial \ltimes \CC \tau \otimes \mathcal{V}_{\II}(\p)$. In other words,
\begin{equation*}
\begin{aligned}
& \mathcal{L}_0+[ X_{1/2}, \mathcal{L}_{-1/2}] + [X_1, L_{-1}]+\frac{1}{2} [X_{1/2}, [X_{1/2}, f]]\\
& = k\partial +\frac{1}{2}h\otimes h -\frac{1}{2} \tau\otimes \tau -\frac{1}{2} (h+\tau) \otimes e_{1\bar{1}}X_{1\bar{1}}-\frac{1}{2}(h-\tau)\otimes e_{\bar{1}{2}}X_{\bar{1}{2}} \\
& - h \otimes X_{12} -\frac{1}{2} [e_{\bar{1}2}, [ e_{1\bar{1}}, f]]\otimes X_{1\bar{1}}X_{\bar{1}2}-\frac{1}{2} [e_{1\bar{1}}, [e_{\bar{1}2}, f]]\otimes X_{\bar{1}2}X_{1\bar{1}}\\
& = k\partial + h\otimes (\frac{1}{2} h-X_{12})+\tau\otimes (-\frac{1}{2} \tau+ \frac{1}{2} e_{1\bar{1}}e_{\bar{1}2})\in \partial \ltimes \CC \tau \otimes \mathcal{V}_{\II}(\p).
\end{aligned}
\end{equation*}
Here, $\mathcal{L}_l$ and $X_l$ denote the degree $l$-parts of $\mathcal{L}$ and $X$, respectively. Hence, we fix $X_{12}= \frac{1}{2} h.$ In other words, we have 
\[ X= e_{12}\otimes \frac{1}{2} h - e_{\bar{1}2}\otimes e_{1\bar{1}}-e_{1\bar{1}}\otimes e_{\bar{1}2}.\]
Now, we compute degree $1/2$-part and $1$-part of $e^{\ad X} \mathcal{L}$ as follows. For degree $1/2$-part, we have
\begin{equation*}
\begin{aligned}
& \mathcal{L}_{1/2}= e_{1\bar{1}}\otimes e_{\bar{1}1}-e_{\bar{1}2}\otimes e_{2\bar{1}},\\
& [X_{1/2}, \mathcal{L}_0]= e_{\bar{1}2} \otimes ( k\partial e_{1\bar{1}}+\frac{1}{2} e_{1\bar{1}}(h-\tau)) + e_{1\bar{1}}\otimes (k\partial e_{\bar{1}2}+\frac{1}{2} e_{\bar{1}2}(h+\tau)),\\
& [X_1, \mathcal{L}_{-1/2}] = e_{\bar{1}2} \otimes \frac{1}{2} e_{1\bar{1}} h + e_{1\bar{1}} \otimes \frac{1}{2}e_{\bar{1}2} h,\\
& \frac{1}{2} [X_{1/2}, [X_{1/2}, \mathcal{L}_{-1/2}]] + \frac{1}{2} [X_{1/2}, [X_{1}, \mathcal{L}_{-1}] ]= -\frac{1}{4} e_{\bar{1}2}\otimes h e_{1\bar{1}}-\frac{1}{4} e_{1\bar{1}}\otimes he_{\bar{1}2},\\
& \frac{1}{2} [X_{1}, [X_{1/2}, \mathcal{L}_{-1}]] = -\frac{1}{4} e_{\bar{1}2}\otimes h e_{1\bar{1}}-\frac{1}{4} ,e_{1\bar{1}}\otimes he_{\bar{1}2}.
\end{aligned}
\end{equation*}
By adding all the formulas, we get $1/2$-degree part of $e^{\ad X} \mathcal{L}$: 
\[ e_{1\bar{1}} \otimes (e_{\bar{1}1} +k\partial e_{\bar{1}2} +\frac{1}{2} e_{\bar{1}2} (h+\tau))
+ e_{\bar{1}2}\otimes (-e_{2\bar{1}}+k\partial e_{1\bar{1}}+\frac{1}{2} e_{1\bar{1}}(h-\tau)).\]
Similarly, degree $1$-part of $e^{\ad X} \mathcal{L}$ is obtained by adding the following fomulas:
\begin{equation*}
\begin{aligned}
& \mathcal{L}_1= e\otimes f ,\\
& [X_{1/2}, \mathcal{L}_{1/2}]= e \otimes (-e_{1\bar{1}}e_{1\bar{1}} +e_{2\bar{1}}e_{\bar{1}2}) ,\\
& [X_1, \mathcal{L}_0]= e\otimes ( -\frac{k}{2} \partial h -\frac{1}{2} h^2),\\
& \frac{1}{2} ( [X_{1/2},[X_{1/2}, \mathcal{L}_0]]+[X_{1/2}, [X_1, \mathcal{L}_{-1/2}]])=e \otimes \frac{1}{2} ( -k \partial e_{1\bar{1}} e_{\bar{1}2} - k \partial e_{\bar{1}2} e_{1\bar{1}} + \tau e_{1\bar{1}}e_{\bar{1}2}), \\
& \frac{1}{2} ( [X_{1},[X_{1/2}, \mathcal{L}_{-1/2}]]+[X_{1}, [X_1, \mathcal{L}_{-1}]])= e\otimes \frac{1}{4} h^2,\\
& \frac{1}{6} ([X_{1/2}, [X_{1/2},[X_{1/2}, \mathcal{L}_{-1/2}]]] +[X_{1/2}, [X_{1/2}, [X_{1},\mathcal{L}_{-1}]]]\\
& \qquad \qquad+[X_{1/2}, [X_1, [X_{1/2},\mathcal{L}_{-1}]]]+ [X_{1},[X_{1/2}, [X_{1/2}, \mathcal{L}_{-1}]]])=0.
\end{aligned}
\end{equation*}
Hence the degree $1$ part of $e^{\ad X}\mathcal{L}$ is 
\[ e\otimes ( f-e_{\bar{1}1} e_{1\bar{1}}+e_{2\bar{1}}e_{\bar{1}2}-\frac{1}{4} h^2-\frac{k}{2} \partial h-\frac{k}{2} \partial e_{1\bar{1}}e_{\bar{1}2}-\frac{k}{2} \partial e_{\bar{1}2}e_{1\bar{1}} + \frac{1}{2} \tau e_{1\bar{1}}e_{\bar{1}2}).\]
If we denote 
\begin{equation}
\begin{aligned}
& \phi_t= -\frac{1}{2} \tau+\frac{1}{2} e_{1\bar{1}}e_{\bar{1}2},\\
& \phi_{1\bar{1}}= e_{\bar{1}1}+k\partial e_{\bar{1}2} +\frac{1}{2} e_{\bar{1}2}(h+\tau),\\
& \phi_{\bar{1}2}= -e_{2\bar{1}}+k\partial e_{1\bar{1}}+\frac{1}{2} e_{1\bar{1}}(h-\tau),\\
&\phi_e= f-e_{\bar{1}1} e_{1\bar{1}}+e_{2\bar{1}}e_{\bar{1}2}-\frac{1}{4} h^2-\frac{k}{2} \partial h-\frac{k}{2}\partial e_{1\bar{1}}e_{\bar{1}2}-\frac{k}{2} \partial e_{\bar{1}2}e_{1\bar{1}} + \frac{1}{2} \tau e_{1\bar{1}}e_{\bar{1}2}
\end{aligned}
\end{equation}
then \[\WW(\g,f, k)= \CC_{\text{diff}}[\phi_t, \phi_{1\bar{1}}, \phi_{\bar{1}2}, \phi_e].\]
Here, we note that 
\[ \phi_e= -k L-2\phi_\tau^2,\]
for the energy momentum field $L$, which is defined in Proposition \ref{Prop:2.11_0731}.
\end{ex}

%\begin{rem}
%For a given Lie superalgebra $\g$ and a nilpotent element $f$, the W-superalgebra $\WW(\g,f,k)$ depends on $k\in \CC$. If we take Lax operators of the form $L= k\partial +q -f \otimes 1\in \CC\partial \ltimes \g \otimes  \mathcal{V}_{\II}(\p)$ and define gauge invariant functions using the  Lax operators then there is a natural differential superalgebra isomorphism between $\WW(\g,f,k)$ and the set of gauge invariant functions in $\mathcal{V}_{\II}(\p)$. 
%\end{rem}

Now, we introduce another equivalent way to get a generating set of a W-superalgebra, which will be used in the later sections to describe bi-Poisson structures via Lax operators. We consider the {\it sign twisting} linear map
\begin{equation}\label{signmap}
 \sigma: \g \otimes \mathcal{V}_\II(\p) \to  \g\otimes  \mathcal{V}_\II(\p), \quad a\otimes F\mapsto (a\otimes F)^\sigma,
 \end{equation}
where $(a\otimes F)^\sigma :=(-1)^{\pa(a)\pa(F)} a\otimes F.$

\begin{rem}
One can check that the map (\ref{signmap}) induces a Lie algebra automorphism on $(\g\otimes \mathcal{V}_\II(\p))_{\bar{0}}$. Moreover, if we consider the Lie algebra 
$(\mathcal{V}_\II(\p)\otimes \g)_{\bar{0}}$ with the bracket $[F\otimes a, G\otimes b]= (-1)^{\pa(a)\pa(b)}FG\otimes[a,b]$ then 
there is a natural isomorphism
\[ \phi: (\g\otimes  \mathcal{V}_\II(\p))_{\bar{0}} \to (\mathcal{V}_\II(\p)\otimes \g)_{\bar{0}}\]
defined by  $\phi((a\otimes F)^\sigma) = F\otimes a.$ In other words,  applying the map $\sigma$ to $ (\g\otimes  \mathcal{V}_\II(\p))_{\bar{0}}$ is equivalent to  consider the space $(\mathcal{V}_\II(\p)\otimes \g)_{\bar{0}}$.
\end{rem}

\begin{prop} \label{Prop:signLax}
For a Lax operator $L=k \partial + \sum_{i\in I} q_i\otimes Q^i -f \otimes 1 $, we denote 
\[  \textstyle L^\sigma=  k\partial + \sum_{i\in I} (-1)^{\pa(i)} q_i\otimes Q^i -f \otimes 1 \in (\g[z] \otimes \mathcal{V}_\II(\p))_{\bar{0}},\]
where $\pa(i)$ denotes the parity $\pa(q_i)$.
\begin{enumerate}
\item  Two elements $q, q'\in \mathcal{F}_{\g,f}$ are gauge equivalent if and only if $q^\sigma, q'^\sigma$ are gauge equivalent.
\item Recall that we denote the universal Lax operator by  $\mathcal{L}=k\partial +q_{\text{univ}}- f\otimes 1$ and the operator gives rise to generators of $\WW(\g,f,k)$ (see Proposition  \ref{Prop:3.14_0730}). By (1), we can find  generators of $\WW(\g,f,k)$
using {\it sign twisted universal Lax operator}
\[ \textstyle  \mathcal{L}^\sigma= k\partial +q_{\text{univ}}^\sigma-f\otimes 1.\]
\end{enumerate}
\end{prop}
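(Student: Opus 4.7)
The plan is to exploit the fact, recorded in the remark preceding the proposition, that $\sigma$ restricts to a Lie (super)algebra automorphism on $(\g\otimes\mathcal{V}_\II(\p))_{\bar{0}}$. The sign-twisting map extends to $\CC\partial\ltimes \g\otimes\mathcal{V}_\II(\p)$ by declaring $\sigma(\partial)=\partial$, and a short check using the parity constraint $\pa(a)=\pa(F)$ on even elements $a\otimes F$ shows that this extension still respects the bracket (\ref{Eqn:Semidirect}) on the even part. Moreover $\sigma(f\otimes 1)=f\otimes 1$ because both factors are even, and by construction $\sigma(q)=q^\sigma$ and $\sigma(\mathcal{L})=\mathcal{L}^\sigma$.

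For part (1), suppose $q\sim q'$, so there exists $X\in(\n\otimes\mathcal{V}_\II(\p))_{\bar 0}$ with $e^{\ad X}(k\partial+q-f\otimes 1)=k\partial+q'-f\otimes 1$. Applying $\sigma$ to both sides and using the automorphism identity $\sigma\circ e^{\ad X}=e^{\ad \sigma(X)}\circ\sigma$, one gets
\[ e^{\ad \sigma(X)}(k\partial+q^\sigma-f\otimes 1)=k\partial+q'^\sigma-f\otimes 1. \]
Because $\sigma$ preserves the subspace $(\n\otimes\mathcal{V}_\II(\p))_{\bar 0}$, $\sigma(X)$ is an admissible gauge parameter, so $q^\sigma\sim q'^\sigma$. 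The converse is the same statement applied to $\sigma$ once more, which is the identity on even elements.

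For part (2), apply Proposition \ref{Prop:3.6_0212} to $\mathcal{L}^\sigma$: since $\mathcal{L}^\sigma$ is still a Lax operator (the sign twist preserves the even phase space), there exist unique $X^\sigma\in(\n\otimes\mathcal{V}_\II(\p))_{\bar 0}$ and $(q^\sigma_{\text{univ}})^{\text{can}}\in V_f\otimes\mathcal{V}_\II(\p)$ with
\[ e^{\ad X^\sigma}\mathcal{L}^\sigma=k\partial+(q^\sigma_{\text{univ}})^{\text{can}}-f\otimes 1. \]
Writing $(q^\sigma_{\text{univ}})^{\text{can}}=\sum_{i\in\mathcal{J}} q_i\otimes \widetilde{w}^i$, I would mimic the proof of Proposition \ref{Prop:3.14_0730} using part (1) to conclude that $\{\widetilde{w}^i\}_{i\in\mathcal{J}}$ also freely generates $\WW(\g,f,k)$ as a differential superalgebra. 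Concretely, applying $\sigma$ to the displayed equation and invoking the uniqueness in Proposition \ref{Prop:3.6_0212} yields $\sigma((q^\sigma_{\text{univ}})^{\text{can}})=q^{\text{can}}_{\text{univ}}$, which forces $\widetilde{w}^i=(-1)^{\pa(i)}w^i$ (using the parity compatibility $\pa(\widetilde{w}^i)=\pa(w^i)=\pa(q_i)$ coming from the evenness of the canonical form). Hence $\CC_{\text{diff}}[\widetilde{w}^i\mid i\in\mathcal{J}]=\CC_{\text{diff}}[w^i\mid i\in\mathcal{J}]=\WW(\g,f,k)$.

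The main subtlety I expect is the bookkeeping of signs: one must verify carefully that $\sigma$ is indeed a Lie superalgebra automorphism on the even part (which uses $\pa(a)=\pa(F)$ for $a\otimes F$ even) and that it commutes with the exponential $e^{\ad\,\cdot}$ in the stated sense, and one must keep track of the fact that $\sigma$ is involutive only on the even subspace. Once those points are nailed down, the rest of the argument is a direct transport of Propositions \ref{Prop:3.6_0212} and \ref{Prop:3.14_0730} through the involution $\sigma$.
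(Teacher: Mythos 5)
Your proposal is correct and follows essentially the same route as the paper: the paper's proof rests precisely on the fact that $e^{\ad(n\otimes r)^\sigma}(L^\sigma)=L'^\sigma$ if and only if $e^{\ad(n\otimes r)}(L)=L'$, which is exactly what you derive from $\sigma$ being an automorphism of the even part commuting with $e^{\ad(\cdot)}$, and for (2) the paper likewise identifies $(q^\sigma_{\text{univ}})^{\text{can}}=\sum_{i\in\mathcal{J}}(-1)^{\pa(i)}q_i\otimes w^i$ and concludes via $\Cdiff[w^i]=\Cdiff[(-1)^{\pa(i)}w^i]$. Your write-up simply supplies the details (the semidirect-product check, the involutivity on the even part, and the appeal to uniqueness in Proposition \ref{Prop:3.6_0212}) that the paper leaves implicit.
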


\begin{proof}
The proof follows from the fact that 
\[  e^{\ad(n\otimes r)^\sigma}(L^\sigma)= L'^\sigma \quad \text{ if and only if } \quad e^{\ad (n\otimes r)}(L)= L'.\]
For (2), we note that 
\[ \textstyle q_{\text{univ}}^{\text{can}}= \sum_{i\in \mathcal{J}} q_i \otimes w^i \text{ if and only if } (q^\sigma_{\text{univ}})^{\text{can}}= \sum_{i\in \mathcal{J}} \pa(i)q_i \otimes w^i.\]
Since $\Cdiff[w^i| i \in \mathcal{J}]= \Cdiff[\pa(i) w^i|i \in \mathcal{J}]$, we proved (2).
\end{proof}

%egin{rem} \label{Rem:3.18_0810}
%Using the argument in this section, we can find generators of $\WW(\g,f,k)$ with $\mathcal{L}^{\sigma}.$ In other words, there is $X'^{\sigma}\in \mathcal{V}_{\II}(\p)\otimes \n$ such that \[ \textstyle e^{\ad X'^{\sigma}}\mathcal{L}^{\sigma}= \partial +\sum_{i\in \mathcal{J}} q_i\otimes  w'^{i} -f\otimes1\]
% where $w'^{i}$ generate  $\WW(\g,f,k).$
% In other words, if $Q^{\text{rev}}$ is gauge equivalent to $q_{\text{univ}}^{\text{rev}}$  and $Q^{\text{rev}} \in \mathcal{V}_{\II}(\p) \otimes V_f$ then $Q^{\text{rev}}$ gives rise to free generators of $\WW(\g,f,1).$
% If we apply this to Example \ref{Ex:spo(1|2)W} then the reversed Lax operator
 %\begin{equation}\label{Eqn:rev_can_spo(2|1)}
 % \textstyle e^{\ad X'^{\text{rev}}}\mathcal{L}^{\text{rev}}= \partial -\phi_{od}\otimes  e_{od} + \phi_{ev}\otimes e-1\otimes f
 %\end{equation}
%is obtained. Here $\psi_{od}= -\phi_{od}$ and $\psi_{ev}= \phi_{ev}$ are free generators of $\WW(spo(2|1), f,1).$
%\end{rem}

\section{Lax operators and Lie brackets on $\WW(\g,f,k)/\partial\WW(\g,f,k)$ }

\subsection{Derivatives on a differential superalgebra} \ 

Fix the  superalgebra of differential polynomials 
\[\, \PP=\Cdiff[\, u_i\, |\, i\in I\, ].\, \] 
The index set $I =I_{\bar{0}}\cup I_{\bar{1}}$ consists of two subindex sets $I_{\bar{0}}$ and $I_{\bar{1}}$ such that $u_i$ is even if $i\in I_{\bar{0}}$ and is odd if $i\in I_{\bar{1}}.$

\begin{defn} \label{Def:deriv}
Take $j_1, j_2, \cdots, j_k\in I_{\bar{1}}$ and $n_1, n_2, \cdots, n_k\in \ZZ_{\geq 0}$ such that  $(j_{k_1}, n_{k_1})\neq (j_{k_2},n_{k_2})$ for distinct numbers $k_1, k_2 \in \{1, \cdots, k\}$. Consider an element 
\[ \textstyle \phi_1=  u_{j_1}^{(n_1)} u_{j_2}^{(n_{2})} \cdots u_{j_k}^{(n_k)} \in \Cdiff[u_i | i\in I_{\bar{1}}],\]
where $u_j^{(n)}= \partial^n u_{j}$ and a monomial 
\[ \phi= \phi_0\phi_1 \in \PP, \text{ for } \phi_0\in  \Cdiff[u_i | i\in I_{\bar{0}}].\] 
\begin{enumerate}
\item The {\it (left) derivative} of $\phi$ with respect to $u_{j_t}^{(n_t)}$ for $t=1, \cdots, k$ is 
\[ \textstyle  \frac{\partial \phi}{\partial u_{j_t}^{(n_t)}} = (-1)^{t-1}\,  \phi_0 \cdot u_{j_1}^{(i_1)}u_{j_{2}}^{(i_2)} \cdots u_{j_{t-1}}^{(n_{t-1})}  u_{j_{t+1}}^{(n_{t+1})}  \cdots  u_{j_{k-1}}^{(n_{k-1})} u_{j_{k}}^{(n_{k})}.  \]
If $u_j^{(n)}\neq  u_{j_t}^{(n_t)}$ is an odd element then  $ \frac{\partial \phi}{\partial u_{j}^{(n)}}=0.$ If $j\in I_{\bar{0}}$, we let 
$  \frac{\partial \phi}{\partial u_{j}^{(n)}}= \frac{\partial \phi_0}{\partial u_{j}^{(n)}}\cdot \phi_1.$
\item The {\it right  derivative} of $\phi$ with respect to $u_{j_t}^{(n_t)}$ for $t=1, \cdots, k$ is 
\[ \textstyle  \frac{\partial_R \phi}{\partial_R u_{j_t}^{(n_t)}} = (-1)^{k-t}\,  \phi_0 \cdot u_{j_1}^{(i_1)}u_{j_{2}}^{(i_2)} \cdots u_{j_{t-1}}^{(n_{t-1})}  u_{j_{t+1}}^{(n_{t+1})}  \cdots  u_{j_{k-1}}^{(n_{k-1})} u_{j_{k}}^{(n_{k})}.  \]
If $u_j^{(n)}\neq  u_{j_t}^{(n_t)}$ is an odd element then  $ \frac{\partial_R \phi}{\partial_R u_{j}^{(n)}}=0.$ If $j\in I_{\bar{0}}$, we let 
$  \frac{\partial_R \phi}{\partial_R u_{j}^{(n)}}= \frac{\partial \phi}{\partial u_{j}^{(n)}}$.

\item The {\it (left) variational derivative} and  {\it right variational derivative} of $\phi$ with respect to $u_i$ are
\begin{equation}
 \textstyle \frac{\delta \phi}{\delta u_i} = \sum_{n\in \ZZ_{\geq 0}} (-\partial)^n \frac{\partial \phi}{\partial u_i^{(n)}}\ , \qquad  \frac{\delta_R \phi}{\delta_R u_i}= \sum_{n\in \ZZ_{\geq 0}} (-\partial)^n \frac{\partial_R \phi}{\partial_R u_i^{(n)}}.
\end{equation}

\item   The {\it  variational derivative} of $\phi$ with respect to $\{ \, u_i \, |\, i\in I\, \}$ is 
\[ \textstyle \frac{\delta \phi}{\delta u}= \sum_{i\in I} u_i \otimes \frac{\delta \phi}{\delta u_i}.\]
%  $\frac{\delta_R \phi}{\delta_R u} = \sum_{i\in I}(-1)^{\pa(i)\pa(\phi)+\pa(i)} \frac{\delta \phi}{\delta u_i}\otimes u_i\in \PP \otimes V$, where $\pa(i):=\pa(u_i)$. Also, we denote  $\frac{\delta_R \phi}{\delta_R u_i}:= (-1)^{\pa(i)\pa(\phi)+\pa(i)} \frac{\delta \phi}{\delta u_i}$ for $\phi\in \PP$ so that \[ \textstyle \frac{\delta_R \phi}{\delta_R u}= \sum_{i\in I} \frac{\delta_R \phi}{\delta_R u_i} \otimes u_i\in \PP\otimes V .\]
\end{enumerate}
The derivatives defined in (1), (2), (3), (4) have linearities so that they are well-defined on the set of differential algebra $\PP=\Cdiff[u_i|i\in I].$
\end{defn}

\begin{rem} 
  We have 
\[ \textstyle \frac{\delta \phi}{\delta u_i} = (-1)^{\pa(u_i)\pa(\phi\cdot u_i)} \frac{\delta_R \phi}{\delta_R u_i}. \]
Hence, for the map $\sigma$ in (\ref{signmap}), 
\[ \textstyle \left(\frac{\delta \phi}{\delta u}\right)^\sigma=  \sum_{i\in I}  \left(u_i \otimes \frac{\delta \phi}{\delta u_i }\right)^\sigma= \sum_{i\in I} u_i \otimes \frac{\delta_R \phi}{\delta_R u_i } .\]
\end{rem}

\begin{rem}
If $u_i\in \PP$  is an odd element then  $\frac{\partial}{\partial u_i^{(n)}}$ for any $n\in \ZZ_{\geq 0}$ is an odd derivation, i.e. 
\[\textstyle \frac{\partial}{\partial u_i^{(n)}} (FG)= \left(\frac{\partial}{\partial u_i^{(n)}} F \right)\cdot G + (-1)^{\pa(F)} F\cdot \left(\frac{\partial}{\partial u_i^{(n)}}G\right).\]
\end{rem}

\begin{prop} \label{Prop:master formula}
 For homogeneous elements $f,g \in \PP$, we have 
 \begin{equation} \label{Eqn:master}
\textstyle \{ f_\lambda g\} = \sum_{\substack{i,j\in I \\ m,n\in \ZZ_{\geq 0} }}
 C_{i,j}^{f,g} \frac{\partial_R g}{\partial_R u_j^{(n)}}(\lambda+\partial)^n \{u_{i\, \lambda+\partial\, } u_j\}_{\to}(-\lambda-\partial)^m \frac{\partial f}{\partial u_i^{(m)}}
 \end{equation}
where $ C_{i,j}^{f,g}=(-1)^{\pa(f)\pa(g)+\pa(i)\pa(j)}.$  Note that 
\[ \textstyle \{a_{\lambda+\partial} b\}_{\to} c:= \sum_{i\in \ZZ_{\geq 0}} \frac{1}{n!} a_{(n)}b (\lambda+\partial)^n c \quad \text{ for } \quad  a,b,c\in \mathcal{P}.\]
\end{prop}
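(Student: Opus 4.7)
The plan is to prove the master formula \eqref{Eqn:master} by a double induction on the differential-polynomial length of $f$ and $g$, reducing the general statement to the defining sesquilinearity and super-Leibniz axioms of the Poisson $\lambda$-bracket. Because both sides of \eqref{Eqn:master} are $\CC$-linear in $f$ and in $g$, it suffices to establish the identity when $f$ and $g$ are monomials in the generators $u_i^{(n)}$.

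For the base case I would take $f=u_i$ and $g=u_j$. Then $\partial f/\partial u_{i'}^{(m)} = \delta_{i,i'}\delta_{m,0}$ and $\partial_R g/\partial_R u_{j'}^{(n)} = \delta_{j,j'}\delta_{n,0}$, while $C_{i,j}^{u_i,u_j}=(-1)^{2\pa(i)\pa(j)}=1$, so the right-hand side collapses to $\{u_{i\,\lambda}u_j\}$. Sesquilinearity then upgrades this to arbitrary $f=\partial^a u_i$ and $g=\partial^b u_j$: the left-hand side becomes $(-\lambda)^a(\lambda+\partial)^b\{u_{i\,\lambda}u_j\}$, which the right-hand side reproduces after observing that the only surviving derivatives are $\partial f/\partial u_i^{(a)}=\partial_R g/\partial_R u_j^{(b)}=1$ and unfolding the operator $\{u_{i\,\lambda+\partial}u_j\}_\to(-\lambda-\partial)^a$ against the constant $1$.

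The inductive step reduces a product to its factors via two super-derivation identities. Suppose $g=g_1 g_2$ with $g_1,g_2$ strictly shorter. The super-Leibniz rule gives
\[
\{f_\lambda g_1 g_2\}=(-1)^{\pa(f)\pa(g_1)}g_1\{f_\lambda g_2\}+(-1)^{\pa(g_2)(\pa(f)+\pa(g_1))}g_2\{f_\lambda g_1\},
\]
while the right-derivative satisfies the right super-Leibniz rule
\[
\frac{\partial_R (g_1 g_2)}{\partial_R u_j^{(n)}}=g_1\frac{\partial_R g_2}{\partial_R u_j^{(n)}}+(-1)^{\pa(g_2)\pa(u_j)}\frac{\partial_R g_1}{\partial_R u_j^{(n)}}g_2.
\]
Substituting the latter into the right-hand side of \eqref{Eqn:master} and invoking the inductive hypothesis for $(f,g_1)$ and $(f,g_2)$ should reproduce the Leibniz expansion on the left. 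A symmetric argument, with the left derivative $\partial/\partial u_i^{(m)}$ acting as a left super-derivation, handles factorizations $f=f_1 f_2$; alternatively one may transfer the burden from $f$ to $g$ using skew-symmetry of the $\lambda$-bracket.

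The main obstacle is the sign bookkeeping. Three independent sources of signs appear---the Koszul signs in the super-Leibniz rule, the alternating signs built into the definitions of $\partial/\partial u_i^{(m)}$ and $\partial_R/\partial_R u_j^{(n)}$, and the explicit prefactor $C_{i,j}^{f,g}=(-1)^{\pa(f)\pa(g)+\pa(i)\pa(j)}$---and the proof demands that these cancel term by term. Using a \emph{left} derivative on the $f$-side and a \emph{right} derivative on the $g$-side is precisely what makes the bookkeeping close: each odd $u_i$ pulled out of $f$ must be commuted past the bracket and past $g$ exactly once, contributing a sign $(-1)^{\pa(i)\pa(j)}$ per crossing, while the orientation of the derivatives keeps the remaining factors in the correct order for Leibniz to iterate. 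Verifying this sign cancellation explicitly in the $g=g_1 g_2$ step is the single calculation that carries the entire proposition.
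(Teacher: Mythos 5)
Your proposal is correct and follows essentially the same route as the paper: reduce to generators via sesquilinearity, iterate the super-Leibniz rule over monomial factorizations, and isolate the sign factor $C_{i,j}^{f,g}$ as the only delicate point. The paper's own proof is even terser — it simply asserts the reduction and then decomposes $C_{i,j}^{f,g}$ into three transposition signs (swapping $\frac{\partial f}{\partial u_i^{(m)}}$ past $\frac{\partial_R g}{\partial_R u_j^{(n)}}$, and each of $u_j$, $u_i$ past the opposite derivative) — which is exactly the bookkeeping you flag as the heart of the matter.
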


\begin{proof}
The formula (\ref{Eqn:master}) follows from sesquilinearities and Leibniz rules of $\lambda$-brackets. The only part we have to be careful is the constant factor $C_{i,j}^{f,g}$ in (\ref{Eqn:master}).
One can see that \[ C_{i,j}^{f,g}= (-1)^{(\pa(i)+\pa(f))(\pa(j)+\pa(g))}\cdot C_{i,j}^{f}\cdot C_{j,i}^{g}\]
 for $C_{i,j}^{f}=(-1)^{(\pa(f)+\pa(i))\pa(j)}$ and $C_{j,i}^{g}=(-1)^{(\pa(g)+\pa(j))\pa(i)}.$ 
Note that switching the position of $\frac{\partial f}{\partial u_i^{(m)}}$ and $\frac{\partial_R g}{\partial_R u_j^{(n)}}$ gives rise to the $(-1)^{(\pa(i)+\pa(f))(\pa(j)+\pa(g))}$ in $C_{i,j}^{f,g}$ and switching the position of $u_j$ and $\frac{\partial f}{\partial u_i^{(m)}}$ (resp. $u_i$ and $\frac{\partial_R g}{\partial_R u_j^{(n)}}$) gives rise to the constant factor $C_{i,j}^{f}$ (resp. $C_{j,i}^{g}$). 
\end{proof}

\begin{prop} \label{Prop:vari_der}
\begin{enumerate}
\item For any variable $u_i^{(n)}$ in $\PP$, we have 
\begin{equation} \label{Eqn:4.2_1011}
 \textstyle \left[ \frac{\partial}{\partial u_i^{(n)}}, \partial \right] = \frac{\partial}{\partial u_i^{(n-1)}}.
 \end{equation}
\item Let $\phi\in \PP$. Then
$ \frac{\delta}{\delta u_i} \partial \phi =0$
for any $i\in I$.
\end{enumerate}
\end{prop}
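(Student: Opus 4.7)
The plan for part (1) is to exploit the derivation property. Observe that when $u_i$ is even the operator $\frac{\partial}{\partial u_i^{(n)}}$ is an ordinary (even) derivation of $\PP$, while when $u_i$ is odd Remark 4.3 (preceding the proposition) tells us it is an odd graded derivation. Since $\partial$ is an even derivation, the graded commutator of $\frac{\partial}{\partial u_i^{(n)}}$ and $\partial$ is again a graded derivation of parity $\pa(u_i)$; and because one of the two factors is even, this graded commutator coincides with the ordinary commutator $FG - GF$ used in the statement. The right-hand side $\frac{\partial}{\partial u_i^{(n-1)}}$ is a graded derivation of the same parity $\pa(u_i)$ (with the convention $\frac{\partial}{\partial u_i^{(-1)}} := 0$ when $n=0$). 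Thus both sides of the identity are graded derivations of the same parity, and it suffices to verify the identity on a generating set.

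On a generator $u_j^{(m)}$ both sides evaluate trivially: the left side is
\[
\left[\frac{\partial}{\partial u_i^{(n)}}, \partial\right] u_j^{(m)} = \frac{\partial}{\partial u_i^{(n)}} u_j^{(m+1)} - \partial\,\delta_{ij}\delta_{nm} = \delta_{ij}\delta_{n,m+1},
\]
which is exactly $\frac{\partial}{\partial u_i^{(n-1)}} u_j^{(m)}$. This proves (1).

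For part (2), feed (1) into the definition of the variational derivative. From (1) one has, for every $\phi \in \PP$,
\[
\frac{\partial}{\partial u_i^{(n)}}\,\partial \phi \;=\; \partial\,\frac{\partial \phi}{\partial u_i^{(n)}} \;+\; \frac{\partial \phi}{\partial u_i^{(n-1)}},
\]
so substituting into $\frac{\delta}{\delta u_i}\partial\phi = \sum_{n\ge 0}(-\partial)^n \frac{\partial}{\partial u_i^{(n)}}\partial\phi$ splits the sum into two pieces. The first gives $-\sum_{n\ge 0}(-\partial)^{n+1}\frac{\partial \phi}{\partial u_i^{(n)}}$, and the second, after reindexing $m = n-1$, gives $+\sum_{m\ge 0}(-\partial)^{m+1}\frac{\partial \phi}{\partial u_i^{(m)}}$; the two cancel termwise and we obtain $0$.

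The only real subtlety is bookkeeping the signs in the odd case of (1): one must confirm that the sign $(-1)^{t-1}$ in Definition 4.1 (1) does make $\frac{\partial}{\partial u_i^{(n)}}$ a genuine odd graded derivation, and that $[\,\cdot\,,\partial\,]$ applied to such a derivation preserves the graded-Leibniz rule. Once this is settled, the rest reduces to a one-line check on generators and the telescoping argument above.
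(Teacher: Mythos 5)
Your proof is correct and follows essentially the same route as the paper: part (1) is verified on generators and extended by the graded Leibniz rule (the paper carries out the inductive check on products explicitly, where you invoke the general fact that the commutator of the even derivation $\partial$ with a graded derivation is again a graded derivation of the same parity), and part (2) is the identical telescoping computation. No gaps.
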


\begin{proof}
(1) For any $j\in I$ and $m \in \ZZ_{\geq 0}$, we can check that $ \textstyle \left[ \frac{\partial}{\partial u_i^{(n)}}, \partial \right] (u_j^{(m)}) = \frac{\partial}{\partial u_i^{(n-1)}} (u_j^{(m)}).$
Suppose we get the same element in $\PP$ when we apply $F$ (resp. $G$) to the LHS and RHS of the equation (\ref{Eqn:4.2_1011}). If $i\in I_{\bar{1}}$ then
\begin{equation*}
\begin{aligned}
\textstyle \left[ \frac{\partial}{\partial u_i^{(n)}}, \partial \right] (FG)& \textstyle = \left(\frac{\partial}{\partial u_i^{(n)}}\partial F\right) G+(-1)^{\pa(F)}\partial F\frac{\partial}{\partial u_i^{(n)}}G+ \frac{\partial}{\partial u_i^{(n)}}F \partial G+(-1)^{\pa(F)}F\frac{\partial}{\partial u_i^{(n)}}\partial G  \\
 & \textstyle - \left( \partial\frac{\partial}{\partial u_i^{(n)}} F\right) G -(-1)^{\pa(F)}\partial F\frac{\partial}{\partial u_i^{(n)}}G- \frac{\partial}{\partial u_i^{(n)}}F \partial G-(-1)^{\pa(F)}F\partial \frac{\partial}{\partial u_i^{(n)}} G \\
 &\textstyle= \left[ \frac{\partial}{\partial u_i^{(n)}}, \partial \right](F) \cdot  G + (-1)^{\pa(F)}F \cdot \left[ \frac{\partial}{\partial u_i^{(n)}}, \partial \right](G) \\
 &\textstyle = \frac{\partial}{\partial u_i^{(n-1)}}F \cdot G + (-1)^{\pa(F)}F\frac{\partial}{\partial u_i^{(n-1)}}(G)= \frac{\partial}{\partial u_i^{(n-1)}}(FG).
  \end{aligned}
 \end{equation*}
For $i\in I_{\bar{0}}$, the same argument works. By induction, we proved (1).\\

(2) By (1), we have 
\[ \textstyle \sum_{i\in \ZZ_{\geq 0}}(-\partial)^n \frac{\partial}{\partial u_i^{(n)}}\partial=-\sum_{i\in \ZZ_{\geq 0}}(-\partial)^{n+1}\frac{\partial}{\partial u_i^{(n)}}  +\sum_{i\in \ZZ_{\geq 1}} (-\partial)^{n} \frac{\partial}{\partial u_i^{(n-1)}}=0.\]
\end{proof}

\subsection{Lie brackets on $\WW(\g,f,k)/\partial \WW(\g,f,k)$} \label{Subsec:Lie bracket}. \ 

Now we are ready to find Lie superalgebra structures on a quotient space of W-algebra via Lax operators. In this section, we assume $\g$ is a finite dimensional simple Lie superalgebra with even supersymmetric bilinear invariant form $(\, |\, )$. 

Let us denote  by $\g((z^{-1})):= \g[z]\oplus z^{-1}\g[[z^{-1}]]$  the Lie superalgebra endowed with the bracket $[a z^m, b z^n]= [a,b] z^{m+n}$ where $[a,b]$ is the Lie bracket on $\g$. The vector superspace   $\g((z^{-1}))\otimes \mathcal{V}_{\II}(\p)$ is a Lie superalgebra endowed with the bracket such that 
\[ \, [a z^m \otimes F, b z^n \otimes  G]= (-1)^{\pa(b)\pa(F)}[a z^m,bz^n]\otimes FG \]
for  $a, b\in \g$, $n,m\in \ZZ$ and $F,G \in \mathcal{V}_{\II}(\p).$ We extend the Lie bracket to that on $\CC\partial \ltimes  \g((z^{-1}))\otimes  \mathcal{V}_{\II}(\p)$  by considering  $\CC\partial$ as the trivial Lie algebra. \\

Consider the universal Lax operator associated to $\g$ and $\Lambda$: 
\begin{equation} \label{Lax_lambda}
 \mathcal{L}(\Lambda)= k\partial+q_{\text{univ}}-\Lambda\otimes 1 \in \CC\partial \ltimes \g((z^{-1}))\otimes \mathcal{V}_\II(\p),
 \end{equation}
where $q_{\text{univ}}$ is that in (\ref{Eqn:univ}) and $\Lambda=f+zs$ for a nonzero even element $s$  in $\g(d)$.  Here, $d$ is the largest eigenvalue such that $\g(d)\neq 0$. The operator acts on  $ \g((z^{-1}))\otimes \mathcal{V}_{\II}(\p)$ by the adjoint action. Note that we assume the even part of $\g(d)\neq 0$ (see Remark \ref{Rem:s}).

%\begin{rem}
%{\color{red} find references } When $\g$ is a subalgebra of $\gll(m|n)$, we have nontrivial even element in $\g(d)$ for the %maximal eigenvalue $d$ of $h/2.$ An even nilpotent element $f\in \gll(m|n)$ can be associated to $(p|q)$ where $p=(p_1, \cdots, p_u)$ and $q=(q_1, \cdots, q_v)$  are partitions of $m$ and $n$, respectively. Assuming that $p_1\geq\cdots\geq p_u\geq 0$ and $q_1\geq \cdots \geq q_v\geq 0$, the even part $\g_{\bar{0}}$ and the odd part $\g_{\bar{1}}$ of $\g$ have maximal eigenvalues $d_0$ and $d_1$ for 
%\[ d_0= \text{max}\{2 (p_1-1),\, 2(q_1-1)\}\text{ and }  d_1= p_1+q_1-2.\]
%Since $d_0 \geq d_1$, we can take a nonzero even element  $s\in \g(d)$ for $d=d_0=\text{max}\{d_0, d_1\}.$
%\end{rem}

\begin{rem}
 For the operator $\mathcal{L}$ associated to $\g$ and $f$ in (\ref{Eqn:univ}), we have $\mathcal{L}(\Lambda)= \mathcal{L}-zs\otimes 1.$
Denote by
\begin{equation}
\mathcal{L}_{[1]}(\Lambda):=\mathcal{L} \quad \text{ and }\quad  \mathcal{L}_{[2]}(\Lambda):=-s \otimes 1.
\end{equation}
The term $\mathcal{L}_{[2]}(\Lambda)$ do the crucial role to define bi-Poisson structures on $\WW(\g,f,k)$. However, this part is not important when we find  generators of $\WW(\g,f,k)$.
\end{rem}

Also, using the map (\ref{signmap}), we denote the {\it sign twisted universal Lax operator} by
\begin{equation}\label{Lax_lambda_rev}
\mathcal{L}^\sigma(\Lambda)=k\partial + q^{\sigma}_{\text{univ}}-\Lambda\otimes 1 \in  \CC\partial \ltimes  \g((z^{-1})) \otimes \mathcal{V}_{\II}(\p).
\end{equation}

 Now, via $\mathcal{L}^\sigma(\Lambda)$,  we aim to define Lie brackets on $\WW(\g,f,k)/\partial \WW(\g,f,k)$ for the subspace $\partial \WW(\g,f,k):= \{ \, \partial W \, | \, W\in\WW(\g,f,k) \, \} \subset \WW(\g,f,k).$ (In Remark \ref{Rem:4.16_0721}, we explain why we consider $\mathcal{L}^\sigma(\Lambda)$ instead of $\mathcal{L}(\Lambda)$.)

%  Here we recall that $\WW(\g,f,k)$ is a differential superalgebra of polynomials generated the elements we described in (\ref{Prop:3.14_0730}).
 
 The following notion is analogous to the notion in (2) of Definition \ref{Def:HEviaPVA}.
 
 \begin{defn}
 Let $\mathcal{V}$ be a differential superalgebra with the derivation $\partial$ and denote $\partial \mathcal{V}:=\{ \, \partial V \, | \, V\in \mathcal{V}\, \}$. For the map 
 \[ \textstyle \int: \mathcal{V} \to \mathcal{V}/ \partial \mathcal{V}, \quad V\mapsto V+\partial \mathcal{V}=:\int V,\]
 we call $\int V$ the {\it local functional} of $V\in \mathcal{V}.$
 \end{defn}

Let the bilinear form $(\cdot|\cdot): \g((z^{-1})) \otimes \mathcal{V}_{\II}(\p) \times \g((z^{-1})) \otimes \mathcal{V}_{\II}(\p) \to \mathcal{V}_{\II}(\p)$ be defined by 
 \[ ( a z^m\otimes F  |  b z^n \otimes G )= (-1)^{\pa(b)\pa(F)}(a|b)\delta_{m+n, 0}FG\, .\]
 Define two bilinear brackets $\{ \, ,\, \}_i: \mathcal{V}_{\II}(\p)\times \mathcal{V}_{\II}(\p)\to  \mathcal{V}_{\II}(\p)$, for $i=1,2$,  by 
\begin{equation} \label{Poi_W}
\begin{aligned}
& \textstyle \{ \phi,\psi\}_1= \left( \frac{\delta \phi }{\delta q}\right. \left| \left[ \frac{\delta \psi }{\delta q}\,, \mathcal{L}^\sigma (\Lambda)\right]\right)=\left( \frac{\delta \phi }{\delta q}\right. \left| \left[ \frac{\delta \psi }{\delta q}\,, \mathcal{L}^\sigma_{[1]}(\Lambda)\right]\right),\\
& \textstyle \{\phi,\psi\}_2= - \left( \frac{\delta \phi }{\delta q}  \right. \left|  z^{-1} \left[ \frac{\delta \psi }{\delta q}\, , \mathcal{L}^\sigma(\Lambda)\right]\right)=-\left( \frac{\delta \phi }{\delta q}\right. \left|\left[ \frac{\delta \psi }{\delta q}\,, \mathcal{L}^\sigma_{[2]}(\Lambda)\right]\right),\\
\end{aligned}
\end{equation}
where $q=(q^i)_{i\in I}$ is the basis of $\p$ in (\ref{Eqn:univ}) and $z^{-1}(a z^n\otimes F):= a z^{n-1}\otimes F \in \g((z^{-1})) \otimes \mathcal{V}(\p)$. By Proposition \ref{Prop:vari_der} (2), we can consider the induced bilinear brackets on $\mathcal{V}_{\II}(\p)/\partial\mathcal{V}_{\II}(\p)$
\begin{equation}\label{Eqn:4.2_0413}
\textstyle \, [ \ , \  ]_i: \mathcal{V}_{\II}(\p)/\partial \mathcal{V}_{\II}(\p) \times \mathcal{V}_{\II}(\p)/\partial \mathcal{V}_{\II}(\p) \to  \mathcal{V}_{\II}(\p)/\partial \mathcal{V}_{\II}(\p), \quad i=1,2
\end{equation}
such that $[ \int \phi , \int \psi ]_i: = \int \{\phi, \psi\}_i.$

%\begin{lem}
%For the brackets (\ref{Poi_W}), we have
%\[ \textstyle \int \{  \phi_1  \phi_2,  \psi\}_i= \int \phi_1 \{  \phi_2, \psi\}_i+(-1)^{\pa(\phi_1)\pa(\phi_2)}\int \phi_2\{\phi_1, \psi\}_i \]
%for $i=1,2$.
%\end{lem}

%\begin{proof}
%Let $\phi=\phi_1 \phi_2.$
%Observe that $\frac{\delta}{\delta q} \phi= \sum_{j\in I, n\in \ZZ_{\geq 0}} q_j \otimes  (-\partial)^n \frac{\partial}{\partial q_j^{(n)}}\phi.$ When we compute brackets $\{\phi , \psi \}_i$ by the formulas (\ref{Poi_W}), the results have the form 
%\[ \sum_{j\in J} \sum_{n\in \ZZ_{\geq 0}} (-\partial)^n \frac{\partial}{\partial q_j^{(n)}}\phi \cdot M^i_j\]
%for some $M^i_j\in \mathcal{V}(\p).$ Hence, by applying integration by parts, we have  
%\begin{equation*}
%\begin{aligned}
% & \int \{\phi, \psi\}_i  =\int  \sum_{j\in J} \sum_{n\in \ZZ_{\geq 0}} \frac{\partial}{\partial q_j^{(n)}}\phi \cdot \partial^n M^i_j\\
% & =  \int \sum_{j\in J} \sum_{n\in \ZZ_{\geq 0}}\left( \frac{\partial}{\partial q_j^{(n)}} \phi_1\right)  \phi_2  \partial^n M^i_j+ (-1)^{\pa(\phi_1)\pa(\phi_2)} \left( \frac{\partial}{\partial q_j^{(n)}} \phi_2\right)  \phi_1  \partial^n M^i_j\\
% & = \int (-1)^{\pa(\phi_2)\pa(\psi)}\{\phi_1, \psi\}_i \phi_2 + (-1)^{\pa(\phi_1)(\pa(\phi_2)+\pa(\psi))} \{\phi_2, \psi\}_i \phi_1
%\end{aligned}
%\end{equation*}
%\end{proof}

\begin{lem} \label{Lem:4.6_0411}
Let $ q' , r \in\mathcal{F}_{\g,f}= (\m^\perp\otimes\mathcal{V}_{\II}(\p))_{\bar{0}}$. For $\epsilon\in \CC$, we have 
\begin{equation}\label{Eqn:4.2}
 \left. \frac{\int \phi(q'+\epsilon r) - \int \phi(q')}{\epsilon}\right|_{\epsilon=0} = \int \left( r^\sigma \left| \frac{\delta \phi(q')}{\delta q}\right. \right).
\end{equation}
\end{lem}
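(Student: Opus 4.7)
The plan is to reduce to the case where $\phi$ is a monomial in $\mathcal{V}_\II(\p)=\Cdiff[q^i\,|\,i\in I]$ and then verify the identity by the super-Leibniz rule together with integration by parts modulo $\partial\mathcal{V}_\II(\p)$. Write $q'=\sum_i q_i\otimes Q^i$ and $r=\sum_i q_i\otimes R^i$; since $r$ is even, each $R^i$ has the same parity as $q_i$ and hence as $q^i$. By (\ref{Eqn:functional_even+odd}) and (\ref{Eqn:functional_even+odd_extend}), the evaluation $\phi\mapsto\phi(q')$ is the differential superalgebra homomorphism sending $q^i\mapsto Q^i$, so $\phi(q'+\epsilon r)$ is obtained by substituting $(q^i)^{(n)}\mapsto(Q^i)^{(n)}+\epsilon(R^i)^{(n)}$ in $\phi$. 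Linearity in $\phi$ on both sides of (\ref{Eqn:4.2}) reduces the proof to a single monomial written in the ordered form of Definition \ref{Def:deriv}.

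Next I would apply the super-Leibniz rule and extract the $\epsilon^1$ coefficient: each occurrence of $u_{j_t}^{(n_t)}$ in $\phi$ contributes a term in which $(Q^{j_t})^{(n_t)}$ is replaced in place by $(R^{j_t})^{(n_t)}$, and moving this odd factor to the leftmost slot introduces the Koszul sign $(-1)^{t-1}$, which is precisely the sign built into the left derivative of Definition \ref{Def:deriv}(1). Combining the contributions of even and odd variables yields
\begin{equation*}
\frac{d}{d\epsilon}\bigg|_{\epsilon=0}\phi(q'+\epsilon r)\;=\;\sum_{i\in I,\,n\ge 0}(R^i)^{(n)}\cdot\frac{\partial\phi}{\partial q^{i\,(n)}}\bigg|_{q'}.
\end{equation*}

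Now I would pass to $\mathcal{V}_\II(\p)/\partial\mathcal{V}_\II(\p)$. Since $\partial$ is a derivation regardless of parity, $\partial^n F\cdot G\equiv(-1)^n F\cdot\partial^n G\pmod{\partial\mathcal{V}_\II(\p)}$ for all $F,G\in\mathcal{V}_\II(\p)$, and evaluation at $q'$ commutes with $\partial$ by (\ref{Eqn:functional_even+odd_extend}); transferring the $n$ derivatives off $R^i$ onto the right factor converts the above display into $\sum_{i\in I}\int R^i\cdot\delta\phi/\delta q^i(q')$. It remains to identify this with the pairing on the right of (\ref{Eqn:4.2}). Unfolding $r^\sigma=\sum_j(-1)^{\pa(q_j)\pa(R^j)}q_j\otimes R^j$ together with the bilinear form $(a\otimes F\,|\,b\otimes G)=(-1)^{\pa(b)\pa(F)}(a|b)FG$ and the duality $(q_j|q^i)=\delta_{ij}$ produces two factors of $(-1)^{\pa(q_i)\pa(R^i)}$—one from the twist $\sigma$ and one from moving $q^i$ past $R^i$ inside the pairing—which cancel, giving $(r^\sigma\,|\,\delta\phi(q')/\delta q)=\sum_i R^i\cdot\delta\phi/\delta q^i(q')$ and completing the proof. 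The only real obstacle is the bookkeeping of Koszul signs in the Leibniz-rule step, and it is exactly this sign cancellation that forces the twist $\sigma$ to appear on the right-hand side of the formula.
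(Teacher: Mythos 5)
Your proof is correct and follows essentially the same route as the paper's: expand a monomial $\phi(q'+\epsilon r)$ by the super-Leibniz rule, identify the Koszul sign of the $\epsilon^1$-coefficient with the sign in the left derivative, integrate by parts to produce $\sum_i R^i\,\delta\phi/\delta q^i$, and check that the sign from $\sigma$ cancels the sign in the bilinear form. The only cosmetic difference is that you treat a general even $r=\sum_i q_i\otimes R^i$ uniformly, whereas the paper splits off the purely even case and verifies the odd case for a single summand $r=q_i\otimes r^i$.
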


\begin{proof}
If $r \in (\m^\perp)_{\bar{0}}\otimes (\mathcal{V}_{\II}(\p))_{\bar{0}}$ then the proof is similar to the non-super algebras cases in \cite{BDHM}. Suppose $r= q_i \otimes r^i$ for $q_i\in (\m^\perp)_{\bar{1}}.$ Then 
\begin{equation}\label{Eqn:4.9_0516}
\textstyle  \left( r^\sigma \left| \frac{\delta \phi(q')}{\delta q}\right. \right)=\left( -q_i\otimes r^i \left| \frac{\delta \phi(q')}{\delta q}\right. \right)= r^i (q_i\otimes 1 | q^i\otimes 1) \frac{\delta\phi(q')}{\delta q^i}= r^i \frac{\delta\phi(q')}{\delta q^i}. 
\end{equation}
Also, we can see the LHS of (\ref{Eqn:4.2}) is the same as $r^i \frac{\delta\phi(q')}{\delta q^i}.$ In detail, for $\phi= \partial^{n_1} (q^{i_1})\, \partial^{n_2} (q^{i_2})\,\cdots \, \partial^{n_t} (q^{i_t})$, if we denote the first $k-1$ terms in $\phi$ by $\phi_k=\partial^{n_1}( q^{i_1}) \, \partial^{n_2} (q^{i_2})\, \cdots \, \partial^{n_{k-1}} (q^{i_{k-1}})$ and the last $t-k$ terms  in $\phi$ by $\psi_k=\partial^{n_{k+1}} (q^{i_{k+1}}) \, \partial^{n_{k+2}} (q^{i_{k+2}})\, \cdots \, \partial^{n_{t}} (q^{i_{t}})$ for $k=0,1,\cdots, t$ then 
\[ \phi= \phi_k\cdot  \partial^{n_k} q^{i_k} \cdot \psi_k \]
for any $k.$ Now, for such $\phi$, let $r=q_i \otimes r^i$ and $q'=\sum_{j\in I} q_j \otimes Q^j$. Then we have
\begin{equation} \label{Eqn:4.10_0516}
\begin{aligned}
& \textstyle  \left.\frac{\phi(q'+\epsilon r)-\phi(q')}{\epsilon}\right|_{\epsilon=0}   = \sum_{k=0}^t \phi_k(q') \cdot \partial^{n_k} q^{i_k}(r) \cdot \psi_k(q') \\
& \textstyle = \sum_{k=1}^t (-1)^{\pa(q_i)\pa(\phi_k)}\partial^{n_k} q^{i_k}(r) \cdot \phi_k(q')\psi_k(q') \\
& \textstyle = \sum_{k=1}^t \partial^{n_k}\big( r^i (q_i | q^{i_k})\big)(-1)^{\pa(q_i)\pa(\phi_k)}   \phi_k(q')\psi_k(q') \\
& \textstyle = \sum_{n\geq 0 } \partial^n r^i \frac{\partial \phi}{\partial q^{i\, (n)}}.
\end{aligned}
\end{equation}
The last equality holds since $(q_i| q^j)= \delta_{ij}.$ Since $\int \partial(F) G= \int -F \partial(G)$, the lemma is proved by (\ref{Eqn:4.9_0516}) and  (\ref{Eqn:4.10_0516}).
\end{proof}

\begin{prop}
The brackets $[ \ , \ ]_i$ for $i=1,2$ satisfy skew-symmetry.
\end{prop}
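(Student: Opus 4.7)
The plan is to establish super-skew-symmetry on the quotient, i.e., to show that for homogeneous $\phi,\psi\in\mathcal{V}_{\II}(\p)$,
\[
\{\phi,\psi\}_i+(-1)^{\pa(\phi)\pa(\psi)}\{\psi,\phi\}_i\ \in\ \partial\mathcal{V}_{\II}(\p),\qquad i=1,2,
\]
using three algebraic ingredients: (a) the invariance $(A\,|\,[B,C])=([A,B]\,|\,C)$ of the extended even supersymmetric form on $\g((z^{-1}))\otimes\mathcal{V}_{\II}(\p)$, which follows from the invariance on $\g$ together with a routine cancellation of parity sign factors; (b) the super skew-symmetry $[B,C]=-(-1)^{\pa(B)\pa(C)}[C,B]$; and (c) the semidirect-product identity $[G,k\partial]=-k\,\partial G$ for $G\in\g((z^{-1}))\otimes\mathcal{V}_{\II}(\p)$, together with the Leibniz rule $\partial(F|G)=(\partial F|G)+(F|\partial G)$ (valid because $\partial$ is even and acts only on the $\mathcal{V}_{\II}(\p)$-component).

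Write $F=\tfrac{\delta\phi}{\delta q}$ and $G=\tfrac{\delta\psi}{\delta q}$; note $\pa(F)=\pa(\phi)$ and $\pa(G)=\pa(\psi)$ since each summand $q^i\otimes\tfrac{\delta\phi}{\delta q^i}$ carries parity $\pa(\phi)$. First I would decompose
\[
\mathcal{L}^\sigma(\Lambda)=k\partial+q^\sigma_{\text{univ}}-\Lambda\otimes 1
\]
in the definition of $\{\phi,\psi\}_1$. For each purely Lie-algebraic summand $X\in\{q^\sigma_{\text{univ}},\,-\Lambda\otimes 1\}\subset \g((z^{-1}))\otimes\mathcal{V}_{\II}(\p)$, apply (a) and (b) to obtain
\[
(F|[G,X])+(-1)^{\pa(\phi)\pa(\psi)}(G|[F,X])=([F,G]|X)-([F,G]|X)=0,
\]
so these contributions already cancel on the nose. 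The remaining $k\partial$-contribution, via (c) and the supersymmetry $(G|\partial F)=(-1)^{\pa(\phi)\pa(\psi)}(\partial F|G)$, gives
\[
-k(F|\partial G)-k(-1)^{\pa(\phi)\pa(\psi)}(G|\partial F)=-k\bigl[(F|\partial G)+(\partial F|G)\bigr]=-k\,\partial(F|G),
\]
which lies in $\partial\mathcal{V}_{\II}(\p)$. Passing to the quotient $\mathcal{V}_{\II}(\p)/\partial\mathcal{V}_{\II}(\p)$ then yields the required skew-symmetry of $[\,,\,]_1$.

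For $\{\phi,\psi\}_2=-(F\,|\,[G,\mathcal{L}^\sigma_{[2]}(\Lambda)])=(F\,|\,[G,s\otimes 1])$ the situation is easier: $s\otimes 1$ carries no $\partial$-term, so the argument using (a) and (b) alone gives exact cancellation before passing to the quotient, hence $[\,,\,]_2$ is even super-skew-symmetric on $\mathcal{V}_{\II}(\p)$ itself.

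I expect no substantial obstacle here, since the heavy lifting is bookkeeping of parity signs; the only subtle point is verifying (a) for the extended bilinear form, where one must check that the prefactors $(-1)^{\pa(b)\pa(F)}$ appearing in the bracket and in the form collapse to exactly the same sign on both sides of $(A|[B,C])=([A,B]|C)$.
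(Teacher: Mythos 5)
Your proposal is correct and follows essentially the same route as the paper: the contributions of the purely Lie-algebraic summands of $\mathcal{L}^\sigma(\Lambda)$ cancel by invariance and super skew-symmetry of the extended bilinear form, while the $k\partial$-term contributes a total derivative $-k\,\partial\bigl(\frac{\delta\phi}{\delta q}\big|\frac{\delta\psi}{\delta q}\bigr)$, which vanishes after applying $\int$ (the paper phrases this as integration by parts on $\int(a\otimes F\,|\,[b\otimes G,\partial])$). Your version is somewhat more explicit about the decomposition and the parity bookkeeping, but there is no substantive difference in the argument.
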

\begin{proof}
For $\phi=a\otimes F$ and $ \psi=b\otimes G$ in $ \g \otimes \mathcal{V}_{\II}(\p)$, we have
\begin{equation*}
\begin{aligned}
& \textstyle \int (a\otimes F| [b\otimes G, \partial])=\int (a\otimes F |-b \otimes \partial G)= - \int (-1)^{\pa(\phi)\pa(\psi)} (b\otimes \partial G| a\otimes F) \\ & \textstyle = \int (-1)^{\pa(\phi)\pa(\psi)}(b\otimes G| a\otimes \partial F )= - \int (-1)^{\pa(\phi)\pa(\psi)}(b\otimes G| [a\otimes F, \partial]).
\end{aligned}
\end{equation*}
Hence, by invariance and skew-symmetry of the bilinear form on $\g \otimes \mathcal{V}_{\II}(\p),$ we have
\[\textstyle   [\int \phi, \int \psi]_i = \int \{\phi, \psi\}_i=- (-1)^{\pa(\phi)\pa(\psi)} \int \{\psi, \phi\}_i=- (-1)^{\pa(\phi)\pa(\psi)}[ \int \psi, \int \phi]_i , \quad i=1,2.\]
\end{proof}

\begin{lem} \label{lem:4.9_0413}
For $\psi \in \mathcal{V}_{\II}(\p)$, we have: 
\begin{equation*}
\begin{aligned}
 & \textstyle [ \int  q^i, \int \psi ]_1=\sum_{j\in I }\int\left([q^i, q^j]-  (q^i|q^j)k \partial \right)\frac{\delta}{\delta q^j} \psi, \\
& \textstyle [\int q^i,  \int \psi]_2=\sum_{j\in I }\int ([q^i,q^j]|s) \frac{\delta}{\delta q^j} \psi,\\
\end{aligned}
\end{equation*}
where $\{q^j|j\in I\}$ is a basis of $\p$.
\end{lem}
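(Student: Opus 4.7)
The plan is a direct unfolding of the definitions (\ref{Poi_W}). Since $\frac{\delta q^i}{\delta q^l}=\delta_{il}$, we have $\frac{\delta q^i}{\delta q}=q^i\otimes 1$; writing $\psi_j:=\frac{\delta\psi}{\delta q^j}$, we also have $\frac{\delta\psi}{\delta q}=\sum_j q^j\otimes\psi_j$. I would then expand the commutator $[\tfrac{\delta\psi}{\delta q},\mathcal{L}^\sigma]$ into three pieces coming from $k\partial$, $q^\sigma_{\text{univ}}=\sum_l(-1)^{\pa(l)}q_l\otimes q^l$, and $-f\otimes 1$, using the semidirect bracket (\ref{Eqn:Semidirect}) together with $[a\otimes F,b\otimes G]=(-1)^{\pa(b)\pa(F)}[a,b]\otimes FG$, and pair each piece against $q^i\otimes 1$ through $(a\otimes F|b\otimes G)=(-1)^{\pa(b)\pa(F)}(a|b)FG$.

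The $k\partial$-piece contributes $-k\sum_j(q^i|q^j)\partial\psi_j$, which matches the $-(q^i|q^j)k\partial\psi_j$ term of the target exactly. For the other two pieces I would apply invariance $([a,b]|c)=(a|[b,c])$ together with the observation that $(q_l|[q^i,q^j])\neq 0$ forces $\pa(l)=\pa(i)+\pa(j)$, so that $(q^i|[q^j,q_l])=(-1)^{\pa(l)}(q_l|[q^i,q^j])$; the $(-1)^{\pa(l)}$ produced here absorbs precisely the $(-1)^{\pa(l)}$ of the sign-twisted Lax operator, yielding $\sum_{j,l}(-1)^{\pa(l)\pa(\psi_j)}(q_l|[q^i,q^j])\psi_j q^l$ from the $q^\sigma_{\text{univ}}$-piece and $-\sum_j (f|[q^i,q^j])\psi_j$ from the $-f\otimes 1$ piece. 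Interpreting the $\g$-element $[q^i,q^j]$ inside $\mathcal{V}_\II(\p)$ via the splitting $[q^i,q^j]=\pi_{\p}([q^i,q^j])+\pi_{\m}([q^i,q^j])$ and the quotient identification $m\equiv-(f|m)$ for $m\in\m$, and noting $\pi_{\p}([q^i,q^j])=\sum_l q^l(q_l|[q^i,q^j])$ together with the supersymmetric swap $q^l\psi_j=(-1)^{\pa(l)\pa(\psi_j)}\psi_j q^l$, the product $[q^i,q^j]\psi_j\in\mathcal{V}_\II(\p)$ unfolds as exactly the sum of the two contributions above. This proves the first identity already at the level of $\{q^i,\psi\}_1 \in \mathcal{V}_\II(\p)$, and applying $\int$ gives the desired bracket formula.

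For the second formula the computation is much shorter: in $z^{-1}[\tfrac{\delta\psi}{\delta q},\mathcal{L}^\sigma(\Lambda)]$, the contributions from $k\partial$, $q^\sigma_{\text{univ}}$ and $-f\otimes 1$ all sit in $z^0\g\otimes\mathcal{V}_\II(\p)$ and are shifted by $z^{-1}$ to $z$-degree $-1$, so they pair trivially with $q^i\otimes 1$ thanks to the $\delta_{n+m,0}$ in the bilinear form. Only the piece $[\tfrac{\delta\psi}{\delta q},-zs\otimes 1]=-z\sum_j[q^j,s]\otimes\psi_j$ contributes, and after the $z^{-1}$-shift and the external sign of the definition one obtains $\sum_j (q^i|[q^j,s])\psi_j=\sum_j ([q^i,q^j]|s)\psi_j$ by invariance, as claimed. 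The main obstacle throughout is the bookkeeping of supersymmetric signs — most notably the exact cancellation between the $(-1)^{\pa(l)}$ produced by the sign-twist and the $(-1)^{\pa(l)^2}=(-1)^{\pa(l)}$ produced when moving the bracket inside the invariant form — but no further structural difficulty intervenes, since the $\m$-component of $[q^i,q^j]$ is absorbed cleanly by the ideal identification in $\mathcal{V}_\II(\p)$ and the $z$-shift in bracket $2$ kills all superfluous terms.
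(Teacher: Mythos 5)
Your proposal is correct and follows essentially the same route as the paper's proof: expand $\bigl(q^i\otimes 1\,\big|\,[\frac{\delta\psi}{\delta q},\mathcal{L}^\sigma(\Lambda)]\bigr)$ into the $k\partial$, $q_{\text{univ}}^\sigma$ and $-f\otimes 1$ (resp.\ $-zs\otimes 1$) contributions, use invariance of the form, and recombine the last two via the identification $[q^i,q^j]\equiv\sum_{l}(q_l|[q^i,q^j])q^l-([q^i,q^j]|f)$ in $\mathcal{V}_\II(\p)$. Your treatment is in fact slightly more explicit than the paper's, in that you verify the cancellation of the sign-twist factor $(-1)^{\pa(l)}$ and carry out the $z$-degree argument for the second bracket, which the paper dismisses with ``can be proved similarly.''
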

\begin{proof}
By expanding the LHS of $\int \{ q^i, \psi\}_1$, we obtain
\begin{equation}
\begin{aligned}
& \textstyle [\int q^i, \int \psi ]_1  =\textstyle  \sum_{j,j'\in I} \int \left( q^i\otimes 1 \right| \left. [ q^j\otimes \frac{\delta}{\delta q^j} \psi   , k \partial+\pa(j')q_{j'}\otimes q^{j'} -f\otimes 1]\right)\\
& = \textstyle  \sum_{j\in I}\int (q^i|q^j)(-k \partial)\frac{\delta}{\delta q^j} \psi+   \sum_{j, j'\in I}\int q^{j'}(q_{j'}|[q^i, q^j])  \frac{\delta}{\delta q^j} \psi-\sum_{j\in I} ([q^i,q^j]|f) \frac{\delta}{\delta q^j} \psi,
\end{aligned}
\end{equation}
where $\{q_{j}|j\in I\}$ is the basis  of $\m^\perp$ such that $(q_j| q^{j'})= \delta_{j,j'}$ and $\pa(j')$ is the parity of $q^{j'}$.
In $\mathcal{V}_\II(\p)$, we have 
\[ \textstyle \sum_{j\in I}[q^i, q^j]\frac{\delta}{\delta q^j} \psi=  \sum_{j, j'\in I}\int([q^i, q^j]|q_{j'})q^{j'}  \frac{\delta}{\delta q^j} \psi-\sum_{j\in I} ([q^i,q^j]|f) \frac{\delta}{\delta q^j} \psi.\]
Hence the first equality is proved. The second equality also can be proved similarly.
\end{proof}

\begin{prop} \label{Prop:Lie stru}
We have the following equations: 
\begin{equation}\label{Eqn:4.4_0413}
\begin{aligned}
 & \textstyle [ \int  \phi, \int \psi ]_1=\sum_{i,j\in I }\int \frac{\delta_R}{\delta_R q^i} \phi \left([q^i, q^j]-  (q^i|q^j)k\partial \right)\frac{\delta}{\delta q^j} \psi,\\
 &  \textstyle [\int \phi, \int \psi]_2=\sum_{i,j\in I }\int \frac{\delta_R}{\delta_R q^i} \phi  ([q^i,q^j]|s) \frac{\delta}{\delta q^j} \psi.
\end{aligned}
\end{equation}
\end{prop}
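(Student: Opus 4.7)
The plan is to prove both formulas simultaneously by reducing the general $\phi$ to the base case $\phi=q^i$ already handled in Lemma~\ref{lem:4.9_0413}, using only the bilinearity of the pairing together with the left/right variational-derivative identity recorded in the Remark after Definition~\ref{Def:deriv}.

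First I will expand $\tfrac{\delta\phi}{\delta q}=\sum_{i\in I} q^i\otimes \tfrac{\delta\phi}{\delta q^i}$ inside the defining formula \eqref{Poi_W}, obtaining
\[
\int\{\phi,\psi\}_1 \;=\; \sum_{i\in I} \int\Big(q^i\otimes\tfrac{\delta\phi}{\delta q^i}\,\Big|\,\big[\tfrac{\delta\psi}{\delta q},\mathcal{L}^{\sigma}(\Lambda)\big]\Big).
\]
The crux is then to pull the scalar $\tfrac{\delta\phi}{\delta q^i}$ out of the pairing. Using $(a\otimes F\,|\,b\otimes G)=(-1)^{\pa(b)\pa(F)}(a|b)FG$ together with the fact that non-vanishing of $(q^i|q^j)$ forces $\pa(q^i)=\pa(q^j)$, a short sign calculation yields, for every $A\in\g((z^{-1}))\otimes\mathcal{V}_{\II}(\p)$,
\[
\big(q^i\otimes\tfrac{\delta\phi}{\delta q^i}\,\big|\,A\big) \;=\; (-1)^{\pa(q^i)\pa(\phi\cdot q^i)}\,\tfrac{\delta\phi}{\delta q^i}\cdot\big(q^i\otimes 1\,\big|\,A\big) \;=\; \tfrac{\delta_R\phi}{\delta_R q^i}\cdot\big(q^i\otimes 1\,\big|\,A\big),
\]
the last equality being exactly the left-to-right derivative conversion from the Remark after Definition~\ref{Def:deriv}.

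Next I will apply this factorization with $A=\big[\tfrac{\delta\psi}{\delta q},\mathcal{L}^{\sigma}(\Lambda)\big]$ and invoke Lemma~\ref{lem:4.9_0413}, whose computation in fact establishes the identity
\[
\big(q^i\otimes 1\,\big|\,[\tfrac{\delta\psi}{\delta q},\mathcal{L}^{\sigma}(\Lambda)]\big) \;=\; \sum_{j\in I}\big([q^i,q^j]-(q^i|q^j)k\partial\big)\tfrac{\delta\psi}{\delta q^j}
\]
already in $\mathcal{V}_{\II}(\p)$ (before applying $\int$), since the $\m$-components of $[q^i,q^j]\in\g$ are absorbed into constants modulo the generators of $\II$. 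Summing over $i\in I$ then yields the first formula of \eqref{Eqn:4.4_0413}. The second formula comes out of the same scheme upon replacing $\mathcal{L}^{\sigma}(\Lambda)$ by $\mathcal{L}^{\sigma}_{[2]}(\Lambda)=-s\otimes 1$: the $k\partial$-contribution disappears, and the invariance identity $(q^i\,|\,[q^j,s])=([q^i,q^j]\,|\,s)$ for the even supersymmetric form on $\g$ delivers the coefficient $([q^i,q^j]\,|\,s)$.

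The step I expect to require the most care is the factorization identity in the second paragraph: tracking the parity signs correctly is precisely what produces the right derivative $\tfrac{\delta_R\phi}{\delta_R q^i}$ in place of the left derivative $\tfrac{\delta\phi}{\delta q^i}$ that naively appears. Once this sign is pinned down, the rest of the argument is purely formal and Lemma~\ref{lem:4.9_0413} supplies the remaining content.
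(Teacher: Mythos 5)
Your proposal is correct and follows essentially the same route as the paper: both arguments rest on the factorization $\bigl(\tfrac{\delta\phi}{\delta q}\,\big|\,A\bigr)=\sum_{i\in I}\tfrac{\delta_R\phi}{\delta_R q^i}\,(q^i\otimes 1\,|\,A)$, which converts the left derivative into the right one via the parity sign, followed by reduction to the computation in the proof of Lemma~\ref{lem:4.9_0413}. Your observation that the identity of that lemma must be used at the level of $\mathcal{V}_{\II}(\p)$ (before applying $\int$), so that it can be multiplied by $\tfrac{\delta_R\phi}{\delta_R q^i}$, is exactly what the paper means by invoking ``the proof of'' the lemma rather than its statement.
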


\begin{proof}
Observe that \[ \textstyle \left(\left. \frac{\delta}{\delta q} \phi \right| a\otimes F\right)= \sum_{i\in I} \frac{\delta_R \phi}{\delta_R q^i}( q^i\otimes 1|a\otimes F)\] for any $a \otimes F \in \g \otimes \mathcal{V}_{\II}(\p).$ Since $\frac{\delta}{\delta q} q^i= q^i\otimes 1$, we have 
\begin{equation}
\begin{aligned}
&  \textstyle [\int  \phi, \int  \psi]_t =\int \sum_{i\in I}  \frac{\delta_R \phi}{\delta_R q^i} \left(q^i\otimes 1 \left|  \left[ \frac{\delta}{\delta q}\psi , \mathcal{L}^{\sigma}_{[t]} (\Lambda) \right] \right. \right)  \\
& \textstyle = \int \sum_{i\in I}  \frac{\delta_R \phi}{\delta_R q^i} \left(\frac{\delta}{\delta q} q^i \left|  \left[ \frac{\delta}{\delta q}\psi , \mathcal{L}^{\sigma}_{[t]} (\Lambda) \right] \right. \right) = \int \sum_{i\in I} \frac{\delta_R \phi}{\delta_R q^i} \{ q^i, \psi\}_t
 \end{aligned}
 \end{equation}
 for $t=1,2.$ 
Now, by the proof of  Lemma \ref{lem:4.9_0413}, we can see (\ref{Eqn:4.4_0413}).
\end{proof}

By previous propositions, we know how to compute the brackets $[\, , \, ]_i$, $(i=1,2)$, defined on $\mathcal{V}_\II(\p) / \partial \mathcal{V}_{\II}(\p).$ The next thing we want to show is that the brackets can be understood as brackets  on $\WW(\g,f,k)/\partial \WW(\g,f,k)$:
\[ [\ , \ ]_i : \WW(\g,f,k)/\partial \WW(\g,f,k) \times \WW(\g,f,k)/\partial \WW(\g,f,k) \to \WW(\g,f,k)/\partial \WW(\g,f,k).\]
We provide proofs in two different ways. One (Proposition \ref{Prop:4.12_0804}) is by the definition W-algebras in Section \ref{Sec:W-super} and the other one (Proposition \ref{Prop:4.13_0804})  is purely algebraic. Note that the first proof is inspired from \cite{BDHM} and the second one is inspired form \cite{DKV}.

\begin{lem} \label{Lem:4.12_0516}
Let $\mathcal{V}$ be a superalgebra of differential  polynomials with both even generators and odd generators. 
 If $G,G'\in \mathcal{V}$ satisfy 
\[ \textstyle   \int FG= \int FG' \quad  \text{ or } \quad \overline{FG}= \overline{FG'} \text{ in } \mathcal{V}/\partial \mathcal{V} \]
for all $F\in\mathcal{V}_{\bar{0}}$ or all $F\in\mathcal{V}_{\bar{1}}$ then $G=G'.$
\end{lem}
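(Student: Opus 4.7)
\emph{Proof plan.} Set $H := G - G'$; the hypothesis reads $\int F H \equiv 0 \pmod{\partial \mathcal{V}}$ for every $F$ in one of the half-spaces $\mathcal{V}_{\bar{0}}$ or $\mathcal{V}_{\bar{1}}$, and we must conclude $H = 0$. The plan is to first settle the even case (test functions $F \in \mathcal{V}_{\bar{0}}$), and then reduce the odd case to it.

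For the even case, observe first that since $\partial$ is an even derivation, $\partial \mathcal{V}$ is $\mathbb{Z}/2$-graded and $\mathcal{V}/\partial\mathcal{V}$ inherits a parity. For even $F$, the decomposition $\int FH = \int F H_{\bar{0}} + \int F H_{\bar{1}}$ splits into opposite parity components, forcing each summand to vanish, so one may assume $H$ is parity-homogeneous. The main tool is the variational-derivative criterion: $P \in \mathbb{C} + \partial\mathcal{V}$ if and only if $\delta P/\delta u_i = 0$ for every generator $u_i$. Applied to $P = FH$ as $F$ ranges over even monomials in the $u_j^{(n)}$, and expanded via the graded Leibniz rule for $\partial/\partial u_j^{(n)}$ (Definition 4.1) and the master formula (Proposition 4.5), this yields a triangular family of linear relations on the coefficients of $H$ in the monomial basis of $\mathcal{V}$.

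To close the even case I will run a leading-monomial argument. Fix a total order on monomials in the $u_j^{(n)}$ refining the gradings by degree and by total derivative order, assume toward contradiction $H \neq 0$ with leading monomial $M_0$, and construct an even test function $F$ as a product of undifferentiated generators matched to the factors of $M_0$ — balancing parities, possibly by inserting an auxiliary even generator if $M_0$ has an odd factor of odd multiplicity — so that $FM_0$ is the leading monomial of $FH$ and $\delta(FM_0)/\delta u_j \neq 0$ for a suitable $j$. Verifying that the strictly smaller monomials of $FH$ cannot cancel this leading term nor be absorbed into $\partial\mathcal{V}$ contradicts $\int FH = 0$, forcing $H = 0$.

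The odd case then reduces immediately: for any $F' \in \mathcal{V}_{\bar{0}}$ and any odd generator $\xi_j$, the element $\xi_j F'$ is odd, so $\int \xi_j F' H = 0$; since $F'$ is even, supercommutativity rewrites this as $\int F'(\xi_j H) = 0$, and the even case yields $\xi_j H = 0$. Because $\mathcal{V}$ has infinitely many odd generators (all $u_i^{(n)}$ with $i \in I_{\bar{1}}$, $n \geq 0$), any nonzero $H$ would stay nonzero after multiplication by some odd generator absent from its monomial expansion, so $H = 0$. The main obstacle throughout is the leading-monomial step in the even case: one needs a clean normal form for $\mathcal{V}/\partial\mathcal{V}$ so that $\delta(FM_0)/\delta u_j$ genuinely detects $FM_0$ past its subleading tail, and the super-signs generated when reordering odd factors in matching $F$ to $M_0$ must be tracked consistently with $F \in \mathcal{V}_{\bar{0}}$.
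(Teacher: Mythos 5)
The crux of your argument --- the even case --- is not actually carried out, and the specific construction you propose breaks down on the simplest examples. You suggest taking $F$ to be a product of \emph{undifferentiated} generators matched to the factors of the leading monomial $M_0$ of $H:=G-G'$. Try $H=u_1^{(1)}$, so $M_0=u_1^{(1)}$: the matched choice $F=u_1$ gives $FH=u_1u_1^{(1)}=\tfrac12\partial(u_1^2)\in\partial\mathcal{V}$, so $\int FH=0$ although $H\neq 0$; your own criterion detects the failure, since $\delta(FM_0)/\delta u_1=u_1^{(1)}-\partial u_1=0$. Thus the test function must carry derivatives of sufficiently high order (or a variable foreign to $H$), and the step you yourself flag as ``the main obstacle'' --- verifying that the subleading monomials of $FH$ neither cancel the leading term nor get absorbed into $\partial\mathcal{V}$ --- is precisely the entire content of the lemma and is left undone. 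You also invoke the converse of Proposition \ref{Prop:vari_der}(2) (that $\delta P/\delta u_i=0$ for all $i$ forces $P\in\CC\oplus\partial\mathcal{V}$), which is true but is nowhere established in the paper, so it would have to be proved or cited. For contrast, the paper's proof avoids all of this machinery: it reduces to the single test functions $F=u_1^{(m+1)}$ (even case) and $F=v_1^{(m+2)}$ (odd case), where $m$ exceeds every derivative order of $u_1$ (resp.\ $v_1$) occurring in $G$, and argues directly from the top derivative order that such a product cannot be a total derivative. Note the shift from $m+1$ to $m+2$ in the odd case, forced by identities like $\partial\big(v_1^{(m)}v_1^{(m-1)}\big)=v_1^{(m+1)}v_1^{(m-1)}$; your plan never confronts this phenomenon.

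On the positive side, your reduction of the odd case to the even case is a genuinely different and attractive idea that the paper does not use: testing against $\xi_jF'$ with $\xi_j$ an odd generator and $F'$ even gives $\int F'(\xi_jH)=0$ for all even $F'$, hence $\xi_jH=0$ by the even case, and choosing $\xi_j=v_1^{(N)}$ with $N$ so large that $v_1^{(N)}$ is absent from $H$ forces $H=0$. This would let you bypass the odd-case subtlety above entirely --- but only once the even case is actually proved, which is where the real gap lies.
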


\begin{proof}
Let us denote $\mathcal{V}= \mathcal{V}_{\bar{0}} \otimes \mathcal{V}_{\bar{1}}$, where $\mathcal{V}_{0}= \CC_{\text{diff}}[u_{i}|i=1,2, \cdots, {k_0}]$ and $\mathcal{V}_{1}= \CC_{\text{diff}}[v_{i}|i=1,2, \cdots, {k_1}]$
for even variables $u_i$ and odd variables $v_i.$ 

It is enough to show that if $\int  u_i^{(n)} G=0$ (resp. $\int  v_i^{(n)} G=0$) for any even  (resp. odd) variable $ u_i^{(n)}$ (resp. $ v_i^{(n)}$) then $G=0.$  

Let us first show that if  $\int  u_i^{(n)} G=0$ for any even  variable $ u_i^{(n)}$  then $G=0.$    If $G\in \CC^\times$ then $u_1G\not \in \partial\mathcal{V}$. Suppose $G$ is not a constant. Take an integer $m\in \ZZ_{\geq 0}$ such that no monomial in $G$ has terms with $u_1^{(n)}$ for $n\geq m.$ Then $u_1^{(m+1)} G \not \in \partial\mathcal{V}$. That is because if $\partial G_1= u_1^{(m+1)}G$ then $G_1$ should have the term $u_1^{(m+1)} G_2$ for a nonconstant element $G_2$. Hence $\partial G_1= u_1^{(m+1)}G$ has the term with $u_1^{(m+2)}$. This is a contradiction to our assumption and $G$ should be $0$.

Suppose  $\int v_i^{(n)} G=0$ for any odd variable $ v_i^{(n)}$ then $G=0.$    We have $G\not \in \CC^\times $ since otherwise $v_1G\not \in \partial\mathcal{V}$. Similarly to the previous case, for a nonconstant element $G$, let us take an integer $m\in \ZZ_{\geq 0}$ such that no monomial in $G$ has terms with $v_1^{(n)}$ for $n\geq m.$ Then $v_1^{(m+2)}G$ is not in $\partial \mathcal{V}.$ Here we note that $v_1^{(m+1)}G$ can be in $\partial \mathcal{V}$, for example $\partial(v_1^{(m)}v_1^{(m-1)})= v_1^{(m+1)}v_1^{(m-1)}$. Hence $G=0.$
\end{proof}

\begin{prop}\label{Prop:4.12_0804}
The brackets  (\ref{Poi_W}) induce brackets on $ \WW(\g,f,k)/\partial\WW(\g,f,k).$
\end{prop}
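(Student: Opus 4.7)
The plan is to prove that for $\phi,\psi\in\WW(\g,f,k)$, the element $\{\phi,\psi\}_i\in\mathcal{V}_\II(\p)$ lies in $\WW(\g,f,k)+\partial\mathcal{V}_\II(\p)$, so that the induced bracket descends to a well-defined pairing on $\WW(\g,f,k)/\partial\WW(\g,f,k)$. By Theorem \ref{Thm:3.11_0730}, this reduces to checking that $\{\phi,\psi\}_i$ is gauge invariant modulo $\partial\mathcal{V}_\II(\p)$ whenever $\phi$ and $\psi$ are gauge invariant.

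The first step is to reformulate gauge invariance of $\phi$ as an orthogonality condition on $\frac{\delta\phi}{\delta q}$. Applying Lemma \ref{Lem:4.6_0411} to the one-parameter family $q\mapsto q+\epsilon[X,\mathcal{L}]$, where $[X,\mathcal{L}]\in(\m^\perp\otimes\mathcal{V}_\II(\p))_{\bar 0}$ is the infinitesimal gauge direction induced by $X\in(\n\otimes\mathcal{V}_\II(\p))_{\bar 0}$, gauge invariance gives
\[
\textstyle \int \left([X,\mathcal{L}^\sigma(\Lambda)]^\sigma \,\bigl|\, \frac{\delta\phi}{\delta q}\right) = 0.
\]
Here I use that $s\in\g(d)$ satisfies $[\n,s]=0$, so $[X,\mathcal{L}^\sigma(\Lambda)]=[X,\mathcal{L}^\sigma]$, and that the sign twist is compatible with gauge transformations by Proposition \ref{Prop:signLax}.

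Next I would vary $q\mapsto q+\epsilon[X,\mathcal{L}]$ inside the defining formula (\ref{Poi_W}) for $\{\phi,\psi\}_i$. By Lemma \ref{Lem:4.6_0411} and the chain rule, the first-order variation $\delta_X \{\phi,\psi\}_i$ splits into three pieces: variations of $\frac{\delta\phi}{\delta q}$, variations of $\frac{\delta\psi}{\delta q}$, and (for $i=1$ only) the variation of $\mathcal{L}^\sigma_{[1]}(\Lambda)$ appearing in the slot $[\frac{\delta\psi}{\delta q},\mathcal{L}^\sigma_{[1]}(\Lambda)]$; for $i=2$, the operator $\mathcal{L}^\sigma_{[2]}(\Lambda)=-s\otimes 1$ is independent of $q$. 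Using invariance of the bilinear form on $\g\otimes\mathcal{V}_\II(\p)$ together with the Jacobi identity, these pieces can be rearranged into pairings of $[X,\mathcal{L}^\sigma(\Lambda)]^\sigma$ against elements of the form $[\frac{\delta\psi}{\delta q},\,\cdot\,]$ and $[\frac{\delta\phi}{\delta q},\,\cdot\,]$, each of which integrates to zero by the orthogonality of the first step applied to $\psi$ and $\phi$ respectively.

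The main obstacle is the sign accounting endemic to the super setting. The twist $\sigma$ enters in three distinct places — in $\mathcal{L}^\sigma(\Lambda)$, in the pairing with $\frac{\delta\phi}{\delta q}$, and in the relation $\frac{\delta\phi}{\delta u_i}=(-1)^{\pa(u_i)\pa(\phi u_i)}\frac{\delta_R\phi}{\delta_R u_i}$ between left and right variational derivatives — and these signs must combine so that the invariance of the bilinear form and the Jacobi identity can be applied in their standard (unsigned) form; the rationale for these choices is exactly the content of Remark \ref{Rem:4.16_0721}. A secondary technical point is to upgrade the statement from ``$\{\phi,\psi\}_i\in\WW(\g,f,k)+\partial\mathcal{V}_\II(\p)$'' to a well-defined bracket on $\WW(\g,f,k)/\partial\WW(\g,f,k)$; this follows once one knows $\partial\mathcal{V}_\II(\p)\cap\WW(\g,f,k)=\partial\WW(\g,f,k)$, which can be extracted from the grading induced by the Hamiltonian operator $L_0$ of Proposition \ref{Prop:2.11_0731} combined with the sesquilinearity $[\partial a,b]=0$ in $\mathcal{V}_\II(\p)/\partial\mathcal{V}_\II(\p)$.
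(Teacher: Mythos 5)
Your overall strategy is the right one -- reduce the claim to gauge invariance of $\{\phi,\psi\}_i$ via Theorem \ref{Thm:3.11_0730} and exploit how variational derivatives interact with gauge transformations -- but there is a genuine gap in the second step. The only input you extract from gauge invariance of $\phi$ is the infinitesimal orthogonality $\int\big([X,\mathcal{L}]^\sigma\,\big|\,\frac{\delta\phi}{\delta q}\big)=0$. That relation pairs gauge directions against $\frac{\delta\phi}{\delta q}$ itself, whereas the first two pieces of your variation $\delta_X\{\phi,\psi\}_i$ involve $\delta_X\frac{\delta\phi}{\delta q}$ and $\delta_X\frac{\delta\psi}{\delta q}$, i.e.\ second variational derivatives of $\phi$ and $\psi$. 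No amount of rearranging with the invariance of the form and the Jacobi identity turns a pairing against a Hessian into a pairing against $\frac{\delta\phi}{\delta q}$; to kill those terms you need to know \emph{how} $\frac{\delta\phi}{\delta q}$ transforms along the gauge orbit, not merely that it is orthogonal to it. This is exactly what the paper supplies and what your proposal never states: the equivariance
\[
\frac{\delta \phi(q')}{\delta q}\;=\;e^{\ad X^{\sigma}}\,\frac{\delta \phi(q)}{\delta q}
\qquad\text{whenever } q'=q^{X},
\]
obtained by applying Lemma \ref{Lem:4.6_0411} not just at $q$ but along the matched perturbations $\phi(q'+\epsilon r')=\phi(q+\epsilon r)$ with $r^\sigma=e^{-\ad X^\sigma}r'^\sigma$, and then cancelling the arbitrary test direction $r'$ by the nondegeneracy statement of Lemma \ref{Lem:4.12_0516}. (One could in principle recover the infinitesimal form of this equivariance from your orthogonality by differentiating it in $q$ and invoking self-adjointness of the second variational derivative, but that self-adjointness is itself a nontrivial super-statement you would have to prove; your proposal neither states nor uses it.)

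Once the equivariance is in hand, the paper's conclusion is immediate and strictly stronger than what you aim for: since $[s,\n]=0$, the whole operator $k\partial+q^\sigma-\Lambda\otimes 1$ transforms by $e^{\ad X^\sigma}$, and the invariance of the bilinear form under this automorphism gives $\{\phi,\psi\}_i(q')=\{\phi,\psi\}_i(q)$ on the nose -- not merely modulo $\partial\mathcal{V}_\II(\p)$. This also dissolves your final technical worry: because $\{\phi,\psi\}_i$ lands in $\WW(\g,f,k)$ exactly, and $\{\partial\phi,\psi\}_i=0$ by Proposition \ref{Prop:vari_der}(2), there is no need to identify $\partial\mathcal{V}_\II(\p)\cap\WW(\g,f,k)$ with $\partial\WW(\g,f,k)$. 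To repair your write-up, insert and prove the equivariance statement above; the rest of your outline then collapses to the paper's one-line computation.
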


\begin{proof}
It is enough to show that for $\phi, \psi\in \WW(\g,f,k)$, we have $\{\phi, \psi\}_1$ and $\{\phi, \psi\}_2$ are also in $\WW(\g,f,k).$ 

Let $X\in (\n \otimes \mathcal{V}_{\II}(\p))_{\bar{0}}$ and two elements $q,q' \in \mathcal{F}_{\g,f}$ be gauge equivalent by $X$, i.e.
\[ k\partial+ q'^{\sigma}-f\otimes 1=e^{\ad X^{\sigma}}(k\partial+ q^{\sigma}-f\otimes 1).\]  
Note that, since $[s, \n]= 0$, we have 
\begin{equation} \label{Eqn:4.9_0805}
 k\partial+ q'^{\sigma}-\Lambda\otimes 1=e^{\ad X^{\sigma}}(k\partial+ q^{\sigma}- \Lambda\otimes 1).
 \end{equation}
If $r^{\sigma}:= e^{-\ad X^{\sigma}} r'^{\sigma}\in \mathcal{F}_{\g,f}$  for some $r' \in \mathcal{F}_{\g,f}$, we have 
\[ \phi(q)= \phi(q') \text{ and }  \phi(q'+\epsilon r')= \phi(q+\epsilon r) \text{ for } \epsilon\in \CC.\]
 Hence, by Lemma \ref{Lem:4.6_0411}, the equation
$ \phi(q+r)-\phi(q)= \phi(q'+r')-\phi(q')$ implies 
\begin{equation} \label{Eqn:4.16_0516}
\begin{aligned}
&  \textstyle \int \left(r'^{\sigma} \left| \frac{\delta \phi(q')}{\delta q}\right. \right)=  \int\left.\frac{\phi(q'+\epsilon r')-\phi(q')}{\epsilon}\right|_{\epsilon=0}  = \int\left.\frac{\phi(q+\epsilon r)-\phi(q)}{\epsilon}\right|_{\epsilon=0} \\
& \textstyle= \int \left(r^{\sigma} \left| \frac{\delta \phi(q)}{\delta q}\right. \right)  \textstyle = \int\left( e^{-\ad X^{\sigma}} r'^{\sigma} \left| \frac{\delta \phi(q)}{\delta q}\right. \right)= \int\left(r'^{\sigma} \left| e^{\ad X^{\sigma}}\frac{\delta \phi(q)}{\delta q}\right. \right).
\end{aligned}
\end{equation}
By Lemma \ref{Lem:4.12_0516} and (\ref{Eqn:4.16_0516}), we have $ \frac{\delta \phi(q')}{\delta q}=  e^{\ad X^{\sigma}}\frac{\delta \phi(q)}{\delta q}$. 
It is obvious that $\psi$ has the same property. 
%(ii) the formula (\ref{Eqn:4.9_0805}) implies $e^{\ad X^{\text{rev}}} \mathcal{L}^{\text{rev}}(\Lambda)$ is obtained by substituting $q$ in $\mathcal{L}^{\text{rev}}(\Lambda)$ with $q'$,
Hence we have 
\begin{equation}
\begin{aligned}
&  \textstyle \{\phi, \psi\}_1(q')  = \left( \frac{\delta \phi(q')}{\delta q} \left| \left[\frac{\delta \psi(q')}{\delta q}, k\partial + q'^{\sigma}-1\otimes \Lambda\right]\right.\right) \\
 & \textstyle= \left( e^{\ad X^{\sigma}}\frac{\delta \phi(q)}{\delta q} \left| \left[e^{\ad X^{\sigma}}\frac{\delta \psi(q)}{\delta q}, e^{\ad X^{\sigma}} (k\partial + q^{\sigma}-1\otimes \Lambda) \right]\right.\right)= \{\phi, \psi\}_1(q).
 \end{aligned}
 \end{equation}
On the other hand, since $ e^{\ad X^{\sigma}} (s\otimes 1)= s\otimes 1$, we can prove that $\{\phi, \psi\}_2(q')= \{\phi,\psi\}_2(q)$ by the same arguement. 
\end{proof}

The classical W-superalgebra $\WW(\g,f,k)$ is a PVA endowed with the $\lambda$-brackets induced from those on the affine PVA $S(\CC[\partial]\otimes \g)$ such that 
\[ \, \{ a_\lambda b\}_1=[a,b]+k\lambda(a|b), \quad \{ a_\lambda b\}_2=(s|[a,b]) \quad\text{ for } a,b\in \g.\]
Hence there are Lie brackets $[\ , \ ]'_i$, $i=1,2$, on $\WW(\g,f,k)/\partial\WW(\g,f,k)$ induced by the $\lambda$-brackets on $\WW(\g,f,k)$. More precisely,
\begin{equation} \label{Eqn:4.5_0413}
\,\textstyle  [ \int W_1 , \int W_2]'_i = \int \{ W_{1\, \lambda} W_2\}_i|_{\lambda=0}.  
\end{equation}

\begin{prop} \label{Prop:4.13_0804}
 Brackets $[\ , \ ]'_i$ in (\ref{Eqn:4.5_0413}) and  $[\ , \ ]_i$ in (\ref{Eqn:4.2_0413}) for $i=1,2$, defined on $ \WW(\g,f,k)/\partial \WW(g,f,k)$ are same.
\end{prop}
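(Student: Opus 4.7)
The plan is to evaluate both brackets on arbitrary $W_1,W_2\in\WW(\g,f,k)$ and to rewrite each as an expression purely in variational derivatives. Proposition~\ref{Prop:Lie stru} already supplies an explicit formula for $[\int W_1,\int W_2]_t$ ($t=1,2$) in terms of $\frac{\delta_R W_1}{\delta_R q^i}$, $\frac{\delta W_2}{\delta q^j}$, and the structure constants of $\g$. For the other side, I will apply the master formula (Proposition~\ref{Prop:master formula}) to $\{W_{1\,\lambda}W_2\}_t|_{\lambda=0}$ with respect to the generating set $\{q^i\mid i\in I\}$ of $\mathcal{V}_\II(\p)$; integration kills every $\partial$-exact contribution, and integration by parts absorbs the remaining $(\lambda+\partial)^n$ and $(-\lambda-\partial)^m$ into the variational derivatives, leaving a closed expression of the same shape as in Proposition~\ref{Prop:Lie stru}.

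To match the two sides term by term, I would first compute the reduced basic brackets. The identity $m\equiv -(f|m)$ for $m\in\m$ in $\mathcal{V}_\II(\p)$, together with $(f|\p)=0$, gives
$$\{q^i{}_\lambda q^j\}_1\equiv[q^i,q^j]-(f|[q^i,q^j])+k\lambda(q^i|q^j),\qquad \{q^i{}_\lambda q^j\}_2\equiv(s|[q^i,q^j]),$$
where in the first formula $[q^i,q^j]$ is really its $\p$-component. Substituted into the master-formula output, these three pieces produce the $\p$-component of $[q^i,q^j]$, the scalar correction $-(f|[q^i,q^j])$, and a derivation-type term proportional to $k(q^i|q^j)\partial$. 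Expanding the right-hand side of (\ref{Poi_W}) by brute force, using the semidirect product bracket on $\CC\partial\ltimes(\g\otimes\mathcal{V}_\II(\p))$ and the invariant pairing, produces the same three pieces: the $\p$-component arises from commutation with $q^\sigma_{\text{univ}}$ (expanded via the dual bases exactly as in the proof of Lemma~\ref{lem:4.9_0413}), the scalar correction from commutation with $-f\otimes 1$ (for $t=1$) or $-s\otimes 1$ (for $t=2$, after the $z^{-1}$ shift), and the derivation term from commutation with $k\partial$.

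The main obstacle will be the careful bookkeeping of signs under supersymmetry. The coefficient $C_{i,j}^{f,g}=(-1)^{\pa(f)\pa(g)+\pa(i)\pa(j)}$ appearing in the master formula must combine correctly with the signs coming from the supersymmetric pairing $(a\otimes F|b\otimes G)=(-1)^{\pa(b)\pa(F)}(a|b)FG$, from the Lie bracket $[a\otimes F,b\otimes G]=(-1)^{\pa(b)\pa(F)}[a,b]\otimes FG$, and from the sign-twist $\sigma$ applied to $q_{\text{univ}}$, so as to reproduce precisely the signs implicit in $\frac{\delta_R W_1}{\delta_R q^i}\cdots\frac{\delta W_2}{\delta q^j}$ in Proposition~\ref{Prop:Lie stru}. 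This cancellation is exactly why the sign-twisted Lax operator $\mathcal{L}^\sigma(\Lambda)$, rather than $\mathcal{L}(\Lambda)$, was used in (\ref{Poi_W}), as anticipated in Remark~\ref{Rem:4.16_0721}; verifying that all these signs align is the technically delicate heart of the proof, and it is how one ultimately checks that Proposition~\ref{Prop:Lie stru} and the master formula yield identical expressions modulo $\partial\mathcal{V}_\II(\p)$.
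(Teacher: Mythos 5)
Your proposal follows essentially the same route as the paper's proof: apply the master formula (Proposition \ref{Prop:master formula}) to $\{W_{1\,\lambda}W_2\}_t$, set $\lambda=0$, integrate by parts to convert the partial derivatives into variational derivatives $\frac{\delta_R W_1}{\delta_R q^i}$ and $\frac{\delta W_2}{\delta q^j}$, and match the result with the explicit formula of Proposition \ref{Prop:Lie stru}, with the sign factor $C_{i,j}^{\phi,\psi}$ and the $\sigma$-twist doing exactly the bookkeeping you describe. The extra detail you give on re-deriving the Lax-operator side (the $\p$-component, the $-(f|[q^i,q^j])$ correction, and the $k(q^i|q^j)\partial$ term) is already packaged in Lemma \ref{lem:4.9_0413} and Proposition \ref{Prop:Lie stru}, so your argument is correct and coincides with the paper's.
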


\begin{proof}
By Proposition \ref{Prop:master formula}, for $\phi,\psi\in \WW(\g,f,k)$, we have 
 \begin{equation} \label{Eqn:master_w}
\textstyle \{ \phi_\lambda \psi\}_1 = \sum_{\substack{i,j\in I \\ m,n\in \ZZ_{\geq 0} }}
 C_{i,j}^{\phi,\psi} \frac{\partial_R \psi}{\partial_R q^{ j\, (n)}}(\lambda+\partial)^n \big([q^i , q^j] +(q^i|q^j)k(\lambda+\partial)\big)(-\lambda-\partial)^m \frac{\partial \phi}{\partial q^{i\, (m)}}
 \end{equation}
for  the sign consideration $C_{i,j}^{\phi,\psi}.$ If we apply  $\lambda=0$ to (\ref{Eqn:master_w}) then 
\[ \textstyle \{ \phi_\lambda \psi \}_1 |_{\lambda=0}= \sum_{\substack{i,j\in I \\ m,n\in \ZZ_{\geq 0} }}
 C_{i,j}^{\phi,\psi} \frac{\partial_R \psi}{\partial_R q^{ j\, (n)}}\partial^n \big([q^i , q^j] +(q^i|q^j)k\partial\big)(-\partial)^m \frac{\partial \phi}{\partial q^{i\, (m)}}\]
 and 
 \begin{equation}
 \begin{aligned}
 \textstyle [\int \phi,\int\psi]'_1 & = \textstyle \int  \{ \phi_\lambda \psi\}_1 |_{\lambda=0}\\
  &  \textstyle = \sum_{\substack{i,j\in I \\ m,n\in \ZZ_{\geq 0} }}
 \int C_{i,j}^{\phi,\psi} \big( (-\partial)^n\frac{\partial_R \psi}{\partial_R q^{ j\, (n)}} \big) \big([q^i , q^j] +(q^i|q^j)k\partial\big)(-\partial)^m \frac{\partial \phi}{\partial q^{i\, (m)}}\\
 &\textstyle = \sum_{\substack{i,j\in I \\ m,n\in \ZZ_{\geq 0} }}
 \int \big( (-\partial)^m \frac{\partial_R \phi}{\partial_R q^{i\, (m)}} \big) [q^i , q^j] \big( (-\partial)^n\frac{\partial \psi}{\partial q^{ j\, (n)}} \big)\\
 & + \textstyle  \sum_{\substack{i,j\in I \\ m,n\in \ZZ_{\geq 0} }}
 \int \big( (-\partial)^m \frac{\partial_R \phi}{\partial_R q^{i\, (m)}} \big) (q^i | q^j)k (-\partial)\big( (-\partial)^n\frac{\partial \psi}{\partial q^{ j\, (n)}} \big)\\
 & = \textstyle  \sum_{\substack{i,j\in I \\ m,n\in \ZZ_{\geq 0} }}
 \int  \frac{\delta_R \phi}{\delta_R q^i} ( [q^i, q^j]-(q^i | q^j)k\partial)  \frac{\delta \psi}{\delta q^j}.
 \end{aligned}
 \end{equation}
 Hence $\textstyle [\int \phi,\int\psi]'_1 =\textstyle [\int \phi,\int\psi]_1.$ By same arguments, we have  $\textstyle [\int \phi,\int\psi]'_2 =\textstyle [\int \phi,\int\psi]_2.$
\end{proof}

\begin{thm}
Brackets $[\, , \, ]_1$ and $[\, , \, ]_2$ are Lie brackets on $\int \WW(\g,f,k):= \WW(\g,f,k)/\partial \WW(\g,f,k)$.
\end{thm}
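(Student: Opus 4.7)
The plan is to reduce the claim to facts already proved. Recall that in Section 2 it was shown that for any Poisson vertex algebra $\mathcal{P}$ with $\lambda$-bracket $\{\,_\lambda\,\}$, the quotient space $\mathcal{P}/\partial\mathcal{P}$ becomes a Lie superalgebra under the bracket $[\bar{a},\bar{b}] := \{a_\lambda b\}|_{\lambda=0} + \partial\mathcal{P}$, because skew-symmetry and the Jacobi identity for $\{\,_\lambda\,\}$ descend to the corresponding identities modulo $\partial\mathcal{P}$ after setting $\lambda=0$.

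The W-superalgebra $\WW(\g,f,k)$ carries two Poisson $\lambda$-brackets $\{\,_\lambda\,\}_1$ and $\{\,_\lambda\,\}_2$ induced from the two $\lambda$-brackets $\{a_\lambda b\}_1 = [a,b] + k\lambda(a|b)$ and $\{a_\lambda b\}_2 = (s|[a,b])$ on the affine PVA $S(\CC[\partial]\otimes \g)$ (see Definition \ref{Def:2.11_0801} and Proposition \ref{Prop:second}, whose proofs extend verbatim to the Lie superalgebra setting). Thus, by the previous paragraph, the induced brackets $[\,,\,]'_1$ and $[\,,\,]'_2$ defined in (\ref{Eqn:4.5_0413}) are Lie superbrackets on $\WW(\g,f,k)/\partial\WW(\g,f,k)$.

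Finally, Proposition \ref{Prop:4.13_0804} asserts that $[\,,\,]_i = [\,,\,]'_i$ for $i=1,2$ on $\WW(\g,f,k)/\partial\WW(\g,f,k)$, while Proposition \ref{Prop:4.12_0804} guarantees that the brackets $[\,,\,]_i$ are actually well-defined on this quotient (rather than only on $\mathcal{V}_\II(\p)/\partial\mathcal{V}_\II(\p)$). Combining these two identifications with the Lie superalgebra structure established in the previous paragraph proves the theorem.

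There is essentially no obstacle here; the entire work has been done in Propositions \ref{Prop:4.12_0804} and \ref{Prop:4.13_0804}, and the only content of the present statement is to combine them with the general PVA-to-Lie-superalgebra construction recalled in Section 2. The proof will therefore be a one-sentence citation of those two propositions together with that general fact.
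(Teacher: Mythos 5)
Your proposal is correct and follows essentially the same route as the paper: the paper's own proof simply cites the general fact that a PVA $\lambda$-bracket evaluated at $\lambda=0$ gives a Lie bracket on $\mathcal{P}/\partial\mathcal{P}$, together with Proposition \ref{Prop:4.13_0804} identifying $[\,,\,]_i$ with $[\,,\,]'_i$. Your additional explicit appeal to Proposition \ref{Prop:4.12_0804} for well-definedness on the quotient is a harmless (and arguably clarifying) elaboration of the same argument.
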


\begin{proof}
We know that if $\{\, _\lambda \, \}$ is a PVA bracket on $\mathcal{P}$ then $\{\, _\lambda \, \}|_{\lambda=0}$ is a Lie algebra bracket on $\mathcal{P}/\partial \mathcal{P}$. Hence the theorem directly follows from Proposition \ref{Prop:4.13_0804}.
\end{proof}

\begin{rem} \label{Rem:4.16_0721}
If we consider $\mathcal{L}(\Lambda)$ instead of $\mathcal{L}^\sigma(\Lambda)$, we have
\begin{equation}
\begin{aligned}
\textstyle\, [\int \phi, \int \psi]_{\mathcal{L}, 1} & :=\textstyle \int \left( \left. \frac{\delta \phi}{\delta q} \right| \left[ \frac{\delta \psi}{\delta q}, \mathcal{L}_{[1]}(\Lambda)\right]\right)  \\
 & =  \textstyle   \sum_{\substack{i,j\in I \\ m,n\in \ZZ_{\geq 0} }}(-1)^{\pa(i)+\pa(j)}
 \int  \frac{\delta_R \phi}{\delta_R q^i}  ( [q^i, q^j]-(q^i | q^j)k\partial)  \frac{\delta \psi}{\delta q^j}
 \end{aligned}
 \end{equation}
 for  bases $\{q_i\}_{i\in I}$ and  $\{q^j\}_{j\in I}$ of $\p$ and $\m^\perp$ such that $(q_i|q^j)= \delta_{ij}.$ 
In this article, we want to discuss integrable systems associated to a W-superalgebra whose PVA structure induces $[\, , \, ]_1$ more than $ [\ , \ ]_{\mathcal{L}, 1}$. Hence we prefer to use $\mathcal{L}^\sigma(\Lambda)$  than $\mathcal{L}(\Lambda)$ (see also the Remark \ref{Rem:5.15_0721}.)
\end{rem}

%\begin{rem}
%If we consider local Poisson brackets on $\WW(\g,f,k)$ 
%\begin{equation} \label{Poi_W}
%\begin{aligned}
%& \textstyle \{ \phi(x),\psi(y)\}_1=\int  \left( \frac{\bar{\delta} \phi(x) }{\bar{\delta} q}\delta(x-w)\right. \left| \left[ \frac{\bar{\delta} \psi(y) }{\bar{\delta} q}\delta(y-w)\,, \mathcal{L}^{\sigma}(\Lambda)(w)\right]\right)  dw,\\
%& \textstyle \{\phi(x),\psi(y)\}_2= -\int  \left( \frac{\bar{\delta} \phi(x) }{\bar{\delta} q}\delta(x-w)  \right. \left|  z^{-1} \left[ \frac{\bar{\delta} \psi(y) }{\bar{\delta} q}\delta(y-w)\, , \mathcal{L}^{\sigma}(\Lambda)(w)\right]\right)dw,\\
%\end{aligned}
%\end{equation}
%where 
%\begin{equation}
%\begin{aligned}
%& \textstyle \frac{\bar{\delta} \phi(x) }{\bar{\delta} q}\delta(x-w)= \sum_{n\in \ZZ_{\geq 0}, j\in I} q^j \otimes \left( \frac{\partial \phi(q(x))}{\partial q^{i\, (n)}} (-\partial)^n \delta(x-w)\right), \\
%& \textstyle \frac{\bar{\delta} \psi(y) }{\bar{\delta} q}\delta(y-w)= \sum_{n\in \ZZ_{\geq 0}, j\in I} q^j \otimes \left( \frac{\partial \psi(q(y))}{\partial q^{i\, (n)}} (-\partial)^n \delta(y-w)\right),
%\end{aligned}
%\end{equation}
%we can recover $\lambda$-brackets on $\WW(g,f,k).$ (See detailed arguments in \cite{S2}.) 
%\end{rem}

\section{super-Hamiltonian equations and Poisson vertex algebras}

Let us introduce super-Hamiltonian equations via Poisson vertex algebras. Recall that infinite dimensional Hamiltonian equation on the even differential algebra $\PP=\Cdiff[\, u_i\, |\, i\in I\, ]$ is an evolution equation of the form 
\begin{equation}\label{evenHE}
 \frac{d u}{dt}= H(\partial) \frac{\delta h}{\delta u}
 \end{equation}
where 
\begin{enumerate}
\item
the Poisson operator $H(\partial)= (H_{ij}(\partial))_{i,j\in I}$ is an $|I| \times |I|$ matrix operator such that 
\begin{enumerate}[(i)]
\item $H_{ij}(\partial) = \sum_{n=0}^N H_{ij;n}\partial^n$ for $H_{ij;n}\in \CC[\, \partial^n u_i\, |\, i\in I, n\in \ZZ_{\geq 0}\, ]$,
\item if the $\lambda$-bracket on $\PP$ is defined by $\{u_{i \, \lambda\, } u_j\}_H = \sum_{n=0}^N H_{ij;n} \lambda^n$ then the differential algebra $\PP$ with the induced $\lambda$-bracket $\{\, _\lambda\, \}_H$ is a Poisson vertex algebra,
\end{enumerate}
\item the Hamiltonian $h$ is an element in $\PP$.
%\item the variational derivative $\frac{\delta h}{\delta u}= (\frac{\delta h}{\delta u_i})_{i\in I}$, where $u_i^{(n)}:= \partial^n u_i$ is a variable in the ring of polynomials $P= \CC[\, u_i^{(n)}\, |\, i\in I , n\geq \ZZ_{\geq 0}]$ and 
%\[ \frac{\delta h}{\delta u_i}= \sum_{n\in \ZZ_{\geq 0}} (-\partial)^n \frac{\partial h}{\partial u_i^{(n)}}.\] 
\end{enumerate}
Moreover, the equation (\ref{evenHE}) can be written with the $\lambda$-bracket $\{ \, _\lambda\,   \}_H$ as follows:
\[  \frac{d u_i}{dt}= \{ h_\lambda u_i \}_{H} |_{\lambda=0}\, \qquad i\in I. \]
For details, we refer to the paper \cite{BDK}.

Analogously, we define super-Hamiltonian systems and integrable systems when $\PP$ is a differential superalgebra.

\begin{defn} Let $\PP=\Cdiff[\, u_i\, |\, i\in I\, ]$ be the superalgebra of differential  polynomials and let $\PP_{\bar{0}}$ and $\PP_{\bar{1}}$ be the even and odd subspaces of $P$.
\begin{enumerate}
\item A {\it super-Hamiltonian evolution equation} on the differential superalgebra on $\PP$  is an evolution equation of the form 
\begin{equation} \label{Eqn:HE}
 \frac{d \phi}{dt}=  \{\ h\ _\lambda\  \phi \ \}|_{\lambda=0}, \qquad \phi\in \PP 
 \end{equation}
for some $h\in \PP_{\bar{0}}.$
\item An {\it integral of motion} of (\ref{Eqn:HE}) is the local functional $\int f\in \int \PP$ such that 
\[ \small \int \frac{df}{dt}=0.\]
\item If (\ref{Eqn:HE}) has infinitely many linearly independent integrals of motion then it is called an {\it integrable system.}
\end{enumerate}
\end{defn}

From now on, we let  $\PP=\Cdiff[\, u_i\, |\, i\in I\, ]$ be the differential superalgebra  and $u_i$ be homogeneous variables of $\PP$, that is $I= I_{\bar{0}}\cup I_{\bar{1}}$ and $u_i$ for $i\in I_{\bar{0}}$ (resp. $I_{\bar{1}}$) are even (resp. odd.) 

\begin{rem} \ 
 For $f\in \PP$, we let 
\begin{equation}\label{Eqn:HE_2}
\small  \frac{d f}{dt} = \sum_{n\in\ZZ_{\geq 0}, i\in I} \left(\partial^n \frac{d u_i}{dt} \right)\frac{\partial  f}{\partial u_i^{(n)}},
 \end{equation}
inspired from chain rules. Then
 \begin{equation}
\text{ $\left[\  \frac{d u_i}{dt}=  \{\, h\, _\lambda\,  u_i \, \}|_{\lambda=0} \text{ for any } i\in I \  \right]$ iff  $\left[\ \frac{d f}{dt}=  \{\, h\, _\lambda\, f \, \}|_{\lambda=0} \text{ for any } f\in \PP\ \right]$}.
\end{equation}
\end{rem}

\vskip 2mm

For the following proposition, recall if $\PP$ is a PVA  with a $\lambda$-bracket  $\{\, _\lambda\, \}$ then the superspace $\PP/\partial \PP$ is a Lie superalgebra endowed with the bracket $[\int f, \int g] := \int \{ f_\lambda g\}|_{\lambda=0}.$

\begin{prop}[generalized Lenard-Magri scheme] \label{Prop:LM scheme}
Suppose $\PP$ is endowed with two compatible $\lambda$-brackets $\{\, _\lambda\, \}_H$ and $\{\, _\lambda \, \}_K$. If there are linearly independent even elements $\int h_i$, $i\in \ZZ_{\geq 0}$,  in $\int \PP$ such that 
\begin{equation}
 \textstyle [\int h_m, \int u_i]_H=[\int h_{m+1}, \int u_i]_K \quad  \text{ for } m\in \ZZ_{\geq 0} \text{ and } i\in I,
\end{equation} 
then $\frac{d\phi}{dt}= \{ h_{m\, \lambda} \phi\}_{H}|_{\lambda=0}$ for  $m\in \ZZ_{\geq 0}$ are integrable systems and $\int h_{m'}$,  $m'\in \ZZ_{\geq 0}$ are integrals of motion.
\end{prop}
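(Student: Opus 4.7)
The plan is to run the classical Lenard–Magri argument, adapted to the super setting, and show that the local functionals $\int h_m$ are pairwise in involution with respect to the $H$-bracket on $\int\PP$. Once pairwise involution is established, the assumed linear independence of $\{\int h_n\}_{n\geq 0}$ supplies infinitely many independent conserved quantities for each equation $d\phi/dt=\{h_{m\,\lambda}\phi\}_H|_{\lambda=0}$, which is exactly the notion of integrable system given above.

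The first step is to upgrade the recursion hypothesis from the generators $u_i$ to an arbitrary $f\in\PP$, so that in particular $[\int h_m,\int h_n]_H=[\int h_{m+1},\int h_n]_K$. The cleanest route (matching Remark~2.33 of the paper) is to read the hypothesis already in $\PP$: two even evolutionary vector fields on $\PP$ commuting with $\partial$ which agree on every $u_i$ agree on every element, since both act as derivations and a Leibniz-rule induction on monomials $u_{i_1}^{(n_1)}\cdots u_{i_k}^{(n_k)}$ closes the argument. If instead one insists on working at the level of $\int\PP$, then the master formula (Proposition~\ref{Prop:master formula}) together with integration by parts rewrites $\{h_{m\,\lambda}f\}_H|_{\lambda=0}-\{h_{m+1\,\lambda}f\}_K|_{\lambda=0}$ as a pairing of the variational derivative of $f$ against a single family of elements of $\PP$, whose projections to $\int\PP$ are controlled by the hypothesis.

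With the extended recursion in hand, and using that each $h_m$ is even so the brackets on $\int\PP$ are genuinely skew, for $n\geq 1$ we compute
\begin{equation*}
[\int h_m,\int h_n]_H=[\int h_{m+1},\int h_n]_K=-[\int h_n,\int h_{m+1}]_K=-[\int h_{n-1},\int h_{m+1}]_H=[\int h_{m+1},\int h_{n-1}]_H,
\end{equation*}
where the third equality is the recursion rewritten as $[\int h_n,\int g]_K=[\int h_{n-1},\int g]_H$. Iterating the shift $(m,n)\mapsto(m+1,n-1)$ until the second slot reaches $0$ yields $[\int h_m,\int h_n]_H=[\int h_{m+n},\int h_0]_H$; applying the same shift to $[\int h_n,\int h_m]_H$ produces the same value. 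Skew-symmetry then forces $2[\int h_{m+n},\int h_0]_H=0$, hence $[\int h_m,\int h_n]_H=0$ for every $m,n\geq 0$. Unwinding the notation, $\int\{h_{m\,\lambda}h_n\}_H|_{\lambda=0}=0$, i.e.\ $\int dh_n/dt=0$ along the $H$-flow of $\int h_m$, so each $\int h_n$ is an integral of motion and integrability follows.

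The principal obstacle is the extension step. The hypothesis is recorded only on generators, but the involution argument genuinely requires the recursion for $f=h_n$, which is not a generator. In the quotient $\int\PP$ this extension is not formal: $\partial\PP$ is not an ideal, and a generic evolutionary derivation sending every $u_i$ into $\partial\PP$ need not send every polynomial there (for example, if $D(u)=\partial w$ and $D(w)=0$ then $D(u\,\partial w)=(\partial w)^2$, which is in general not in $\partial\PP$). Making the extension rigorous therefore either requires adopting the stronger reading of the hypothesis in $\PP$, after which the derivation property of $\lambda$-brackets propagates the identity immediately, or carrying out the master-formula computation with full Koszul-sign bookkeeping and left/right variational derivatives as introduced in Section~4.1—this is the one place where the $\ZZ/2\ZZ$-grading actively enters the argument.
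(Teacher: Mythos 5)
Your argument is correct and is essentially the paper's own proof: both run the standard Lenard--Magri staircase, combining the recursion with skew-symmetry of the brackets on even elements to shift the index pair until the expression vanishes (the paper meets in the middle at a diagonal term $[\int h_l,\int h_l]=0$, while you push to $[\int h_{m+n},\int h_0]_H$ and use antisymmetry under $m\leftrightarrow n$ --- an immaterial variant). The one place you go beyond the paper is the extension of the recursion from the generators $u_i$ to an arbitrary second argument (in particular to $h_n$), which the paper applies silently; you are right that this is not formal if the hypothesis is read only modulo $\partial\PP$ (your example $D(u)=\partial w$, $D(w)=0$, $D(u\,\partial w)=(\partial w)^2\notin\partial\PP$ is a genuine obstruction), and that the intended reading is equality of characteristics $\{h_{m\,\lambda}u_i\}_H|_{\lambda=0}=\{h_{m+1\,\lambda}u_i\}_K|_{\lambda=0}$ in $\PP$ itself, after which sesquilinearity and the Leibniz rule propagate the identity to all of $\PP$ and hence to $\int\PP$.
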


\begin{proof}
 If we assume $m>n$ then 
\[ \textstyle\, [\int h_m, \int h_n]_H= [\int h_m, \int h_{n+1}]_K \text{ and }  \, [\int h_m, \int h_n]_K= [\int h_{m-1}, \int h_n]_H. \]
Inductively, we can prove that if $m-n$ is odd then $[\int h_m, \int h_n]_H= [\int h_l, \int h_l]_K=0$ (resp. $[\int h_m, \int h_n]_K= [\int h_l, \int h_l]_H=0$) for some $l$ with $m\geq l>n$ (resp. $m>l\geq n$). If $m-n$ is even then $[\int h_m, \int h_n]_H= [\int h_l, \int h_l]_H=0$ (resp. $[\int h_m, \int h_n]_K= [\int h_l, \int h_l]_K=0$) for some $l$ with $m> l>n$.
\end{proof}

\begin{rem}
The Lenard-Magri scheme in Proposition \ref{Prop:LM scheme} does not give any clue of finding {\it odd} integrals of motion.
\end{rem}

Consider the adjoint map $\ad \Lambda:  \g((z^{-1})) \to  \g((z^{-1}))$ such that $\ad(\Lambda)( A)= [\Lambda,A]$ for $A\in  \g((z^{-1})).$ In the rest of this section, we assume that $\Lambda$ is semisimple so that 
\begin{equation} \label{Eqn:semi-simple}
 \g((z^{-1})) = \text{ker}(\ad \Lambda) \oplus \text{im}(\ad \Lambda).
 \end{equation}

\begin{rem} \label{Rem:Lambda_assump}
The assumption that $\Lambda$ is a semisimple element is quite a big constraint. Such $\Lambda$ does not always exist for any nilpotent element $f$. However, when $\g$ is $\sll(m|n)$ and the nilpotent element $f$ corresponds to the pair of partitions $\lambda$ and $\mu$ of $m$ and $n$ which has the form of one of followings: 
\begin{equation}
\begin{aligned}
& (1) \ \lambda= (r^{n_{r}}, 1^{n_{1}}), \quad \mu=  (r^{m_{r}}, 1^{m_{1}}), \\
& (2) \ \lambda= ( (r, r-1)^{p_{r,r-1}}, 1^{p_1}), \quad \mu =  ( (r, r-1)^{q_{r,r-1}}, 1^{q_1}),
\end{aligned}
\end{equation}
we can find an even element $s$ such that $\Lambda$ is semisimple. Here, $n_r, n_1, m_r, m_1, p_{r, r-1}, p_1, q_{r,r-1}, q_1$ are all nonnegative integers. Note that this remark directly follows from $\g= \sll_n$ case (see  \cite{DKV} and Remark \ref{Rem:assumption}).
\end{rem}

Take the gradation on $\g((z^{-1}))$ defined by 
\[  \deg(z)= -d-1 \text{ and }  \deg(g)=j/2  \text{ if } g\in\g \text{ and }  [ h/2 ,g]= (j/2) g\]
and denote by $\g((z^{-1}))_k$ the subspace of $\g((z^{-1}))$ consisting of elements with degree $k$.

\begin{prop} \label{prop:fund_Hamil}
Let $L(\Lambda)= k\partial+q-\Lambda\otimes 1$ be a Lax operator. For a subspace $V \subset \g((z^{-1}))$, denote  $V_t= \g((z^{-1}))_t  \cap V$. There exist unique even element  $S(q)\in  \bigoplus_{t>0} \text{im}(\ad \Lambda)_{t}\otimes \mathcal{V}_{\II}(\p)$ and  unique even element $h(q)\in \bigoplus_{t> -1}\text{ker}(\ad \Lambda)_t \otimes \mathcal{V}_{\II}(\p)$ such that 
\begin{equation}\label{Eqn:5.6}
 L_0^{\sigma}(\Lambda):= e^{\ad S(q)^{\sigma}}L^{\sigma}(\Lambda)=k \partial +h(q)^{\sigma}-\Lambda\otimes 1.
 \end{equation}
\end{prop}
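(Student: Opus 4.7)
The plan is to reduce the sign-twisted equation to its untwisted analogue and then solve the latter by a degree-by-degree dressing argument built on the semisimplicity of $\Lambda$. The sign-twisting map $\sigma$, extended trivially to $\CC\partial$, is an involutive Lie algebra automorphism of the even subspace of $\CC\partial \ltimes \g((z^{-1}))\otimes\mathcal{V}_\II(\p)$ (as observed after the definition of $\sigma$), and it fixes both $k\partial$ and $\Lambda\otimes 1$. Applying $\sigma$ to the target equation therefore converts it, without altering the graded subspaces or parities involved, into the equivalent problem of finding a unique even $S\in\bigoplus_{t>0}\text{im}(\ad\Lambda)_t\otimes\mathcal{V}_\II(\p)$ and a unique even $h\in\bigoplus_{t>-1}\text{ker}(\ad\Lambda)_t\otimes\mathcal{V}_\II(\p)$ such that
\begin{equation*}
e^{\ad S}\bigl(k\partial+q-\Lambda\otimes 1\bigr)=k\partial+h-\Lambda\otimes 1.
\end{equation*}

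Extend the grading of $\g((z^{-1}))$ to the semidirect product by setting $\deg\partial=0$, so that $\Lambda\otimes 1$ is homogeneous of degree $-1$, the components $q_n$ of $q$ have degrees $n\geq -1/2$, and $[\,\cdot\,,\Lambda\otimes 1]$ lowers degree by $1$. Write $S=\sum_{t>0}S_t$ and $h=\sum_{t>-1}h_t$ in their homogeneous pieces. Collecting the degree-$n$ part of the displayed equation yields
\begin{equation*}
h_n-[S_{n+1},\Lambda\otimes 1]=q_n+F_n,
\end{equation*}
where $F_n$ is a universal expression assembled from $-k\partial S_t$, the brackets $[S_a,q_b]$ with $a+b=n$, and the iterated commutators $\frac{1}{j!}(\ad S)^j(\,\cdot\,)$ for $j\geq 2$ projected onto degree $n$. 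A direct degree count shows that every $S_t$ occurring in $F_n$ must satisfy $t\leq n+\frac{1}{2}$; hence $F_n$ depends only on $S_{1/2},\ldots,S_{n+1/2}$, whereas $S_{n+1}$ enters the equation exclusively through the bracket displayed on the left.

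By semisimplicity of $\Lambda$, the decomposition $\g((z^{-1}))=\text{ker}(\ad\Lambda)\oplus\text{im}(\ad\Lambda)$ is graded by $\deg$, and $\ad\Lambda:\text{im}(\ad\Lambda)_{t}\to\text{im}(\ad\Lambda)_{t-1}$ is a bijection for every $t$. Splitting $q_n+F_n$ accordingly and matching kernel and image parts uniquely determines $h_n=(q_n+F_n)^{\text{ker}}$ and $S_{n+1}=-\bigl(\ad\Lambda|_{\text{im}(\ad\Lambda)_{n+1}}\bigr)^{-1}\bigl((q_n+F_n)^{\text{im}}\bigr)$. Starting at $n=-1/2$, where $F_{-1/2}$ is empty (no $S_t$ and no $q_b$ contribute at that total degree), the induction proceeds over half-integers $n\geq -1/2$ and produces unique even $S_{n+1},h_n$ at every step, evenness being preserved because $q$, $\Lambda$, $k\partial$ are all even and $\ad$ together with the $\text{ker}/\text{im}$ projections respect parity. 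The main technical point is the bookkeeping claim that $S_{n+1}$ does not appear inside $F_n$; this reduces to the elementary inequality that, in any nested commutator $[S_{a_1},[S_{a_2},\cdots[S_{a_j},X]\cdots]]$ of total degree $n$ with $j\geq 2$, $a_i\geq 1/2$, and $X\in\{k\partial,q,-\Lambda\otimes 1\}$, one automatically has $a_i\leq n+1/2$ for every $i$.
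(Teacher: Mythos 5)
Your proof is correct and follows essentially the same route as the paper: a degree-by-degree induction on the graded pieces of the dressing equation, using the graded decomposition $\g((z^{-1}))=\ker(\ad\Lambda)\oplus\operatorname{im}(\ad\Lambda)$ coming from semisimplicity of $\Lambda$ to solve uniquely for $S_{n+1}$ and $h_n$ at each step, with evenness propagated through the induction. The only differences are cosmetic: you first strip off the sign twist via the automorphism $\sigma$ (which the paper effectively notes in Remark \ref{Rem:5.15_0721}) and you make the bookkeeping that $S_{n+1}$ does not occur in the lower-order terms $F_n$ explicit, whereas the paper leaves that degree count implicit.
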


\begin{proof}
We can decompose (\ref{Eqn:5.6}) via the gradation on $\g((z^{-1})).$ Then degree $-\frac{1}{2}$ part of (\ref{Eqn:5.6}) is 
\[ q^{\sigma}_{-1/2}+[S^{\sigma}_{1/2},-\Lambda\otimes 1 ]= h^{\sigma}_{-1/2},\]
where $q^{\sigma}_{-1/2}$, $S^{\sigma}_{1/2}$ and $h^{\sigma}_{-1/2}$ are degree $-1/2$, $1/2$ and $-1/2$ part of $q^{\sigma}$, $S(q)^{\sigma}$ and $h(q)^{\sigma}$. By (\ref{Eqn:semi-simple}),  $S^{\sigma}_{1/2}$ and $q^{\sigma}_{-1/2}$ are uniquely determined. Also, since $q^{\sigma}_{-1/2}$ and $\Lambda$ are even, both $S^{\sigma}_{1/2}$ and $q^{\sigma}_{-1/2}$ are even. Inductively, $S^{\sigma}_{n}$ and  $h^{\sigma}_{n-1}$ are uniquely determined for $n\geq 1$. More precisely, if  $S^{\sigma}_{m}$ and $h^{\sigma}_{m-1}$ are known for $m<n$ then degree $n-1$ part of  (\ref{Eqn:5.6}) is 
\[  Q_{n-1}^\sigma+[S^{\sigma}_{n},-\Lambda \otimes 1]= h^{\sigma}_{n-1},\]
where $Q_{n-1}^\sigma$ is then degree $n-1$ part determined by $S^{\sigma}_{m}$'s for $m<n$. Now, by  (\ref{Eqn:semi-simple}),  $S^{\sigma}_{n}$ and $q^{\sigma}_{n-1}$ are uniquely determined. 
\end{proof}
  
\begin{rem} \label{Rem:5.5_1011}
By Proposition \ref{prop:fund_Hamil} and its proof, we can see that for an operator $L(\Lambda)=k \partial+q-\Lambda\otimes 1$, there is  unique  $h(q)\in \bigoplus_{t > -1}\text{ker}(\ad \Lambda)_t \otimes \mathcal{V}_{\II}(\p)$ such that $  L_0^{\sigma}(\Lambda):= e^{\ad S(q)^{\sigma}}L^{\sigma}(\Lambda)=k \partial +h(q)^{\sigma}-\Lambda\otimes 1$ for some $S(q)\in  \bigoplus_{t>0}\g((z^{-1}))_{t}\otimes \mathcal{V}_{\II}(\p)$ (which is not necessarily unique).
\end{rem}
  
For $\mathcal{L}(\Lambda)=k\partial+q_{\text{univ}}-\Lambda\otimes 1$, consider 
% \begin{equation} \label{Eqn:Hamiltonian}
 %H(z)= \sum_{n\in \ZZ}  H_n z^{n} \in  \mathcal{V}_{\II}(\p)((z^{-1})) 
 % \end{equation}
% for 
 \begin{equation} \label{Eqn:Hamiltonian}
 H_n=(h(q_{\text{univ}})^\sigma|  z^{n} \Lambda\otimes 1)\in \mathcal{V}_{\II}(\p), \quad n \in \ZZ_{\geq 0}.
 \end{equation}
 and fix 
 \begin{equation}\label{Eqn:Hamiltonian-1}
 H_{-1}=((h(q_{\text{univ}})^\sigma-\Lambda\otimes 1 | z^{-1}\Lambda\otimes 1)=(-\Lambda\otimes 1 | z^{-1}\Lambda\otimes 1) \in \CC.
 \end{equation}

 \begin{rem}
 Note that $(h(q_{\text{univ}})^\sigma|  z^{n} \Lambda\otimes 1)\neq 0$ only if $h(q_{\text{univ}})\in \g_{\bar{0}} \otimes \mathcal{V}_\II(\p)_{\bar{0}}$. Hence 
\[  H_n=(h(q_{\text{univ}}) |  z^{n} \Lambda\otimes 1)\in \mathcal{V}_{\II}(\p).\]
\end{rem}

  \begin{lem}
  For $H_n\in \mathcal{V}_{\II}(\p)$, $n\in \ZZ_{\geq -1}$, we have  $H_n\in \WW(\g,f,k).$
  \end{lem}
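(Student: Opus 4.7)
\smallskip
\noindent
\textbf{Proof plan.} The plan is to reduce to the characterization of $\WW(\g,f,k)$ given by Theorem~\ref{Thm:3.11_0730}: it suffices to show that each $H_n$, viewed as a function on the phase space $\mathcal{F}_{\g,f}$, is gauge invariant. The case $n=-1$ is immediate since $H_{-1}\in\CC$. For $n\ge 0$, take $q,q'\in \mathcal{F}_{\g,f}$ with $q'=q^X$ for some $X\in(\n\otimes\mathcal{V}_\II(\p))_{\bar 0}$. Since $[\n,s]=0$, exactly as in (\ref{Eqn:4.9_0805}), the gauge relation lifts to the $\Lambda$-twisted operator, namely
\[
e^{\ad X^\sigma}\bigl(k\partial+q^\sigma-\Lambda\otimes 1\bigr)=k\partial+q'^\sigma-\Lambda\otimes 1.
\]

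Applying Proposition~\ref{prop:fund_Hamil} to the Lax operator attached to $q$ gives an $S(q)$ and $h(q)$ with
\[
e^{\ad S(q)^\sigma}\bigl(k\partial+q^\sigma-\Lambda\otimes 1\bigr)=k\partial+h(q)^\sigma-\Lambda\otimes 1,
\]
so composition yields
\[
e^{\ad S(q)^\sigma}\,e^{-\ad X^\sigma}\bigl(k\partial+q'^\sigma-\Lambda\otimes 1\bigr)=k\partial+h(q)^\sigma-\Lambda\otimes 1.
\]
Both $S(q)^\sigma$ and $X^\sigma$ lie in $\bigoplus_{t>0}\g((z^{-1}))_t\otimes \mathcal{V}_\II(\p)$: the former by construction, the latter because $\n\subset\bigoplus_{i>0}\g(i)$ and $X$ carries no $z$-power. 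By the Baker--Campbell--Hausdorff formula applied in the completion with respect to the positive-degree filtration (where each iterated commutator strictly increases degree, so the series converges piece by piece), the composition $e^{\ad S(q)^\sigma}\,e^{-\ad X^\sigma}$ can be rewritten as a single exponential $e^{\ad T^\sigma}$ with $T\in\bigoplus_{t>0}\g((z^{-1}))_t\otimes \mathcal{V}_\II(\p)$.

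Now Remark~\ref{Rem:5.5_1011} asserts exactly the uniqueness we need: the canonical $h$ attached to the operator $k\partial+q'^\sigma-\Lambda\otimes 1$ is unique among all elements of $\bigoplus_{t>-1}\ker(\ad\Lambda)_t\otimes\mathcal{V}_\II(\p)$ arising from any such positive-degree conjugation. Hence $h(q')=h(q)$, from which
\[
H_n(q)=\bigl(h(q)\,\bigl|\,z^n\Lambda\otimes 1\bigr)=\bigl(h(q')\,\bigl|\,z^n\Lambda\otimes 1\bigr)=H_n(q'),
\]
and the desired gauge invariance follows, so $H_n\in\WW(\g,f,k)$ by Theorem~\ref{Thm:3.11_0730}.

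The only delicate point is the BCH step: Proposition~\ref{prop:fund_Hamil} pins down $S(q)$ inside the image of $\ad\Lambda$, whereas the product with $e^{-\ad X^\sigma}$ generally leaves that subspace. This is precisely why the proof must route through the weaker hypothesis of Remark~\ref{Rem:5.5_1011} (where $S$ is only required to sit in positive degree), and why the BCH manipulation must be carried out in the filtered completion rather than algebraically. Once this is in hand, the rest of the argument is formal.
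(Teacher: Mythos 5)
Your proof is correct, and its engine is the same one the paper uses: compose the diagonalizing exponential $e^{\ad S(q)^\sigma}$ with a gauge transformation $e^{\ad X^\sigma}$, observe that the composite is still the exponential of a positive-degree element, and invoke the uniqueness of $h$ from Remark \ref{Rem:5.5_1011} (whose weaker hypothesis --- $S$ in positive degree rather than in $\operatorname{im}(\ad\Lambda)$ --- is indeed the crucial point; you make the BCH step explicit where the paper writes $e^{\ad(S^\sigma+X^\sigma)}=e^{\ad S^\sigma}e^{\ad X^\sigma}$ somewhat loosely). Where you diverge is in how the conclusion is extracted: the paper applies this once, to the specific gauge transformation bringing $q_{\text{univ}}$ to its canonical form $\sum_{i\in\mathcal J}q_i\otimes w^i$, and then reads off $H_n\in\Cdiff[w^i\,|\,i\in\mathcal J]=\WW(\g,f,k)$ directly from Proposition \ref{Prop:3.14_0730}; you instead run the argument at an arbitrary pair of gauge-equivalent points to establish that $H_n$ is a gauge-invariant function, and conclude via Theorem \ref{Thm:3.11_0730}. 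The two endpoints are equivalent, and your version has the mild advantage of not needing the explicit generators; its mild cost is the implicit identification $H_n(q)=(h(q)^\sigma\,|\,z^n\Lambda\otimes 1)$, i.e.\ that the recursive construction of $h$ in Proposition \ref{prop:fund_Hamil} commutes with the substitution $q^i\mapsto Q^i$ --- true for the same reason as Lemma \ref{Lem:label_gen}(1), but worth a sentence.
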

  
  \begin{proof}
  It is enough to show that $h(q_{\text{univ}})^{\sigma}\in \bigoplus_{t > -1}\text{ker}(\ad \Lambda)_t \otimes \WW(\g,f,k).$ 
  
  Recall that 
  % By the proof of Proposition \ref{prop:fund_Hamil}, $h(q_{\text{univ}}^{\text{can}})\in \bigoplus_{k > -1}\text{ker}(\ad \Lambda)_k \otimes \WW(\g,f,1).$ 
  %We shall show that $h(q_{\text{univ}})=h(q_{\text{univ}}^{\text{can}})\in \bigoplus_{k > -1}\text{ker}(\ad \Lambda)_k \otimes \WW(\g,f,1)$.   Recall that 
 there exists $X^{\sigma}\in(\n \otimes \mathcal{V}_{\II}(\p))_{\bar{0}}$ such that $e^{\ad X^{\sigma}} \mathcal{L}^{\sigma}(\Lambda)=k \partial+\sum_{i\in \mathcal{J}} (q_i\otimes w^i)^\sigma -\Lambda\otimes 1$ where $w^i$ generate $\WW(\g,f,k).$ It is obvious that
 \[ \textstyle h(\sum_{i\in \mathcal{J}} q_i\otimes w^i) \in \bigoplus_{t > -1}\text{ker}(\ad \Lambda)_t \otimes \WW(\g,f,k).\]
More precisely, there is $S \in  \bigoplus_{t > 0}\g((z^{-1}))_t \otimes \mathcal{V}_{\II}(\p)$ such that 
\[ \textstyle e^{\ad (S^{\sigma}+ X^{\sigma})} \mathcal{L}^{\sigma}(\Lambda)=e^{\ad S^{\sigma}}(e^{\ad X^{\sigma}}\mathcal{L}^{\sigma}(\Lambda) )=k \partial+  h\big(\sum_{i\in \mathcal{J}} q_i\otimes w^ i\big)^\sigma -\Lambda\otimes 1.\]
Since $S+X\in \bigoplus_{k > 0}\g((z^{-1}))_k \otimes \mathcal{V}_{\II}(\p)$, by Remark \ref{Rem:5.5_1011}, we conclude that 
 \[\textstyle h(q_{\text{univ}})= h(\sum_{i\in \mathcal{J}} q_i\otimes w^i)\in  \bigoplus_{t > -1}\text{ker}(\ad \Lambda)_t  \otimes \WW(\g,f,k).\]
 Thus  $H_n\in \mathcal{V}_{\II}(\p)=(h(q_{\text{univ}}) |  z^{n} \Lambda\otimes 1)\in \WW(\g,f,k)$.
  \end{proof}
 
% Recall that $\g=\m \oplus \p.$ 

 %Denote by  $\left. F \otimes a \right|_{\p} \in \mathcal{V}(\bar{\p}) \otimes \p$ for $a\otimes F \in \g \otimes \mathcal{V}(\bar{\p})$ 
 
 \begin{lem} \label{Lem:Hamiltonian}

Let $S$ be the same as in (\ref{Eqn:5.6}) and  $r\in (\m^\perp \otimes \mathcal{V}_\II(\p))_{\bar{0}}$.  Then we have
\begin{equation}\label{Eqn:5.11_0721}
\textstyle   \int   \left( r^{\sigma} \left| \frac{\delta H_n}{\delta q}\right.\right)=\int \left( r^{\sigma} \left|  e^{-\ad S^{\sigma}}( z^{n} \Lambda \otimes 1)\right.\right), \quad n\in \ZZ_{\geq -1},
  \end{equation}
where $q=(q^i)_{i\in I}$ for a basis $q^i$ of $\p$. %Here, the map $\int$ on $\mathcal{V}(\p)$ is extended to the map on $\mathcal{V}(\p) ((z^{-1}))$ by letting $\int  (V_n z^n) := (\int V_n)z^n \in (\int \mathcal{V}(\p)) ((z^{-1}))$
 \end{lem}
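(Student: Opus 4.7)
The plan is to combine Lemma \ref{Lem:4.6_0411} with a first-order perturbation analysis of the defining relation (\ref{Eqn:5.6}). Applying Lemma \ref{Lem:4.6_0411} to $H_n\in\WW(\g,f,k)$ and noting that the $q$-independent summand $-\Lambda\otimes 1$ appearing in the definition (\ref{Eqn:Hamiltonian-1}) of $H_{-1}$ drops out upon differentiation, one gets, uniformly for $n\geq -1$,
\[
\textstyle \int\!\left(r^\sigma\right.\left|\,\dfrac{\delta H_n}{\delta q}\right) \;=\; \left.\dfrac{d}{d\epsilon}\right|_{\epsilon=0}\!\int H_n(q+\epsilon r) \;=\; \int\bigl(\dot h(q)^\sigma\,\big|\,z^n\Lambda\otimes 1\bigr),
\]
where $\dot h(q)^\sigma := \tfrac{d}{d\epsilon}|_{\epsilon=0}\,h(q+\epsilon r)^\sigma$.

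I would then differentiate (\ref{Eqn:5.6}) at the perturbed argument $q+\epsilon r$. Setting $g(\epsilon):=e^{S(q+\epsilon r)^\sigma}$ and applying the standard Lie-group identity $\tfrac{d}{d\epsilon}|_{\epsilon=0}\,e^{\ad S(q+\epsilon r)^\sigma} = \ad(Y^\sigma)\circ e^{\ad S^\sigma}$ with $Y^\sigma := g'(0)g(0)^{-1}\in\bigoplus_{t>0}\g((z^{-1}))_t\otimes\mathcal{V}_\II(\p)$ (equivalently, by expanding the exponential as a power series and collecting the order-$\epsilon$ coefficient), one derives
\[
\dot h(q)^\sigma \;=\; [Y^\sigma,\,L_0^\sigma(\Lambda)] + e^{\ad S^\sigma}(r^\sigma),
\]
where $L_0^\sigma(\Lambda)=k\partial+h(q)^\sigma-\Lambda\otimes 1$. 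Taking the pairing with $z^n\Lambda\otimes 1$ and invoking invariance of the bilinear form then yields
\[
\bigl(\dot h(q)^\sigma\,\big|\,z^n\Lambda\otimes 1\bigr) \;=\; \bigl(Y^\sigma\,\big|\,[L_0^\sigma(\Lambda),\,z^n\Lambda\otimes 1]\bigr) + \bigl(r^\sigma\,\big|\,e^{-\ad S^\sigma}(z^n\Lambda\otimes 1)\bigr).
\]

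The final step is to verify $[L_0^\sigma(\Lambda),\,z^n\Lambda\otimes 1]=0$, which eliminates the first term on the right. This is immediate term-by-term: $[k\partial,\,z^n\Lambda\otimes 1]=z^n\Lambda\otimes\partial(1)=0$; $[h(q)^\sigma,\,z^n\Lambda\otimes 1]=0$ because every component of $h(q)$ commutes with $\Lambda$ by construction; and $[\Lambda\otimes 1,\,z^n\Lambda\otimes 1]=z^n[\Lambda,\Lambda]\otimes 1=0$ by super skew-symmetry of the even element $\Lambda$. Applying $\int$ to the surviving identity then delivers (\ref{Eqn:5.11_0721}). The main obstacle is rigorously justifying the Lie-group-style derivative formula in this infinite-dimensional super setting; this is handled by the strict positivity of the grading of $S$, which forces every application of $\ad S^\sigma$ to raise degree, so that in the pairing with the fixed-degree $r^\sigma\in\m^\perp\otimes\mathcal{V}_\II(\p)$ only finitely many terms of each exponential or derivative series contribute, reducing all manipulations to finite sums.
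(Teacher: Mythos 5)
Your proof follows essentially the same route as the paper's: Lemma \ref{Lem:4.6_0411} to rewrite $\int(r^\sigma|\tfrac{\delta H_n}{\delta q})$ as a directional derivative, differentiation of (\ref{Eqn:5.6}) in $\epsilon$ via the derivative-of-the-exponential formula (the identities (\ref{comp}) in the paper produce exactly your $Y^\sigma=\sum_{l\geq 0}\tfrac{1}{(l+1)!}\ad^l S_0^\sigma(S_1^\sigma)$), and then killing the commutator term against $z^{n}\Lambda\otimes 1$ using $h(q)\in\ker(\ad\Lambda)\otimes\mathcal{V}_{\II}(\p)$. One step is stated too strongly: the bilinear form is not invariant against the $k\partial$ summand of $L_0^{\sigma}(\Lambda)$, so $\left(\left[Y^{\sigma},L_0^{\sigma}(\Lambda)\right]\middle|\,z^{n}\Lambda\otimes 1\right)$ equals $\left(Y^{\sigma}\middle|\left[L_0^{\sigma}(\Lambda),z^{n}\Lambda\otimes 1\right]\right)$ only up to the total derivative $-k\partial\big(Y^{\sigma}\big|z^{n}\Lambda\otimes 1\big)$ --- precisely the term the paper keeps explicitly in (\ref{Eqn:5.8}) and discards under $\int$ --- so your displayed formula for $(\dot h(q)^{\sigma}|z^{n}\Lambda\otimes 1)$ holds only modulo $\partial\mathcal{V}_{\II}(\p)$, which is still sufficient since you apply $\int$ at the end.
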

 
 \begin{proof}
 If $n= -1$, we have 
 \[ e^{-\ad S^{\sigma}}( z^{n} \Lambda \otimes 1)\in \g [z] z \otimes \mathcal{V}_\II(\p)\]
 so that $\left( r^{\sigma} \left|  e^{-\ad S^{\sigma}}( z^{n} \Lambda \otimes 1)\right.\right)=0.$ Since $\frac{\delta H_{-1}}{\delta q}=0$, the both sides of (\ref{Eqn:5.11_0721}) are 0.
 
To show the proposition when $n\geq 0$, recall that 
 \[  \textstyle \int  \left.\frac{H_n(q+\epsilon r)-H_n(q)}{\epsilon}\right|_{\epsilon=0}= \int \left( r^{\sigma} \left| \frac{\delta H_n}{\delta q}\right.\right) \]
 for $r\in (\m^\perp \otimes \mathcal{V}(\p))_{\bar{0}}$ and $q= \sum_{i\in I} q_i \otimes Q^i$. Let us define $L_0(\epsilon)$ and $S(\epsilon)$ by 
 \[ L_0^{\sigma}(\epsilon) =e^{\ad S(\epsilon)^{\sigma}} (\partial+(q^{\sigma}+\epsilon r^{\sigma})-\Lambda \otimes 1)=  \partial +h(q+\epsilon r)^{\sigma} -\Lambda \otimes 1.\]
 Then 
 \[ \textstyle \left.\left( \left. \frac{d}{d\epsilon} L_0^{\sigma}(\epsilon)\right|  z^{n}\Lambda\otimes 1 \right) \right|_{\epsilon=0} = \left.\frac{H_n(q+\epsilon r)-H_n(q)}{\epsilon}\right|_{\epsilon=0}.\]
 By the process finding $S(\epsilon)$ in Proposition \ref{prop:fund_Hamil}, we have 
 \[ S(\epsilon)= S_0+ \epsilon S_1+\epsilon^2 S_2 +\cdots \in    \text{im}(\ad \Lambda)((z^{-1})) \otimes \mathcal{V}_\II(\p) [\epsilon],\]
 for $S_0, S_1, \cdots \in\g((z^{-1})) \otimes \mathcal{V}_\II(\p).$
Also, we have 
\begin{equation} \label{Eqn:keyequation}
  \textstyle \left. \frac{d}{d\epsilon} L_0^{\sigma}(\epsilon) \right|_{\epsilon=0} = \left. e^{\ad S(\epsilon)^{\sigma}} r^{\sigma} \right|_{\epsilon=0}+ \left. \frac{d}{d\epsilon} \left(  e^{\ad S(\epsilon)^\sigma} L^{\sigma}(0) \right) \right|_{\epsilon=0}
\end{equation}
 where $L^{\sigma}(0)= \partial + q^{\sigma}-\Lambda\otimes 1$ and $e^{\ad S_0^{\sigma}} L^{\sigma}(0)=L_0^{\sigma}(0)=\partial +h^{\sigma}(q)-\Lambda\otimes 1.$ In order to investigate the last term  in (\ref{Eqn:keyequation}), we need the following facts:
 \begin{equation} \label{comp}
 \begin{aligned}
& (i)\ \textstyle  \left. \frac{d}{d\epsilon} \left(  e^{\ad S(\epsilon)^\sigma} L^{\sigma}(0) \right) \right|_{\epsilon=0}
 = \sum_{n\in \ZZ_{\geq 1}}\frac{1}{n!} \sum_{m=0}^{n-1} \ad^m S^\sigma_0(\ad S^\sigma_1(\ad ^{n-1-m}S^\sigma_0 (L^\sigma(0)))), \\
& (ii) \  \textstyle \frac{1}{n!} \sum_{m=0}^{n-1} \ad^m S^\sigma_0(\ad S^\sigma_1(\ad ^{n-1-m}S^\sigma_0 (L^\sigma(0)))) \\
& \textstyle \qquad \qquad = \sum_{m=0}^{n-1} \left[ \, \frac{1}{(n-m)!}\ad^{n-m-1}S^\sigma_0 (S^\sigma_1)\, , \, \frac{1}{m!} \ad^m S^\sigma_0(L^\sigma(0)) \, \right], \\
& (iii) \ \textstyle \sum_{n\in \ZZ_{\geq 1}}\sum_{m=0}^{n-1} \left[ \, \frac{1}{(n-m)!}\ad^{n-m-1}S^\sigma_0 (S^\sigma_1)\, , \, \frac{1}{m!} \ad^m S^\sigma_0(L^\sigma(0)) \, \right]\\
& \textstyle \qquad \qquad  = \sum_{l\in \ZZ_{\geq 0}}\left[ \frac{1}{(l+1)!} \ad^l S_0^\sigma (S_1^\sigma), L_0^\sigma(0)\right].
\end{aligned}
 \end{equation} 
 By (\ref{Eqn:keyequation}) and (\ref{comp}), we have 
 \[\textstyle  \textstyle  \left. \frac{d}{d\epsilon} \left(  e^{\ad S(\epsilon)^\sigma} L^{\sigma}(0) \right) \right|_{\epsilon=0}= \sum_{l\in \ZZ_{\geq 0}}\left[ \frac{1}{(l+1)!} \ad^l S_0^\sigma (S_1^\sigma), L_0^\sigma(0)\right]. \] 

Observe that 
\begin{equation}
 \textstyle  \int \left( r^{\sigma} \left| \frac{\delta H_n}{\delta q}\right.\right) = \int \left.\left( \left. \frac{d}{d\epsilon} L_0^{\sigma}(\epsilon)\right|  z^{n}\Lambda\otimes 1 \right) \right|_{\epsilon=0}
\end{equation}
and 
 \begin{equation} \label{Eqn:5.8}
 \begin{aligned}
&\left.\left( \left. \frac{d}{d\epsilon} L_0^{\sigma}(\epsilon)\right|  z^{n}\Lambda\otimes 1 \right) \right|_{\epsilon=0}\\
& \textstyle  =  \left.\left(\left. e^{\ad S(\epsilon)^{\sigma}} r^{\sigma}\right| z^{n} \Lambda\otimes 1 \right)\right|_{\epsilon=0} + \sum_{l\in \ZZ_{\geq 0}}\frac{1}{(l+1)!} \left. \left(\left. \left[ \ad^l S_0^\sigma (S_1^\sigma), L_0^\sigma(0)\right]\right| z^{n} \Lambda\otimes 1 \right)\right|_{\epsilon=0}\\
& \textstyle = \left( r^{\sigma} \left|  e^{-\ad S(0)^{\sigma}}( z^{n} \Lambda\otimes 1)\right.\right)+\sum_{l\in \ZZ_{\geq 0}}\frac{1}{(l+1)!}\left. \left(\left. \ad^l S_0^\sigma (S_1^\sigma)\right| [h^\sigma(q)-\Lambda\otimes 1 ,z^{n}\Lambda\otimes 1]\right)\right|_{\epsilon=0}\\
& \textstyle \qquad \qquad \qquad +\sum_{l\in \ZZ_{\geq 0}}\frac{1}{(l+1)!}\left. \left(\left. -\partial (\ad^l S_0^\sigma (S_1^\sigma) ) \right| z^{n}\Lambda\otimes 1 \right)\right|_{\epsilon=0}\\
& = \textstyle  \left( r^{\sigma} \left|  e^{-\ad S(0)^{\sigma}}( z^{n} \Lambda\otimes 1)\right.\right)+
\sum_{l\in \ZZ_{\geq 0}}\frac{1}{(l+1)!}\left. \left(\left. -\partial (\ad^l S_0^\sigma (S_1^\sigma) ) \right| z^{n}\Lambda\otimes 1 \right)\right|_{\epsilon=0}.
 \end{aligned}
 \end{equation}
The last term in (\ref{Eqn:5.8}) is in $\partial \mathcal{V}_{\II}(\p)$.  
  % we get 
 %\[ \int   \left( r^{\text{rev}} \left| \frac{\delta_R H_n}{\delta_R q}\right.\right) = \int \left( r^{\text{rev}} \left|  e^{-\ad S^{\text{rev}}}(1\otimes z^{-n}\Lambda)\right.\right).\]
Hence, for $S(0)^\sigma= S^\sigma$,  we have
\[ \textstyle \int   \left( r^{\sigma} \left| \frac{\delta H_n}{\delta q}\right.\right)= \int \left( r^{\sigma} \left|  e^{-\ad S^{\sigma}}( z^{n} \Lambda\otimes 1)\right.\right).\]
%which implies $\frac{\delta_R H_n}{\delta_R q}= e^{-\ad S^{\text{rev}}}(1\otimes z^{-n} \Lambda).$
 \end{proof}
  
 \begin{prop} \label{Prop:5.7_0609}
  Recall that $\g=\m \oplus \p$ and suppose $\{q^i|i\in I\}$ is a basis of $\p$ and $\{q^i|i\in I'\}$ is a basis of $\m$.  For $ Q= \sum_{i\in I}  q^i \otimes Q_i+ \sum_{i\in I'} q^i\otimes Q_i \in \g\otimes  \mathcal{V}(\p) $, we denote $Q |_\p= \sum_{i\in I} q^i\otimes Q_i\in\p \otimes  \mathcal{V}(\p)$. Then we have
  \[   \frac{\delta H_n}{\delta q}=\left. e^{-\ad S^{\sigma}}( z^{n} \Lambda\otimes 1)\right|_\p.\]
 \end{prop}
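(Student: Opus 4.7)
The plan is to deduce Proposition \ref{Prop:5.7_0609} from Lemma \ref{Lem:Hamiltonian} by a non-degeneracy argument based on Lemma \ref{Lem:4.12_0516}. The starting observation is that $\frac{\delta H_n}{\delta q} = \sum_{i \in I} q^i \otimes \frac{\delta H_n}{\delta q^i}$ lies in $\p \otimes \mathcal{V}_\II(\p)$ by Definition \ref{Def:deriv}, so the task is to show that, in the identity of Lemma \ref{Lem:Hamiltonian}, the pairing with $r^\sigma$ really only detects the $\p$-projection of $e^{-\ad S^\sigma}(z^n\Lambda\otimes 1)$.

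The first step is to reduce the right-hand side of Lemma \ref{Lem:Hamiltonian} to a pairing with a $\p \otimes \mathcal{V}_\II(\p)$-valued quantity. Any $r \in (\m^\perp \otimes \mathcal{V}_\II(\p))_{\bar{0}}$ is concentrated in $z$-degree zero, and the bilinear form on $\g((z^{-1})) \otimes \mathcal{V}_\II(\p)$ pairs $z^m$ with $z^{-m}$, so only the $z^0$-component of $e^{-\ad S^\sigma}(z^n\Lambda\otimes 1)$ contributes; furthermore, since $(\m^\perp \mid \m) = 0$, only the $\p$-component of that $z^0$-piece survives the pairing. Consequently, Lemma \ref{Lem:Hamiltonian} can be rewritten as $\int (r^\sigma \mid B) = 0$ for all even $r \in \m^\perp \otimes \mathcal{V}_\II(\p)$, where $B := \frac{\delta H_n}{\delta q} - e^{-\ad S^\sigma}(z^n\Lambda\otimes 1)|_\p \in \p \otimes \mathcal{V}_\II(\p)$.

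Next, I expand $B = \sum_{i \in I} q^i \otimes B_i$ using the basis $\{q^i\}_{i \in I}$ of $\p$, dual to the basis $\{q_i\}_{i \in I}$ of $\m^\perp$ under the form. Testing against $r = q_j \otimes F$ for a fixed $j$ and a homogeneous $F$ with $\pa(F) = \pa(q_j)$ (so that $r$ is even), a short sign computation using $(a \otimes F \mid b \otimes G) = (-1)^{\pa(b)\pa(F)}(a \mid b) FG$ and $(q_j \mid q^i) = \delta_{ij}$ yields $(r^\sigma \mid B) = F B_j$. Therefore $\int F B_j = 0$ for every homogeneous $F$ of parity $\pa(B_j)$, and applying Lemma \ref{Lem:4.12_0516} with $G = B_j$, $G' = 0$ forces $B_j = 0$ for each $j \in I$, i.e.\ $B = 0$.

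The main technical hurdle will be the second paragraph: one must juggle the sign twist $\sigma$, the $z$-orthogonality of the bilinear form on $\g((z^{-1}))$, and the orthogonality $(\m^\perp \mid \m) = 0$ simultaneously to trim $e^{-\ad S^\sigma}(z^n\Lambda\otimes 1)$ down to its $\p$-projection under the pairing. Once this reduction is done cleanly, the remainder is a routine non-degeneracy consequence of Lemma \ref{Lem:4.12_0516}.
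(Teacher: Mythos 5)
Your proof is correct and follows the same route as the paper, which simply asserts that the proposition "directly follows from Lemma \ref{Lem:Hamiltonian}"; you have filled in the implicit details (the $z$-degree and $(\m^\perp|\m)=0$ orthogonality reducing the pairing to the $\p$-projection, and the non-degeneracy step via Lemma \ref{Lem:4.12_0516}, which the paper itself uses for the analogous deduction in Proposition \ref{Prop:4.12_0804}). The sign computation $(r^\sigma\mid B)=FB_j$ checks out, so nothing is missing.
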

  
  \begin{proof}
  It directly follows from Lemma \ref{Lem:Hamiltonian}.
  \end{proof}
  
  \begin{prop} \label{Prop:5.8_0609}
  Let $\phi\in \WW(\g,f,k)$ and $a_\m\otimes F \in (\m\otimes \mathcal{V}_\II(\p)_{\bar{0}}.$ Then we have
  \[  \textstyle  \int \left( \left. \frac{\delta \phi}{\delta q} \right|  [(a_\m\otimes F)^\sigma, \mathcal{L}^{\sigma}_{[1]}(\Lambda)] \right)=  \int \left( \left. \frac{\delta \phi}{\delta q} \right| z^{-1} [ (a_\m \otimes F)^\sigma, zs\otimes 1] \right)= 0.\] 
  \end{prop}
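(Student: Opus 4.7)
The plan for the second equality is immediate. Rewriting $z^{-1}[(a_\m\otimes F)^\sigma, zs\otimes 1] = [(a_\m\otimes F)^\sigma, s\otimes 1]$, which up to a sign equals $[a_\m, s]\otimes F$, it suffices to observe that $[a_\m, s]=0$: indeed $a_\m\in \m = \bigoplus_{i\geq 1}\g(i)$ and $s\in \g(d)$ for $d$ the largest integer with $\g(d)\neq 0$, so $[a_\m, s]\in \bigoplus_{i\geq 1}\g(i+d) = 0$.

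For the first equality, I would exploit the gauge invariance of $\phi$. Since $\m\subset\n$, the element $X := a_\m\otimes F$ lies in $(\n\otimes \mathcal{V}_\II(\p))_{\bar{0}}$, hence is a legitimate infinitesimal gauge parameter. A short check shows that $r := [X,\mathcal{L}]$ belongs to $(\m^\perp\otimes \mathcal{V}_\II(\p))_{\bar{0}}$, using $\n\subset \m^\perp$ for the $-k\partial X$ term, $[\n,\m^\perp]\subset \m^\perp$ for the $[X, q_{\text{univ}}]$ term, and $[\n, f]\subset \m^\perp$ for the $[X, f\otimes 1]$ term. Consequently, $L_\epsilon := e^{\epsilon\ad X}\mathcal{L} = \mathcal{L} + \epsilon r + O(\epsilon^2)$ is a family of Lax operators gauge equivalent to $\mathcal{L}$, and Theorem \ref{Thm:3.11_0730} gives $\phi(L_\epsilon) = \phi(\mathcal{L})$ for every $\epsilon\in \CC$. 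Differentiating at $\epsilon = 0$ and applying Lemma \ref{Lem:4.6_0411} with $q' = q_{\text{univ}}$ yields
\[\int \left(r^\sigma\middle|\frac{\delta\phi}{\delta q}\right) = \left.\frac{d}{d\epsilon}\int\phi(L_\epsilon)\right|_{\epsilon=0} = 0.\]

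The last step is to verify the identity $[X,\mathcal{L}]^\sigma = [X^\sigma, \mathcal{L}^\sigma]$; since $\sigma$ fixes $k\partial$ and satisfies $[Y^\sigma, Z^\sigma] = [Y, Z]^\sigma$ for any pair of even homogeneous elements $Y = a\otimes F$, $Z = b\otimes G$ in $\g\otimes \mathcal{V}_\II(\p)$ (a short parity calculation shows both expressions equal $(-1)^{\pa(a)+\pa(b)+\pa(a)\pa(b)}[a,b]\otimes FG$ once one uses $\pa(a) = \pa(F)$ and $\pa(b) = \pa(G)$), the identity follows. Combined with the supersymmetry of the bilinear form, which is symmetric on pairs of even elements, we conclude
\[\int\left(\frac{\delta\phi}{\delta q}\middle|[X^\sigma, \mathcal{L}^\sigma_{[1]}(\Lambda)]\right) = \int\left([X,\mathcal{L}]^\sigma\middle|\frac{\delta\phi}{\delta q}\right) = 0.\]
(If $\phi$ is odd then $\frac{\delta\phi}{\delta q}$ is odd while $[X^\sigma,\mathcal{L}^\sigma_{[1]}(\Lambda)]$ is even, so the pairing is trivially zero; one may therefore reduce to even $\phi$.) The main obstacle is the sign tracking for the interaction of $\sigma$ with the Lie bracket, but this is resolved purely algebraically by the parity computation above, together with our standing even-parity hypothesis on both factors.
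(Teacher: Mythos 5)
Your proof is correct and follows essentially the same route as the paper: an infinitesimal gauge transformation by $a_\m\otimes F$ (legitimate since $\m\subset\n$), gauge invariance of $\phi$, Lemma \ref{Lem:4.6_0411} to convert the $\epsilon$-derivative into the variational-derivative pairing, and $[a_\m,s]=0$ for the second equality. The only cosmetic difference is that you perform the gauge transformation on the untwisted operator $\mathcal{L}$ and then transport the result through the identity $[X,\mathcal{L}]^\sigma=[X^\sigma,\mathcal{L}^\sigma]$, whereas the paper applies $e^{\ad\,\epsilon S^\sigma}$ directly to $\mathcal{L}^{\sigma}(\Lambda)$; your explicit verification that $[X,\mathcal{L}]\in(\m^\perp\otimes\mathcal{V}_\II(\p))_{\bar 0}$ and of the automorphism property of $\sigma$ fills in details the paper leaves implicit.

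One remark in your final parenthesis is wrong, though harmless. For odd $\phi$ the pairing of the odd element $\frac{\delta\phi}{\delta q}$ with the even element $[X^\sigma,\mathcal{L}^\sigma_{[1]}(\Lambda)]$ is an \emph{odd element of} $\mathcal{V}_\II(\p)$, not automatically zero, so the claimed "trivial vanishing" and the reduction to even $\phi$ do not stand as stated. No such reduction is needed, however: the extended form satisfies $(Y|Z)=(-1)^{\pa(Y)\pa(Z)}(Z|Y)$, hence is symmetric whenever one of the two arguments is even, and since $[X^\sigma,\mathcal{L}^\sigma_{[1]}(\Lambda)]$ is even your swap of the two slots is valid for odd $\phi$ as well. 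With that parenthesis deleted, the argument covers all homogeneous $\phi\in\WW(\g,f,k)$.
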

  
  \begin{proof}
%Observe that 
%\begin{equation}
%\begin{aligned}
%\textstyle   \left( \left. \frac{\delta \phi}{\delta q} \right| [ a_\m\otimes F, \mathcal{L}^{\sigma}_{[1]}(\Lambda)] \right)= \left( \left. \frac{\delta \phi}{\delta q} \right| [ a_\m\otimes 1 , \mathcal{L}^{\sigma}_{[1]}(\Lambda)] \right) F = -\sum_{i\in I} [a_\m, q^i] \frac{\delta \phi}{\delta q^i}F \in \mathcal{V}_\II(\p).
%\end{aligned}
%\end{equation}
%  Since $\phi\in \WW(\g,f,k)$, we have 
%  \[ \textstyle  \{a_{\m \, \lambda } \phi\}_1 |_{\lambda=0}=\sum_{n\in \ZZ_{\geq0} } (\partial^n [ n ,  q^i])  \frac{\partial \phi}{\partial q^{i (n)}} =0 \]

  Let us denote $S= a_\m\otimes F\in (\m\otimes \mathcal{V}_\II(\p))_{\bar{0}}$ and denote 
  \[ \mathcal{L}^{\sigma}(\epsilon) := e^{\ad  \epsilon S^{\sigma}} \mathcal{L}^{\sigma}(\Lambda).\]
  Then $\mathcal{L}^{\sigma}(\epsilon)= \partial + q^{\sigma}(\epsilon) - \Lambda\otimes 1$ for $q^{\sigma}(\epsilon)=q_{\text{univ}}^{\sigma} +\epsilon [S^{\sigma}, \mathcal{L}^{\sigma}(\Lambda)] + o (\epsilon^2).$ Since $\phi$ is gauge invariant, we have
  \[  \textstyle 0= \left.  \int \frac{d \phi(q(\epsilon))}{d\epsilon}\right|_{\epsilon=0}= \int \left( \left. \frac{\delta \phi}{\delta q} \right| [ (a_\m\otimes F)^\sigma, \mathcal{L}^\sigma(\Lambda)] \right) \]
  for any $F$ which has the same parity as $a_\m.$ Hence $\int \left( \left. \frac{\delta \phi}{\delta q} \right|  [(a_\m\otimes F)^\sigma, \mathcal{L}^{\sigma}_{[1]}(\Lambda)] \right)=0.$
  
  Also, the second equality follows from $[a_\m, s]= 0.$
  \end{proof}
  
  \begin{prop} \label{Prop:5.9_0613}
  Let $S$ be the same as in (\ref{Eqn:5.6}) and let $\phi\in \WW(\g,f,k)$. Then we have
  \begin{equation}
  \begin{aligned}
  & \textstyle \int \left( \left. \frac{\delta \phi}{\delta q} \right| \left[ \frac{\delta H_n}{\delta q}, \mathcal{L}^{\sigma}_{[1]}(\Lambda)\right] \right)= \int  \left( \left. \frac{\delta \phi}{\delta q} \right| \left[ e^{-\ad S^{\sigma}}( z^{n} \Lambda\otimes 1), \mathcal{L}^{\sigma}_{[1]}(\Lambda)\right] \right),\\
   & \textstyle  \int \left( \left. \frac{\delta \phi}{\delta q} \right| \left[ \frac{\delta H_n}{\delta q},  s\otimes 1 \right] \right)= \int  \left( \left. \frac{\delta \phi}{\delta q} \right| \left[ e^{-\ad S^{\sigma}}( z^{n} \Lambda\otimes 1),  s\otimes 1\right] \right).
  \end{aligned}
  \end{equation}
  \end{prop}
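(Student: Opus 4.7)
The plan is to recognize that the difference between the two sides of each equality comes from components of $e^{-\ad S^\sigma}(z^n\Lambda\otimes 1)$ that are annihilated either by the bilinear pairing or by Proposition \ref{Prop:5.8_0609}. Concretely, I would first decompose
\[
e^{-\ad S^\sigma}(z^n\Lambda\otimes 1) \;=\; \frac{\delta H_n}{\delta q} \;+\; A_0 \;+\; A_{\neq 0},
\]
where $A_0$ is the $\m$-part of the $z^0$-component and $A_{\neq 0}$ collects all components in $z^j\cdot \g\otimes \mathcal{V}_\II(\p)$ for $j\neq 0$. That the $\p$-part at $z^0$ is exactly $\frac{\delta H_n}{\delta q}$ is precisely the content of Proposition \ref{Prop:5.7_0609}, so the splitting is legitimate. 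Both identities then reduce to showing that $A_0$ and $A_{\neq 0}$ contribute zero to the pairings against $\frac{\delta \phi}{\delta q}$ after bracketing with $\mathcal{L}^\sigma_{[1]}(\Lambda)$, respectively $s\otimes 1$.

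For the $A_{\neq 0}$ contribution the argument is routine. Since $\mathcal{L}^\sigma_{[1]}(\Lambda)=k\partial+q^\sigma_{\text{univ}}-f\otimes 1$ and $s\otimes 1$ are both $z$-independent, the commutators $[A_{\neq 0},\mathcal{L}^\sigma_{[1]}(\Lambda)]$ and $[A_{\neq 0}, s\otimes 1]$ remain in $\bigoplus_{j\neq 0} z^j \g\otimes \mathcal{V}_\II(\p)$. The factor $\delta_{m+n,0}$ in the bilinear form on $\g((z^{-1}))\otimes \mathcal{V}_\II(\p)$ then kills the pairing with the $z$-independent element $\frac{\delta \phi}{\delta q}\in \p\otimes \mathcal{V}_\II(\p)$, and this contribution vanishes without any use of gauge invariance.

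For the $A_0$ contribution I would invoke Proposition \ref{Prop:5.8_0609}. Since $A_0\in(\m\otimes\mathcal{V}_\II(\p))_{\bar 0}$ and $\sigma$ is an involution on $\g\otimes\mathcal{V}_\II(\p)$, one can write $A_0 = Y^\sigma$ with $Y=\sigma(A_0)\in(\m\otimes\mathcal{V}_\II(\p))_{\bar 0}$; expanding $Y$ as a finite sum of pure tensors $a_\m\otimes F$ and applying the first equality of Proposition \ref{Prop:5.8_0609} term by term yields $\int\bigl(\frac{\delta\phi}{\delta q}\,\big|\,[A_0,\mathcal{L}^\sigma_{[1]}(\Lambda)]\bigr)=0$, while the second equality of the same proposition together with the identity $z^{-1}[\,\cdot\,,zs\otimes 1]=[\,\cdot\,,s\otimes 1]$ yields $\int\bigl(\frac{\delta\phi}{\delta q}\,\big|\,[A_0, s\otimes 1]\bigr)=0$. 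Assembling the two vanishings produces both claimed equalities.

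The only obstacle I foresee is bookkeeping: making sure the sign-twist $\sigma$ is applied consistently when matching $A_0$ to the form $(a_\m\otimes F)^\sigma$ that Proposition \ref{Prop:5.8_0609} demands, and verifying that the restriction $|_{\p}$ in Proposition \ref{Prop:5.7_0609} is indeed the $z^0$-$\p$ component used in the decomposition. Because $\sigma^2=\mathrm{id}$ and all the relevant objects are even, this is a matter of convention rather than substance, so I expect no deeper difficulty beyond the careful identification of pieces.
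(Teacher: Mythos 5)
Your proposal is correct and is essentially the paper's own argument: the paper's proof is the one-line statement that the result follows from Propositions \ref{Prop:5.7_0609} and \ref{Prop:5.8_0609}, and your decomposition of $e^{-\ad S^{\sigma}}(z^{n}\Lambda\otimes 1)$ into $\frac{\delta H_n}{\delta q}$ plus the $\m$-part of the $z^0$-component plus the $z^{\neq 0}$-components, with the former killed by Proposition \ref{Prop:5.8_0609} and the latter by the $\delta_{m+n,0}$ in the bilinear form, is exactly the intended content of that citation. The bookkeeping points you flag (that $\sigma$ is an involution preserving $(\m\otimes\mathcal{V}_\II(\p))_{\bar 0}$, and that $|_{\p}$ means the $z^0$-$\p$-component) resolve as you expect.
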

  
  \begin{proof}
  It follows from Proposition \ref{Prop:5.7_0609} and Proposition \ref{Prop:5.8_0609}.
  \end{proof}
  
 \begin{thm} \label{Thm:5.10_0730}
 Let us consider $\WW(\g,f,k)$ and $H_i$ be defined as in (\ref{Eqn:Hamiltonian}). The equation 
 \begin{equation}\label{Eqn:Walg_HE}
  \frac{du}{dt}= \{ H_{i\, \lambda\, } u\}_1|_{\lambda=0}, \qquad u\in \WW(\g,f,k) \text{ and } i\in \ZZ
  \end{equation}
 has linearly independent  integrals of motion $\int H_j$ for $j\in \ZZ_{\geq 0}$. Hence (\ref{Eqn:Walg_HE}) is an integrable system.
 \end{thm}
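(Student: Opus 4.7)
The plan is to apply the generalized Lenard-Magri scheme (Proposition \ref{Prop:LM scheme}) to the bi-Poisson structure $([\, , \, ]_1, [\, , \, ]_2)$ on $\WW(\g,f,k)/\partial \WW(\g,f,k)$ with the sequence $\{\int H_n\}_{n \geq -1}$. Two things must be verified: the Lenard-Magri recursion $[\int H_m, \int u]_1 = [\int H_{m+1}, \int u]_2$ for every $u\in \WW(\g,f,k)$ and $m\geq -1$, and the linear independence of the $\int H_n$'s.

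For the recursion, I would first use skew-symmetry together with Proposition \ref{Prop:5.9_0613} to rewrite
\begin{equation*}
[\int H_m, \int u]_1 \;=\; \pm \int \Big( \tfrac{\delta u}{\delta q} \,\Big|\, \bigl[\, e^{-\ad S^\sigma}(z^m \Lambda\otimes 1),\, \mathcal{L}^\sigma_{[1]}(\Lambda)\bigr]\Big).
\end{equation*}
The key simplification exploits $\mathcal{L}^\sigma_{[1]}(\Lambda)=\mathcal{L}^\sigma = \mathcal{L}^\sigma(\Lambda)+zs\otimes 1$ and the $\ad$-equivariance of $e^{-\ad S^\sigma}$: since $e^{\ad S^\sigma}\mathcal{L}^\sigma(\Lambda) = k\partial + h(q_{\text{univ}})^\sigma - \Lambda\otimes 1 =: L_0^\sigma$, one has
\begin{equation*}
\bigl[e^{-\ad S^\sigma}(z^m\Lambda\otimes 1),\, \mathcal{L}^\sigma(\Lambda)\bigr] = e^{-\ad S^\sigma}\bigl[z^m\Lambda\otimes 1,\, L_0^\sigma\bigr] = 0,
\end{equation*}
because $[z^m\Lambda, h] = z^m[\Lambda, h] = 0$ for $h \in \ker(\ad\Lambda)$, $[z^m\Lambda,\Lambda]=0$, and $[k\partial, z^m\Lambda\otimes 1]=0$. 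Thus only the $zs\otimes 1$-contribution survives, and $[\int H_m, \int u]_1$ collapses to $\pm \int(\tfrac{\delta u}{\delta q}|[e^{-\ad S^\sigma}(z^m\Lambda\otimes 1),zs\otimes 1])$. An identical manipulation applied to $[\int H_{m+1}, \int u]_2$ (with $\mathcal{L}^\sigma_{[2]}(\Lambda)=-s\otimes 1$) yields $\pm\int(\tfrac{\delta u}{\delta q}|[e^{-\ad S^\sigma}(z^{m+1}\Lambda\otimes 1),s\otimes 1])$. Since multiplication by $z$ is $\CC[z,z^{-1}]$-linear and commutes with $\ad S^\sigma$, we have $z\cdot e^{-\ad S^\sigma}(z^m\Lambda\otimes 1) = e^{-\ad S^\sigma}(z^{m+1}\Lambda\otimes 1)$, and the two expressions match.

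For linear independence, I would invoke the gradation $\deg(z)=-d-1$, $\deg(g)=j/2$ on $\g((z^{-1}))$: the pairing $(h(q_{\text{univ}})^\sigma\,|\, z^n\Lambda\otimes 1)=H_n$ extracts the component of $h(q_{\text{univ}})$ of degree $1+n(d+1)$, which corresponds to distinct eigenvalues of the Hamiltonian operator $L_0$ from Proposition \ref{Prop:2.11_0731}. Hence the $H_n$'s have pairwise distinct conformal weights and the $\int H_n$'s are linearly independent in $\int \WW(\g,f,k)$. The main obstacle will be twofold: first, meticulously tracking the signs arising from the sign-twist $\sigma$, the supersymmetry of the bilinear form, and the skew-symmetry on $\int \WW$, to confirm the recursion holds with compatible sign conventions (otherwise one replaces $H_n$ by $(-1)^n H_n$); and second, verifying that the relevant graded components $h_k$ of $h(q_{\text{univ}})$ do not accidentally vanish, which follows by induction from the recursive construction in Proposition \ref{prop:fund_Hamil} together with the fact that the lowest component $h_{-1/2}$ is nonzero because $q_{\text{univ}}^\sigma$ has nonzero $\g(-1/2)$-part.
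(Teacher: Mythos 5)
Your verification of the Lenard--Magri recursion is correct and is essentially the paper's own argument: the author likewise applies skew-symmetry and Proposition \ref{Prop:5.9_0613} to replace $\frac{\delta H_i}{\delta q}$ by $e^{-\ad S^{\sigma}}(z^{i}\Lambda\otimes 1)$, after which everything reduces to the identity $[e^{-\ad S^{\sigma}}(z^{i}\Lambda\otimes 1),\mathcal{L}^{\sigma}(\Lambda)]=e^{-\ad S^{\sigma}}[z^{i}\Lambda\otimes 1, e^{\ad S^{\sigma}}\mathcal{L}^{\sigma}(\Lambda)]=0$. The paper packages both brackets into the generating function $\frac{\delta H(z)}{\delta q}=\sum_i\frac{\delta H_i}{\delta q}z^{-i}$, whereas you split $\mathcal{L}^{\sigma}_{[1]}(\Lambda)=\mathcal{L}^{\sigma}(\Lambda)+zs\otimes 1$ and match the surviving $zs$-terms; this is the same computation in a different order, and your hedged signs are a bookkeeping matter, not a gap.

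The genuine gap is in the linear independence of the $\int H_j$. Distinct conformal weights (granting your homogeneity claim, which is itself only asserted) give linear independence only of the \emph{nonzero} classes, so you still must show $\int H_j\neq 0$ in $\WW(\g,f,k)/\partial\WW(\g,f,k)$ for every $j$ --- both that $H_j\neq 0$ and that $H_j\notin\partial\WW(\g,f,k)$. Your proposed induction does not deliver this: $h(q_{\text{univ}})$ lives in $\bigoplus_{t>-1}\ker(\ad\Lambda)_t\otimes\mathcal{V}_{\II}(\p)$, and $h_{-1/2}$ is the projection of $q_{-1/2}$ onto $\ker(\ad\Lambda)_{-1/2}$, which is typically zero (for $\text{spo}(2|1)$ with $\Lambda=f+ze$ the kernel is spanned by the $z^i\Lambda$, all sitting in odd integer degrees, so $h=\sum_{t\geq 0}h_{2t+1}$ and there is no degree $-1/2$ component at all); and even nonvanishing of the graded component $h_{1+j(d+1)}$ would not force the particular pairing $H_j=(h(q_{\text{univ}})|z^{j}\Lambda\otimes 1)$ to be nonzero, let alone nonzero modulo $\partial\WW(\g,f,k)$. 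The paper's route avoids all of this: it observes that whenever $\{H_\lambda u\}_1|_{\lambda=0}\neq 0$ its total polynomial degree strictly exceeds that of $\{H_\lambda u\}_2|_{\lambda=0}$, so the recursion $[\int H_j,\cdot]_1=[\int H_{j+1},\cdot]_2$ forces the nonzero operators $[\int H_j,\cdot]_1$ to have strictly increasing degree in $j$ once the anchor $\{H_{0\,\lambda}\,\WW(\g,f,k)\}_1|_{\lambda=0}\neq 0$ is checked (which holds because $H_0=\phi_e$, the coefficient of $f$ in $q^{\text{can}}_{\text{univ}}$). This simultaneously yields nonvanishing and linear independence of the $\int H_j$, and it is the step your proposal is missing.
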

  
  \begin{proof}
  In order to use Lenard scheme, we aim to show that 
  \begin{equation} \label{Eqn:Lenard}
   \textstyle[ \int H_i, \int u]_1= [ \int H_{i+1}, \int u]_2, \quad i\in \ZZ_{\geq -1}.
   \end{equation}
 Recall that the brackets $[ \int H_i, \int u]_1$ and $[ \int H_{i-1}, \int u]_2$ are defined by 
  \begin{equation}\label{Eqn:5.13}
  \begin{aligned}
  &  \textstyle [\int H_i, \int u]_1 = \int \left( \frac{\delta H_i}{\delta q}  \left| \left[\frac{\delta u}{\delta q}, \mathcal{L}^{\sigma}_{[1]}(\Lambda)\right] \right. \right) 
  = -  \int \left( \frac{\delta u}{\delta q}  \left| \left[ \frac{\delta H_i}{\delta q}, \mathcal{L}^{\sigma}_{[1]}(\Lambda)\right] \right. \right), \\
&  \textstyle [\int H_{i+1}, \int u]_2 = \int \left( \frac{\delta H_{i+1}}{\delta q}  \left| \left[\frac{\delta u}{\delta q},  s\otimes 1 \right] \right. \right) 
  =  -\int \left( \frac{\delta u}{\delta q}  \left| \left[ \frac{\delta H_{i+1}}{\delta q}, s\otimes 1 \right] \right. \right).
\end{aligned}
\end{equation}
Denote $\frac{\delta H(z)}{\delta q}:=\sum_{i\in \ZZ} \frac{\delta H_i}{\delta q}z^{-i}$. Then 
\begin{equation}\label{Eqn:5.14}
\textstyle  \left[ \frac{\delta H(z)}{\delta q}, \mathcal{L}^{\sigma}(\Lambda)\right]= \sum_{i\in \ZZ} \left( \left[ \frac{\delta H_i}{\delta q} , \mathcal{L}^{\sigma}_{[1]}(\Lambda)\right] - \left[ \frac{\delta H_{i+1}}{\delta q} ,  s\otimes 1 \right] \right)z^{-i}. 
\end{equation}
By (\ref{Eqn:5.13}) and (\ref{Eqn:5.14}), (\ref{Eqn:Lenard}) is equivalent to 
\begin{equation}
\begin{aligned}
&  \textstyle   \int  \left( \left. \frac{\delta u }{\delta q} \right| z^{i}\left[ \frac{\delta H(z)}{\delta q}, \mathcal{L}^{\sigma}(\Lambda)\right] \right) \\
&  =\textstyle \int \left( \left. \frac{\delta u}{\delta q} \right| \left[ e^{-\ad S^{\sigma}}( z^{i} \Lambda\otimes 1), \mathcal{L}_{[1]}^{\sigma}(\Lambda)\right] \right)+ \int \left( \left. \frac{\delta u}{\delta q} \right| \left[ z^{-1}e^{-\ad S^{\sigma}}( z^{i+1} \Lambda\otimes 1), z s\otimes 1\right] \right)
 \\ 
 & \textstyle =\int \left( \left. \frac{\delta u}{\delta q} \right| \left[ e^{-\ad S^{\sigma}}( z^{i} \Lambda\otimes 1), \mathcal{L}^{\sigma}(\Lambda)\right] \right)=0
\end{aligned}
\end{equation}
for any $\phi\in \WW(\g,f,k)$ and  $i\in \ZZ.$ Here, we used  Proposition \ref{Prop:5.9_0613} for the first equality.
Hence we proved (\ref{Eqn:5.13}) by the following fact: 
\[ \left[ e^{-\ad S^{\sigma}}( z^{i} \Lambda\otimes 1), \mathcal{L}^{\sigma}(\Lambda)\right]= e^{-\ad S^{\sigma}}\left[  z^{i} \Lambda\otimes 1, e^{\ad S^{\sigma}}\mathcal{L}^{\sigma}(\Lambda)\right]=0.\]
In particular, since $H_{-1}$ is constant, we have $[\int H_{-1}, \int u]_1=[\int H_0, \int u]_2=0$.

Now, the only thing to prove is that $\{\int H_j\}_{j\in \ZZ_{\geq 0}}$ is linearly independent. Since, for given $H, u\in \WW(g,f,k)$ such that $\{ H_\lambda u\}_1 |_{\lambda=0}\neq 0$, the total degree of $\{ H_\lambda u\}_1 |_{\lambda=0}$ is greater than the total degree of $\{ H_\lambda u\}_2 |_{\lambda=0}$ in the algebra of polynomials  $\CC[ (q^i)^{(n)} | i\in I,\, n\in \ZZ_{\geq 0}],$ where $\{q^i|i\in I\}$ is a basis of $\p$. Hence, if we can show  $\{H_{0\, \lambda} \WW(g,f,k)\}|_{\lambda=0}\neq 0$ then the linearly independence of $\{\int H_j\}_{j\in \ZZ_{\geq 0}}$ follows. Indeed, this can be proved as below. Suppose
$q^{\text{can}}_{\text{univ}}$ in Proposition \ref{Prop:3.14_0730} has the summand $f \otimes \phi_e$. Then $H_0= \phi_e$ so that $\{H_{0\, \lambda} \WW(g,f,k)\}|_{\lambda=0}\neq 0.$ 

  \end{proof}

\begin{rem} \label{Rem:5.15_0721}
Recall that the formula (\ref{Eqn:5.6}) $e^{\ad S(q)^\sigma}L^\sigma(\Lambda)$ is same as $(e^{S(q)}L(\Lambda))^\sigma$. Also, since $\Lambda$ is even, we have 
\[(h(q_{\text{univ}})^\sigma| z^{-n} \Lambda\otimes 1)= (h(q_{\text{univ}})| z^{-n} \Lambda\otimes 1).\]
Hence we can use $L(\Lambda)$ instead of $L(\Lambda)^\sigma$ to compute $H_n$.
\end{rem}
  
  \begin{ex}
  As in Example \ref{Ex:spo(1|2)W}, the Lie superalgebra $\g=\text{spo}(2|1)$ is generated by $e$,  $e_{od}$, $h$, $f_{od}$ and $f$. For $\Lambda= f+ze$ and $K:= -f+ze$, we can see that $\g((z^{-1}))$ is the $\CC((z^{-1}))$-module generated by $e_{od}, f_{od}, h=2x, \Lambda, K$. The subspace $\text{im}(\ad \Lambda)$ is generated by $e_{od}, f_{od}, h, K$ and $ker(\ad \Lambda)$ is generated by $\Lambda$. Note that the Lie brackets between generators are 
  \[ \, [x, \Lambda]= K, \quad [f_{od}, \Lambda]=-z e_{od}, \quad [e_{od}, \Lambda]= -f_{od}, \quad [K, \Lambda]= 2zh=4zx, \quad [x,K]= \Lambda.\]
  Consider the operator,
  \[ L(\Lambda)=k \partial+e_{od}\otimes \phi_{\text{od}}+ e\otimes \phi_{\text{ev}}- \Lambda \otimes 1\]
  for the generators $\phi_{\text{od}}=-\frac{1}{2} f_{od} +\frac{k}{2} \partial e_{od} +\frac{1}{4} h e_{od}$ and $\phi_{\text{ev}}= f+\frac{1}{2} f_{od}e_{od}-\frac{1}{4} h^2 +\frac{k}{4} e_{od} \partial e_{od} -\frac{k}{2} \partial h$ of the algebra in Example \ref{Ex:spo(1|2)W}.
 We want to find $h\in \bigoplus_{t > -1}\text{ker}(\ad \Lambda)_t \otimes \mathcal{V}_{\II}(\p)$ such that 
\begin{equation}\label{Equation:5.16}
L_0(\Lambda):= e^{\ad S}L(\Lambda)= \partial +h-\Lambda\otimes 1.
 \end{equation}
 for some $S\in  \bigoplus_{t>0} \text{im}(\ad \Lambda)_{t}\otimes \mathcal{V}_{\II}(\p)$. 
 %Since $\mathcal{L}^{\text{can}}=\partial+q_{\text{univ}}^{\text{can}}-f\otimes 1$ is obtained from $\mathcal{L}=\partial+q_{\text{univ}}^{\text{can}}-f\otimes 1$ via gauge transformation, if $U\in  \bigoplus_{k>0} \text{im}(\ad \Lambda)_{k}\otimes \mathcal{V}(\bar{\p})$ and $h'\in  \bigoplus_{k > -1}\text{ker}(\ad \Lambda)_k \otimes \mathcal{V}(\bar{\p})$ satisfies  
 %\begin{equation} \label{Equation:5.16}
%L_0^{\text{rev}}(\Lambda):= \partial+q_0-1\otimes \Lambda:=e^{\ad U^{\text{rev}}}L^{\text{rev}}(\Lambda)= \partial +h'^{\text{rev}}-1\otimes \Lambda
 %\end{equation}
 %then $h=h'$. 
 Let us denote $U= \sum_{t>0} U_t$ for $U_t\in \mathcal{V}_{\II}(\p)\otimes \g((z^{-1}))_t$ and $h= \sum_{t>-1} h_t$ for $h_t=  \mathcal{V}_{\II}(\p)\otimes \left(  \g((z^{-1}))_t \cap ker(\ad \Lambda) \right)$. Since $z^i \Lambda$ has degree $-1-2i$, we have $h= \sum_{t\in \ZZ_{>-1}}h_{2t+1}.$

 By comparing degree $\frac{1}{2}$ part of (\ref{Equation:5.16}), we have  $S_{1/2}=S_{1}=0$ and 
\[ e_{od}\otimes \phi_{od}- [S_{3/2}, \Lambda\otimes 1]=0.\]
Hence $S_{3/2}=-z^{-1} f_{od} \otimes \phi_{od}$.   By comparing degree $1$ part of (\ref{Equation:5.16}), we have 
\[ e\otimes \phi_{ev} - [S_2, \Lambda\otimes 1]= h_1\] 
Hence $S_2= \frac{1}{2} z^{-1} x \otimes \phi_{ev}$ and $h_1= \frac{1}{2} z^{-1}\Lambda \otimes \phi_{ev}$ so that 
\[ H_0= (h_1| \Lambda\otimes 1)= \phi_{ev}.\]

By degree $\frac{3}{2}$ and $2$ parts of (\ref{Equation:5.16}), we have 
\begin{equation*}
\begin{aligned}
& -[S_{5/2}, \Lambda\otimes 1] + [S_{3/2}, k\partial] =0 \quad \text{ and } \quad S_{5/2}= -z^{-1}e_{od} \otimes k\partial \phi_{od};\\
& -[S_3,  \Lambda\otimes 1]+[S_2, k\partial]+ [S_{3/2},  e_{od}\otimes \phi_{od}]- [S_{3/2}, [S_{3/2}, \Lambda\otimes 1]]=0
\end{aligned}
\end{equation*}
and $S_3= -\frac{1}{8}z^{-2} K \otimes k\partial \phi_{ev}.$ By degree $3$ part of  (\ref{Equation:5.16}), we have
\begin{equation} \label{Eqn:5.17}
\begin{aligned}
& -[S_4, \Lambda\otimes 1] + [S_3, k\partial] + [ S_{5/2}, e_{od}\otimes \phi_{od}]+[S_2, e\otimes \phi_{ev}] \\
& -\frac{1}{2} [S_2, [S_2,  \Lambda\otimes 1] ]-\frac{1}{2} [S_{3/2}, [S_{5/2},  \Lambda\otimes 1]] -\frac{1}{2} [S_{5/2}, [S_{3/2},  \Lambda\otimes 1]]\\
& +\frac{1}{2} [S_{3/2}, [S_{3/2}, k\partial]]= h_3
\end{aligned}
\end{equation}
Since the LHS of (\ref{Eqn:5.17}) is 
\[  -[S_4, \Lambda\otimes 1] +  z^{-2}\Lambda\otimes  \left( \frac{1}{8}\phi_{ev}^2+\frac{k}{2} \partial \phi_{od}\phi_{od} \right) +z^{-2}K \otimes \left( \frac{k^2}{8}\partial^2 \phi_{ev}+\frac{1}{4} \phi_{ev}^2+\frac{k}{2} \partial \phi_{od}\phi_{od}\right) \]
and $K$ is in the image of $\ad \Lambda$, we can conclude  $h_3= z^{-2} \Lambda \otimes \left( \frac{1}{8}\phi_{ev}^2+\frac{k}{2} \partial \phi_{od}\phi_{od} \right)$ so that
\[  H_1= (h_3|z \Lambda \otimes 1)= \frac{1}{4}\phi_{ev}^2+ k \partial \phi_{od}\phi_{od}. \]
Note that $h_3$ is the first integral of motion which gives rise to a nonlinear Hamiltonian equation.
Now we get
\begin{equation*}
\left\{
\begin{array}{l}
\begin{aligned}
 \frac{d\phi_{ev}}{dt}  &  = \big\{  H_{1 \, \lambda} \phi_{ev}\big\}_{H}|_{\lambda=0} = -\frac{k^3}{4}\partial^3 \phi_{ev} -\frac{3k}{2}\partial \phi_{ev}\phi_{ev}+3k^2\phi_{od}\partial^2\phi_{od}, \\
\frac{d\phi_{od}}{dt}  & = \big\{  H_{1 \, \lambda} \phi_{od}\big\}_{H}|_{\lambda=0} = k^3 \partial^3 \phi_{od} -\frac{3k}{2} \partial \phi_{od}\phi_{ev} - \frac{3k}{4}\phi_{od}\partial \phi_{ev}, 
\end{aligned}
\end{array}\right.
\end{equation*}
which is same as super-KdV equation in \cite{KZ} up to constant factors.
\end{ex}

\begin{ex}
Let $\g= \sll(2|1)$ and $f=e_{21}$. With the notations in Example \ref{Ex:W(sl(2|1))}, we consider the operator
\[  L(\Lambda)= k\partial+ \tau\otimes \phi_\tau + e_{1\bar{1}}\otimes \phi_{1\bar{1}}+ e_{\bar{1}2}\otimes \phi_{\bar{1}2}+e\otimes \phi_e -\Lambda \otimes 1.\]
Recall that $\phi_e= -k L-2\phi_t^2$ for the energy momentum field $L$. It is not hard to check that  
\[  \{\phi_{\tau\, \lambda} \phi_{\tau}\}= -\frac{1}{2} k\lambda, \ \{ \phi_{\tau\, \lambda}\phi_{1\bar{1}}\}= -\frac{1}{2} \phi_{1\bar{1}},\  \{\phi_{\tau\, \lambda} \phi_{\bar{1}2}\}= \frac{1}{2} \phi_{\bar{1}2}, \ \{\phi_{1\bar{1}\, \lambda}\phi_{1\bar{1}}\}= \{ \phi_{\bar{1}2\, \lambda} \phi_{\bar{1}2}\}=0.\]
Also, 
\[ \{\phi_{1\bar{1}\, \lambda} \phi_{\bar{1}2}\}= k L-k\partial \phi_\tau - 2k\lambda\phi_\tau-k^2\lambda^2, \ \ \{\phi_{\bar{1}2\, \lambda} \phi_{1\bar{1}}\}= kL+k\partial \phi_\tau+2k\lambda \phi_\tau-k^2\lambda ^2,\]
for the energy momentum field $L$ in Proposition \ref{Prop:2.11_0731}. Recall that $\phi_e= -kL-2\phi_\tau^2$ and  $\lambda$-brackets between $\phi_e$ and elements in $\WW(\g,f,k)$ can be computed using 
\[  \{L_\lambda \phi_\tau\}= (\partial+\lambda)\phi_\tau, \quad \{L_\lambda \phi_{1\bar{1}}\}= (\partial+\frac{3}{2}\lambda)\phi_{1\bar{1}}\quad  \{L_\lambda \phi_{\bar{1}2}\}= (\partial+\frac{3}{2}\lambda)\phi_{\bar{1}2}, \]
and 
\[   \{L_\lambda L\}= (\partial+2\lambda)L-\frac{1}{2} k \lambda^3.\]
Now, let us consider $S\in \bigoplus_{t>0} \text{im} (\ad \Lambda)_t \otimes \mathcal{V}_{\II} (\p)$ and $h\in \bigoplus_{t>-1} \ker(\ad \Lambda)_t \otimes \mathcal{V}_{\II}(\p)$ such that 
\begin{equation} \label{5.23_0722}
  L_0(\Lambda)= k\lambda + h -\Lambda\otimes 1 = e^{\ad S} L(\Lambda).
  \end{equation}
By equating degree $\leq 1/2$-parts of the both sides of (\ref{5.23_0722}), we get 
\[  S_{1/2}= S_1=0, \text{ and } S_{3/2}= -z^{-1}e_{2\bar{1}}\otimes \phi_{1\bar{1}}+z^{-1}e_{\bar{1}1}\otimes \phi_{\bar{1}2}.\]
By equating degree $1$-parts of the both sides of (\ref{5.23_0722}):
\[ \,   [S_2, \Lambda\otimes 1]+h_1 = e\otimes \phi_e = \frac{1}{2} z^{-1} \Lambda \otimes \phi_e+\frac{1}{2}z^{-1} K \otimes \phi_e,\]
for $K=ze-f$, we get
\[   S_2= \frac{1}{2} z^{-1} x \otimes \phi_e, \quad h_1=  \frac{1}{2} z^{-1} \Lambda\otimes \phi_e, \quad H_0= (h_1|\Lambda\otimes 1)= \phi_e .\]
Hence 
\begin{equation}
\left\{ \begin{array}{l}
\begin{aligned}
\frac{d\phi_{\tau}}{dt} &  =\{ \phi_{e\, \lambda} \phi_\tau\}|_{\lambda=0} =\left. k(\lambda+\partial) \phi_\tau \right|_{\lambda=0}=k\partial\phi_\tau,\\
\frac{d\phi_{1\bar{1}}}{dt} & =\{ \phi_{e\, \lambda} \phi_{1\bar{1}}\}|_{\lambda=0}= \left.-k(\partial+\frac{3}{2}\lambda)\phi_{1\bar{1}}+2\phi_{1\bar{1}}\phi_\tau\right|_{\lambda=0}=-k\partial\phi_{1\bar{1}}+2\phi_{1\bar{1}}\phi_\tau,\\
\frac{d\phi_{\bar{1}2}}{dt} & =\{ \phi_{e\, \lambda} \phi_{\bar{1}2}\}|_{\lambda=0} = \left. -k(\partial+\frac{3}{2}\lambda)\phi_{\bar{1}2}-2\phi_{\bar{1}2}\phi_\tau \right|_{\lambda=0}=-k\partial\phi_{\bar{1}2}-2\phi_{\bar{1}2}\phi_\tau,\\
\frac{d\phi_e}{dt} & = \{ \phi_{e\, \lambda} \phi_e\}|_{\lambda=0} = \left. k^2(\partial+2\lambda)L-\frac{k^3}{2} \lambda^3 \right|_{\lambda=0}= k^2\partial L=-k\partial (\phi_e+2\phi_\tau^2),
\end{aligned}
\end{array}\right.
\end{equation}
is the simplest integrable system associated to $\sll(2|1).$

The degree $2$-part of (\ref{5.23_0722}) is 
\begin{equation}
\begin{aligned}
\ [S_{5/2}, \Lambda \otimes 1] & = [ S_{3/2}, k\partial +\tau\otimes \phi_\tau]\\
& = z^{-1}e_{2\bar{1}} \otimes (k\partial \phi_{1\bar{1}}-\phi_\tau \phi_{1\bar{1}})+ z^{-1} e_{\bar{1}1} \otimes (-k\partial \phi_{\bar{1}2}-\phi_\tau\phi_{\bar{1}2}).
\end{aligned}
\end{equation}
Hence $S_{5/2}= z^{-1}e_{1\bar{1}} \otimes (-k\partial \phi_{1\bar{1}}+\phi_\tau \phi_{1\bar{1}})+ z^{-1} e_{\bar{1}2} \otimes (-k\partial \phi_{\bar{1}2}-\phi_\tau\phi_{\bar{1}2}).$

The degree $2$-part of (\ref{5.23_0722}) is 
\begin{equation}
\begin{aligned}
\  &[S_3, \Lambda\otimes 1]+h_2 \\
& = [ S_2, k\partial +\tau\otimes \phi_\tau]+ [ S_{3/2}, e_{1\bar{1}}\otimes \phi_{1\bar{1}} + e_{\bar{1}2} \otimes \phi_{\bar{1}2}]- \frac{1}{2} [S_{3/2}, [S_{3/2}, \Lambda \otimes 1]].
\end{aligned}
\end{equation}
Hence $S_3= z^{-2} K \otimes -\frac{k}{8} \partial \phi_e$ and $h_2= z^{-1} \tau \otimes \frac{1}{2} \phi_{1\bar{1}}\phi_{\bar{1}2}$.

The degree $3$-part of (\ref{5.23_0722}) is 
\begin{equation}
\begin{aligned}
& [S_4, \Lambda \otimes 1] +h_3\\
&= [S_3, k\partial +\tau \otimes \phi_\tau]+[S_{5/2}, e_{1\bar{1}}\otimes \phi_{1\bar{1}}+e_{\bar{1}2}\otimes \phi_{\bar{1}2}] + \frac{1}{2} [S_{3/2}, [S_{3/2}, k\partial +\tau\otimes \phi_\tau]]\\
&-\frac{1}{2} [S_2, [S_2,\Lambda\otimes 1]]-\frac{1}{2} [S_{3/2}, [S_{5/2},\Lambda\otimes 1]]-\frac{1}{2} [S_{5/2}, [S_{3/2},\Lambda\otimes 1]].
\end{aligned}
\end{equation}
Hence $S_4= z^2 h \otimes \frac{1}{2}\left(\frac{k^2}{8} \partial^2 \phi_e +\frac{1}{2} \phi_\tau \phi_{\bar{1}2}\phi_{1\bar{1}} +\frac{k}{2} \partial \phi_{\bar{1}2}\phi_{1\bar{1}}-\frac{k}{2} \phi_{\bar{1}2}\partial \phi_{1\bar{1}}\right)$ and 
\[ h_3= z^{-1} \Lambda\otimes \left( -\frac{1}{8} \phi_e^2 +\frac{1}{2} \phi_\tau \phi_{\bar{1}2}\phi_{1\bar{1}} +\frac{k}{2} \partial \phi_{\bar{1}2}\phi_{1\bar{1}}-\frac{k}{2} \phi_{\bar{1}2}\partial \phi_{1\bar{1}}\right)\]
so that 
\[ H_1=(h_3|z \Lambda \otimes 1)= -\frac{1}{4} \phi_e^2 + \phi_\tau \phi_{\bar{1}2}\phi_{1\bar{1}} +k \partial \phi_{\bar{1}2}\phi_{1\bar{1}}- k \phi_{\bar{1}2}\partial \phi_{1\bar{1}}.\]

Hence the second integrable system associated to $\sll_2$, getting by the Hamiltonian $4H_1$, is 
\begin{equation}
\left\{ \begin{array}{l}
\begin{aligned}
 \frac{d\phi_\tau}{dt}  = & 6k (\partial \phi_{1\bar{1}}\phi_{\bar{1}2}+\phi_{1\bar{1}}\partial \phi_{\bar{1}2}),\\
 \frac{d\phi_{1\bar{1}}}{dt} = &  \phi_{1\bar{1}}( 8\phi_\tau^3-8\phi_e\phi_\tau-12k \phi_\tau\partial \phi_\tau -3k\partial \phi_3+4k^2 \partial^2 \phi_\tau),\\
 &+ \partial \phi_{1\bar{1}}(6k\phi_e-24k\phi_\tau^2 +16 k^2 \partial \phi_\tau)+20 k^2 \partial^2 \phi_{1\bar{1}}\phi_\tau+8k^3\partial^3 \phi_{1\bar{1}}\\
 \frac{d\phi_{\bar{1}2}}{dt}= & \phi_{\bar{1}2} ( -8\phi_\tau^3+8\phi_e\phi_\tau-12 \phi_\tau\partial \phi_\tau +3k\partial \phi_e -4k^2 \partial^2 \phi_\tau),\\
 & +\partial \phi_{\bar{1}2} (10k\phi_e-24k\phi_\tau^2-16 k^2 \partial \phi_\tau)-12 k^2\partial^2 \phi_{\bar{1}2} \phi_\tau -8k^3\partial^3 \phi_{\bar{1}2},\\
 \frac{d\phi_e}{dt}= & -12k \partial H_1 +k^3 \partial \phi_e +24 k \phi_\tau\phi_{\bar{1}2}  \partial \phi_{1\bar{1}}-24 k \phi_\tau  \phi_{1\bar{1}}\partial\phi_{\bar{1}2}.
\end{aligned}
\end{array} \right.
\end{equation}
\end{ex}
  
  %Equivalently, we need to show that 
  %\[ \textstyle [ \int H(z), \int u]_1+ [ \int H(z), \int u]_2=0,\]
  %where we let $[\int a z^n, \int b]:= [\int a,\int b] z^n.$ Since 
  %\[ \textstyle  [ \int H(z), \int u]_t\]

\end{document}